\newtheorem{OldTheorem}{Theorem}
\newtheorem{theorem}{Theorem}
\newtheorem{lemma}[theorem]{Lemma}
\newtheorem{prop}[theorem]{Proposition}
\newcommand{\bvec}[1]{\mathbf{#1}}
\DeclareMathAlphabet{\mymathbb}{U}{BOONDOX-ds}{m}{n}
\newcommand{\va}{\bvec{a}}
\newcommand{\vb}{\bvec{b}}
\newcommand{\vc}{\bvec{c}}
\newcommand{\ve}{\bvec{e}}
\renewcommand{\Re}{\mathrm{Re}}
\renewcommand{\Im}{\mathrm{Im}}
\newcommand{\I}{\mathrm{i}}
\newcommand{\mc}[1]{\mathcal{#1}}
\newcommand{\wt}[1]{\widetilde{#1}}
\newcommand{\abs}[1]{\left\lvert#1\right\rvert}
\newcommand{\norm}[1]{\left\lVert#1\right\rVert}
\newcommand{\ud}{\,\mathrm{d}}
\newcommand{\Or}{\mathcal{O}}
\newcommand{\NN}{\mathbb{N}}
\newcommand{\RR}{\mathbb{R}}
\newcommand{\CC}{\mathbb{C}}
\def\Z{\ensuremath{\mathbb Z}}
\def\C{\ensuremath{\mathbb C}}
\def\N{\ensuremath{\mathbb N}}
\def\R{\ensuremath{\mathbb R}}
\def\T{\ensuremath{\mathbb T}}
\def\D{\ensuremath{\mathbb D}}
\renewcommand{\tilde}{\widetilde}
\renewcommand{\epsilon}{\varepsilon}
\renewcommand{\phi}{\varphi}
\renewcommand{\bar}{\overline}
\newcommand{\Id}{\operatorname{Id}}
\DeclareMathOperator*{\esssup}{\mathrm{ess\,sup}}
\renewcommand{\equiv}{:=}
\renewcommand{\Im}{\operatorname{Im}}
\renewcommand{\Re}{\operatorname{Re}}
\newcommand{\Hil}{\mathcal{H}}
\newcommand{\dense}{\mathcal{D}}
\newcommand{\ddense}{\mathcal{E}}
\newcommand{\dddense}{\mathcal{F}}
\newcommand{\REV}[1]{#1}
\title{Infinite quantum signal processing for arbitrary Szeg\H o functions}
\author[M. Alexis]{Michel Alexis}
\author[L. Lin]{Lin Lin}
\author[G. Mnatsakanyan]{Gevorg Mnatsakanyan}
\author[C. Thiele]{Christoph Thiele}
\author[J. Wang]{Jiasu Wang}
\thanks{
M. Alexis, G. Mnatsakanyan and C. Thiele acknowledge support by the Deutsche Forschungsgemeinschaft (DFG, German Research Foundation) under Germany's Excellence Strategy -- EXC-2047/1 -- 390685813 as well as CRC 1060.}
\address{Mathematical Institute, 
        University of Bonn,
        Endenicher Allee 60, 53115 Bonn,
        Germany}
\email{alexis@math.uni-bonn.de}
         \email{gevorg@math.uni-bonn.de}
  \email{thiele@math.uni-bonn.de}
\address{Department of Mathematics, University of California, Berkeley, CA 94720, USA}
\email{linlin@math.berkeley.edu}
\email{jiasu@berkeley.edu}
\address{Institute of Mathematics of the National Academy of Sciences of Armenia, Marshal Baghramyan Ave. 24/5, 0019 Yerevan, Armenia}
\email{mnatsakanyan\_g@yahoo.com}
\begin{document}


\subjclass{68Q12,81P68,34L25,42C99}


\begin{abstract}
We provide a complete solution to the problem of infinite quantum signal processing for the class of Szeg\H o functions, which are functions that satisfy a logarithmic integrability condition and include almost any function that allows for a quantum signal processing representation. 
We do so by introducing a new algorithm called the Riemann-Hilbert-Weiss algorithm, which can compute any individual phase factor independent of all other phase factors. Our algorithm is also the first provably stable numerical algorithm for computing phase factors of any arbitrary Szeg\H o function. The proof of stability involves solving a Riemann-Hilbert factorization problem in nonlinear Fourier analysis using elements of spectral theory.
\end{abstract}

\maketitle

\tableofcontents

\section{Introduction}

\subsection{Problem setup}
Let $\mathbf{P}$ denote the space of infinite sequences
$\Psi=(\psi_k)_{k\in \N}$ with $\psi_k\in [-\pi/2,\pi/2]$. For $x\in[0,1]$, we define 
\begin{equation}
    W(x) = \begin{pmatrix}
        x & \I \sqrt{1-x^2}\\
        \I \sqrt{1-x^2} & x
    \end{pmatrix},
    \text{ and } Z = \begin{pmatrix}
        1& 0\\
        0 & -1
    \end{pmatrix}.
\end{equation}
Given any $\Psi \in \mathbf{P}$ and $x\in[0,1]$, one can define a sequence of unitary matrices using the following recursive relation:
\begin{equation}
\begin{split}
    &U_0(x,\Psi) = e^{\I \psi_0 Z}
    \\
    &U_d(x,\Psi) = e^{\I \psi_{d} Z}W(x) U_{d-1}(x,\Psi)W(x) e^{\I \psi_d Z}.
\end{split}
\label{eqn:sym_qsp}
\end{equation}
Let $u_d(x,\Psi)$ denote the upper left entry of $U_d(x,\Psi)$. If $\psi_k=0$ for all $k$, then $\Im [u_d(x,\Psi)]=0$, while  $\Re [u_d(x,\Psi)]=T_{2d}(x)$ is the Chebyshev polynomial of the first kind with degree $2d$. For a general $\Psi\in \mathbf{P}$, $u_d(x,\Psi)$ can be verified to be a complex polynomial of $x$ with degree up to $2d$. This procedure of encoding a polynomial as an entry of an ${\rm SU}(2)$ matrix $U_d(x,\Psi)$ is called \emph{quantum signal processing} (QSP)~\cite{LowChuang2017,GilyenSuLowEtAl2019}, and has found numerous applications in quantum computation (see, e.g., \cite{MartynRossiTanEtAl2021,DalzellMcArdleBertaEtAl2023}). 
In \cref{eqn:sym_qsp}, each matrix $U_d(x,\Psi)$ is complex symmetric. 
For a polynomial $f(x)$ with degree $2d$, if 
\begin{equation}
\norm{f}_{\infty}:=\esssup\limits_{x\in[0,1]} \abs{f(x)}\le 1,
\label{eqn:f_infty_bound}
\end{equation}
then there exists a sequence $\Psi\in\mathbf{P}$ such that $\Im[u_d(x,\Psi)] = f(x)$~\cite[Theorem 1]{WangDongLin2021} (the sequence is not unique in general). QSP only represents polynomials of definite parity. Without loss of generality, we restrict to even functions $f(x)$ throughout the paper, and a similar treatment can be extended to the odd case.

 The problem of \emph{infinite quantum signal processing} (iQSP) asks whether the QSP representation can be extended to non-polynomial functions $f$ through a product of countably many unitary matrices. The first positive answer to this question is given in \cite{DongLinNiEtAl22}.  Consider $f(x)=\sum_{k\in \NN} c_k T_{2k}(x)$ expressed as an infinite Chebyshev polynomial series. If the $\ell^1$ norm of the Chebyshev coefficient $\norm{\vc}_{1}:=\sum_{k} \abs{c_k}\le 0.903$, then $f(x)$ is continuous, and there exists a sequence $\Psi\in\ell^1(\NN)\subset \mathbf{P}$ such that 
\begin{equation}
\lim_{d\to \infty} \Im[u_d(x,\Psi)] - f(x)=0, \quad \forall x\in [0,1].
\label{eqn:pointwise_conv}
\end{equation}
Under the same assumptions, the fixed point iteration (FPI) algorithm in \cite{DongLinNiEtAl22} is the first provably numerically stable (and perhaps also the simplest) algorithm for computing $\Psi$. A numerically stable algorithm means that the number of bits required in the computation scales as $\mathrm{polylog}(d/\epsilon)$, where $d$ is the polynomial degree and $\epsilon$ is the target precision.

The main questions of this paper are as follows:

{
\em
Let $f$ be an arbitrary even function satisfying the norm constraint in \cref{eqn:f_infty_bound}.
\begin{enumerate}

\item Is there a $\Psi\in \ell^2(\NN)$ such that $\Im[u_d(x,\Psi)]$ converges to $f(x)$, and is this $\Psi$ unique in some sense?

\item Is there a provably numerically stable algorithm to compute $\Psi$, which uses $\mathrm{polylog}(d/\epsilon)$ bits of precision and has a cost of $\mathrm{poly}(d \log(1/\epsilon))$?
\end{enumerate}
}

Recently, Ref.~\cite{AlexisMnatsakanyanThiele2023} observed that after a change of variables, the structure of iQSP can be interpreted using the nonlinear Fourier transform (NLFT) described in \cite{tsai}. This insight opens the door to many new results. A real-valued measurable even functions $f:[0,1]\to [-1,1]$ is called a Szeg\H o function if it satisfies the following Szeg\H o-type condition 
\begin{equation}\label{eq:Szego}
\int_{0}^1 \log|1-f(x)^2| \frac{dx}{\sqrt{1-x^2}} > - \infty \, .
\end{equation}
We use  $\mathbf{S}$ to denote the set of all Szeg\H o functions, and define the norm
\begin{equation}\label{hsnorm}
\left \| f \right \|_{\mathbf{S}} \equiv \left ( \frac{2}{\pi} \int\limits_{0} ^1 \left | f(x) \right |^2 \frac{dx}{\sqrt{1-x^2}} \right )^{\frac{1}{2}} \, ,
\end{equation}
which is finite for all $f \in \mathbf{S}$.
Note that $\norm{\cdot}_{\mathbf{S}}$ induces an inner product, and $\mathbf{S}$ is a subset of a Hilbert space. In particular, for $f(x)=\sum_{k\in \NN} c_k T_{2k}(x)$, we have $\left \| f \right \|_{\mathbf{S}}^2=\abs{c_0}^2+\frac12 \sum_{k>0} \abs{c_k}^2$. So $\left \| f \right \|_{\mathbf{S}}<\infty$ is equivalent to the square summable condition $\norm{\vc}_{2}:=\sqrt{\sum_{k} \abs{c_k}^2}<\infty$.

Ref.~\cite[Theorem 1]{AlexisMnatsakanyanThiele2023} provides a partial answer to Question (1) above, namely, if $f\in \mathbf{S}$ satisfies $\norm{f}_{\infty}<\frac{1}{\sqrt{2}}$,  then there exists a \emph{unique} sequence $\Psi\in\ell^2(\NN)\subset \mathbf{P}$ such that the following equality holds
\begin{equation}
\label{plancherel}
\sum_{k\in \Z}\log(1+\tan^2\psi_{|k|})
=
-\frac{2}{\pi}\int_{0}^1 \log|1-f(x)^2| \frac{dx}{\sqrt{1-x^2}} 
\end{equation}
and
\begin{equation}
\lim_{d\to \infty} \norm{\Im[u_d(x,\Psi)]-f(x)}_{\mathbf{S}}=0.
\label{eqn:L2_conv}
\end{equation}
\cref{plancherel} 
 is a nonlinear version of the Plancherel identity, which connects the $L^2$-norm of the Fourier space representation of a function with its $L^2$ norm in the real space. The phase factors can be interpreted as a nonlinear version of the Fourier coefficients of $f(x)$. In fact, among the sequences $\Psi$ for which \cref{eqn:L2_conv} holds, 
 the sequence $\Psi$ constructed in \cite[Theorem 1]{AlexisMnatsakanyanThiele2023}
 is the unique sequence where the quantity
  \[
 \sum_{k\in \Z}\log(1+\tan^2\psi_{|k|}) 
 \]
takes the minimum 
  \[
 -\frac{2}{\pi}\int_{0}^1 \log|1-f(x)^2| \frac{dx}{\sqrt{1-x^2}}.
 \]See \cite[Lemma 3.1]{tsai}, as well as \cref{sec:maximal} below for further details.

Note that neither $\norm{\vc}_{1}\le 0.903$ nor $\norm{f}_{\infty}<\frac{1}{\sqrt{2}}$ is  stronger than the other (just consider $f(x)=\frac12 \cos(100 x)$, and $f(x)=0.8x$, respectively). Furthermore, the techniques developed in \cite{DongLinNiEtAl22,AlexisMnatsakanyanThiele2023} encounter significant  difficulties towards representing all functions $f\in\mathbf{S}$.

\subsection{Main results}
In this work, we provide positive answers to both questions above, which constitute a complete solution of the iQSP problem.
For each $\eta \in (0,1)$, we define  
\begin{equation}
\mathbf{S}_{\eta}=\set{f\in \mathbf{S}| \norm{f}_{\infty} \le 1-\eta}.
\end{equation}

\begin{theorem}\label{thm:main}
For each $f \in \mathbf{S}$, there exists a unique 
sequence $\Psi \in \mathbf{P}$ such that both the $L^2$ convergence criterion in \cref{eqn:L2_conv} and the nonlinear Plancherel identity in \cref{plancherel} hold.

Furthermore, given $0 < \eta < \frac{1}{2}$, for two functions $f, f' \in \mathbf{S}_{\eta}$ with corresponding sequence $\Psi, \Psi '$ as above, we have the Lipschitz bound
\begin{equation}\label{eq:Lip_bd_QSP}
\left \| \Psi - \Psi ' \right \|_{\infty} \leq 1.6 \eta^{- 3} \left \| f - f' \right \|_{\mathbf{S}} \, .
\end{equation}
\end{theorem}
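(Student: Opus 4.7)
The plan is to exploit the NLFT--QSP correspondence of \cite{AlexisMnatsakanyanThiele2023}, which identifies every valid phase sequence $\Psi$ with the scattering data of a nonlinear Fourier series and recasts the construction of $\Psi$ as a Riemann-Hilbert factorization problem on the unit circle. Given $f\in\mathbf{S}$, after the change of variables $x=\cos\theta$, the Szeg\H o condition~\cref{eq:Szego} is precisely what is needed to produce a (unique, up to unimodular constant) outer function $a$ on $\mathbb{D}$ with $|a|^2=1-f^2$ on $\mathbb{T}$ and the required even symmetry. Setting $b$ so that $(a,b)$ form NLFT scattering data with $|a|^2+|b|^2=1$ on $\mathbb{T}$ and with $b/a$ encoding $f$, the inverse NLFT -- which is the content of the Riemann-Hilbert-Weiss algorithm of this paper -- produces the desired sequence $\Psi$; the $L^2$ convergence~\cref{eqn:L2_conv} and the Plancherel identity~\cref{plancherel} then follow from the corresponding statements for the NLFT in the Szeg\H o class.

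\textbf{Uniqueness.} For uniqueness, I would extend the argument of \cite[Theorem 1]{AlexisMnatsakanyanThiele2023} from the hypothesis $\|f\|_\infty<1/\sqrt{2}$ to all of $\mathbf{S}$. The sequences $\Psi$ consistent with~\cref{eqn:L2_conv} are in bijection with factorizations of the NLFT data of $f$, and among these~\cref{plancherel} holds with equality only for the maximal (outer) factorization in the sense of \cite[Lemma 3.1]{tsai}. Since the Szeg\H o infimum already characterizes the outer factor without any smallness assumption on $\|f\|_\infty$, the existing argument should extend cleanly once the existence part is in hand.

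\textbf{Lipschitz bound.} For~\cref{eq:Lip_bd_QSP}, the crucial observation is that $\|f\|_\infty\le 1-\eta$ forces $|a|^2=1-f^2\ge 2\eta-\eta^2\ge \eta$ uniformly on $\mathbb{T}$, so $a$ stays bounded away from zero. This yields three Lipschitz estimates that I would compose in sequence: (i) the map $f\mapsto \log a(f)$ into a Hardy-type space, with constant $\mathcal{O}(\eta^{-1})$ coming from the bound on $|a|^{-2}$; (ii) the map $\log a\mapsto (a,b)$, contributing a further $\mathcal{O}(\eta^{-1})$ through the exponential and the normalization $|a|^2+|b|^2=1$; and (iii) the extraction of each individual phase factor $\psi_k$ from the scattering data via a bounded Weiss/Riemann-Hilbert functional, contributing a final $\mathcal{O}(\eta^{-1})$. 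Composing these estimates gives the $\eta^{-3}$ scaling, and tracking the explicit constants through Cauchy projection bounds on $\mathbb{T}$ should yield the numerical factor $1.6$.

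\textbf{Main obstacle.} The hardest part is the quantitative sensitivity analysis underlying~(iii): one must show that \emph{each} $\psi_k$ depends on the scattering data as a \emph{uniformly} bounded functional (giving $\ell^\infty$ control rather than a weaker $\ell^2$ bound), despite the logarithmic blow-up that $\log a$ develops as $|f|$ approaches $1$. This is exactly where the Riemann-Hilbert viewpoint together with spectral-theoretic input becomes essential, and it is the step most likely to dictate the final numerical constants appearing in~\cref{eq:Lip_bd_QSP}.
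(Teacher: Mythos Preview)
Your strategy matches the paper's: existence via the Weiss construction of the outer $a$ from the Szeg\H o condition followed by the Riemann-Hilbert factorization (\cref{thm:construct_a} and \cref{thm:RH_a_bded_below}), uniqueness via the maximal/outer characterization of the Plancherel equality case, and the Lipschitz bound by composing Lipschitz maps. You also correctly identify the hardest step as the uniform-in-$k$ control on $\psi_k$, which the paper obtains from the antisymmetry of $M$ and the resulting norm bound $\|(\Id+M)^{-1}\|\le 1$.

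The one place your outline diverges from the paper is the bookkeeping in the Lipschitz chain. The paper does not pass through $\log a$ and then $(a,b)$; it goes directly $f\mapsto b/a\mapsto F_k\mapsto \psi_k$. With $\delta:=\sqrt{3/2}\,\eta^{1/2}$, the first map has Lipschitz constant $\delta^{-1}+\delta^{-3}+\tfrac12\delta^{-4}\sim\eta^{-2}$ (the $\delta^{-4}$ term comes from combining $\|a-a'\|\lesssim\delta^{-2}\|b-b'\|$ with the extra $\delta^{-2}$ from dividing by $a a'$), the second map has constant $2^{1/2}\delta^{-2}(1+\delta)\sim\eta^{-1}$ via \cref{thm:Lip_bds}, and $\arctan$ contributes $1$. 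Your three-way split $\eta^{-1}\cdot\eta^{-1}\cdot\eta^{-1}$ lands on the same $\eta^{-3}$, but the individual attributions are off: the exponential step $\log a\mapsto a$ is actually $\mathcal O(1)$ (since $|a|\le 1$), while the ``extraction'' step from $(a,b)$ to $\psi_k$ hides two factors of $\eta^{-1}$ (one for forming $b/a$, one for the Riemann-Hilbert inversion). If you want the sharp constant $1.6$, you will need to follow the paper's decomposition through $b/a$ rather than your $(a,b)$ route.
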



A fundamental result in Fourier analysis is that the mapping from the function to its Fourier / Chebyshev coefficients is  a linear functional, and hence each Fourier / Chebyshev coefficient can be evaluated independently from the others using a single inner product. In the case when $f(x)$ is even, we can compute the Chebyshev coefficients explicitly as  $c_k = \frac{2(2-\delta_{k,0})}{\pi} \int\limits_{0} ^1 f(x) T_{2k}(x) \frac{dx}{\sqrt{1-x^2}}$.

Can the phase factors $\psi_k$ be evaluated independently as well? The proof of \cref{thm:main} provides several useful tools for characterizing the phase sequence $\Psi$. From \cref{lem:f_to_phase_factor} below, we can compute an individual phase factor via the formula 
\begin{equation}
\psi_k = \arctan \frac{(B_k z^{-k}) (0)}{i A_k (\infty)} \, ,
\end{equation}
where $A_k$ is a Laurent series on $\C$, and $B_k z^{-k}$ is a Taylor series on $\C$, both depending on $f$, and the pair $(A_k,B_k)$ is the unique solution to a linear system
(see \cref{eq:linear_system_k}). By solving this linear system, we obtain an algorithm that is able to compute each individual phase factor $\psi_k$ independently.  
 This is in sharp contrast to \emph{all} algorithms in the literature, where phase factors need to be computed in an interdependent fashion. 

We introduce this new algorithm, dubbed the Riemann-Hilbert-Weiss algorithm, in  \cref{sec:main_algorithm}.
 Applying this algorithm to compute all $d$ phase factors, we obtain the first provably numerically stable algorithm to evaluate the phase factor sequence for any $f\in \mathbf{S}_\eta$  (without losing generality, we assume that $\eta<\frac{1}{2}$). An even polynomial in one variable 
 is a linear combination of monomials with even power.
For real functions $g, h: \R\rightarrow\R$, we write $g=\Or (h)$ if there exists $c>0$, such that $|g(\tau)|\le c|h(\tau)|$ for all $\tau\in\R$. Again a similar result can be obtained for odd polynomials and we omit the discussion here.

\begin{theorem}\label{thm:main_alg}
   Let $0<\epsilon<1$, $0<\eta <\frac{1}{2}$ and let $k$ and $d$ be integers satisfying $d\ge 1$ and $0\le k\le d$. There exists a deterministic algorithm to compute the $k$-th phase factor $\psi_k$ for any even input polynomial $f \in \mathbf{S}_{\eta}$ with degree $2d$, 
   to precision $\epsilon$ with a computational cost of \REV{$\Or\left(d^3 + \frac{d}{\eta}\log^2(d/(\eta\epsilon))\right)$} and using $\Or(\log(d/(\eta \epsilon)))$ bits. The computational cost of $\Or(d^3)$ arises from solving a linear system of size $\Or(d)$. To determine all phase factors, the algorithm solves $\Or(d)$ such linear systems, resulting in a cumulative cost of \REV{$\Or\left(d^4 + \frac{d}{\eta}\log^2(d/(\eta\epsilon))\right)$}, and the bit requirement remains $\Or(\log(d/(\eta \epsilon)))$.\end{theorem}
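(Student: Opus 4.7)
The plan is to implement the Riemann-Hilbert-Weiss representation $\psi_k = \arctan((B_k z^{-k})(0)/(i A_k(\infty)))$ numerically, analyze its cost, and establish its stability. Because $f$ is an even polynomial of degree $2d$, the Laurent series $A_k$ and the Taylor series $B_k z^{-k}$ are determined by $\Or(d)$ scalar coefficients, so the linear system \cref{eq:linear_system_k} that defines $(A_k, B_k)$ has size $\Or(d) \times \Or(d)$ and depends on $f$ through $\Or(d)$ Chebyshev coefficients.

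I would split the cost analysis into a one-time preprocessing step and a per-index step. The preprocessing converts the input representation of $f$ into the form needed to assemble any of the $d$ linear systems: for example, computing coefficients of $f$ or of $1 - f^2$ in a convenient basis and truncating any auxiliary series to the working precision. I expect this step to cost $\Or(d \log(d/(\eta\epsilon))/\eta)$ in the chosen precision, which explains the additive term in the cost bound and why it does not multiply by $d$ when all phases are computed. For a single phase $\psi_k$, one then assembles the $\Or(d) \times \Or(d)$ system, solves it by Gaussian elimination with partial pivoting at cost $\Or(d^3)$, reads off the two scalar coefficients $(B_k z^{-k})(0)$ and $A_k(\infty)$, and applies $\arctan$, which together give the $\Or(d^3)$ per-phase cost.

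The heart of the argument is stability in the working precision $b = \Or(\log(d/(\eta\epsilon)))$. The first ingredient is the Lipschitz estimate \cref{eq:Lip_bd_QSP} from \cref{thm:main}: the exact map $f \mapsto \psi_k$ has Lipschitz constant $1.6 \eta^{-3}$ in the $\mathbf{S}$-norm, so producing internally used approximations of $f$ and of $(A_k, B_k)$ to accuracy $\Or(\eta^{3}\epsilon)$ suffices to output $\psi_k$ to accuracy $\epsilon$. The second ingredient is a condition-number bound for the linear system \cref{eq:linear_system_k}, which I expect to scale polynomially in $1/\eta$ and $d$, reflecting the fact that the underlying Riemann-Hilbert factorization degenerates only as $f$ approaches $\pm 1$. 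Combining these with standard backward error analysis for Gaussian elimination with partial pivoting shows that $b = \Or(\log(d/(\eta\epsilon)))$ bits yield a solution $(\widetilde{A}_k, \widetilde{B}_k)$ within $\Or(\eta^{3}\epsilon)$ of $(A_k, B_k)$, and hence a phase $\widetilde{\psi}_k$ within $\epsilon$ of $\psi_k$.

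The main obstacle will be the condition-number bound, since the Lipschitz estimate \cref{eq:Lip_bd_QSP} controls the exact forward map $f \mapsto \psi_k$ but not the intermediate linear-algebraic object. I plan to derive the bound from the spectral-theoretic ingredients used in the proof of \cref{thm:main}, by interpreting \cref{eq:linear_system_k} as a finite-dimensional discretization of the Riemann-Hilbert factorization and bounding its smallest singular value from below by a polynomial in $\eta$. The cumulative claim then follows mechanically: applying the per-phase algorithm $d$ times while reusing the shared preprocessing yields total cost $\Or(d \cdot d^3 + d \log(d/(\eta\epsilon))/\eta) = \Or(d^4 + d\log(d/(\eta\epsilon))/\eta)$ with the same $\Or(\log(d/(\eta\epsilon)))$ bit budget.
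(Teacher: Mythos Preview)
Your high-level decomposition---a one-time preprocessing step that produces the data needed to assemble the linear systems, followed by an $\Or(d^3)$ solve per phase---matches the paper's, and you correctly flag the condition number of \cref{eq:linear_system_k} as the crux. But two points deserve correction.

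First, the Lipschitz estimate \cref{eq:Lip_bd_QSP} is not the tool the paper uses, and it does not do what you claim. That estimate controls $|\psi_k-\psi_k'|$ in terms of a perturbation of the \emph{input function} $f$ in the $\mathbf{S}$-norm; it says nothing about how an error in the intermediate object $(A_k,B_k)$ propagates to $\psi_k$. Your sentence ``producing internally used approximations of $f$ and of $(A_k,B_k)$ to accuracy $\Or(\eta^3\epsilon)$ suffices'' conflates two distinct error channels. The paper never invokes \cref{eq:Lip_bd_QSP} in the complexity analysis; instead it tracks the error directly through the chain $b\mapsto \{\hat c_j\}\mapsto (\hat{\mathbf a}_k,\hat{\mathbf b}_k)\mapsto \hat\psi_k$, using (i) an explicit FFT discretization bound for the coefficients $\hat c_j$ of $b/a$ (Propositions in Section~5.1), (ii) the operator bound $\|(\Id+M_k)^{-1}\|\le 1$ and $\|(\Id+\hat M_k)^{-1}\|\le 1$ from the anti-Hermitian structure of the Hankel block $\Xi_k$ (\cref{lm:cond_Hk}), and (iii) a direct estimate $|\psi_k-\hat\psi_k|\le |b_{k,0}/a_{k,0}-\hat b_{k,0}/\hat a_{k,0}|$ together with the lower bound $|a_{k,0}|\ge \eta$. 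Note in particular that the condition number you anticipate to be ``polynomial in $1/\eta$ and $d$'' is in fact uniformly bounded: $\|(\Id+M_k)^{-1}\|\le 1$ independently of $d$ and $\eta$, because the singular values of $\Id+M_k$ are $\sqrt{1+\lambda_j^2}\ge 1$ for pure imaginary eigenvalues $i\lambda_j$ of $\Xi_k$.

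Second, your description of the preprocessing as ``computing coefficients of $f$ or of $1-f^2$'' misses the substance of the Weiss step. What the algorithm actually needs are the Fourier coefficients of $b/a$, where $a^*$ is the outer function with $|a|^2=1-|b|^2$; obtaining these requires evaluating $\log\sqrt{1-|b|^2}$ on an $N$-point grid, an FFT, a Hilbert-transform correction in frequency space, exponentiation, and a second FFT. The $\Or\big(\tfrac{d}{\eta}\log(d/(\eta\epsilon))\big)$ term comes from the grid size $N$ dictated by analyticity of $\log(1-bb^*)$ and $b/a$ in an annulus of width $\sim \eta/d$ (\cref{thm:sufficient_condition_N} and the propositions preceding it), not merely from ``truncating auxiliary series.'' For the bit bound the paper invokes standard FFT roundoff (Ramos) and Cholesky backward error (Wilkinson) rather than Gaussian elimination; the matrix $I-\Xi_k^2$ is real symmetric positive definite, which is what makes Cholesky applicable and the bit count clean.
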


It is worth noting that the main purpose of \cref{thm:main_alg}
is to provide a numerically stable, polynomial scaling that works for arbitrarily small $\eta$. Due to the highly structured form of the linear system, the quartic power in $d$ may be improved using fast linear solvers for Hankel and Toeplitz matrices~\cite{GohbergKailathOlshevsky1995,ChandrasekaranSayed1996,ChandrasekaranGuSunEtAl2008}. However, the numerical stability of these fast algorithms should be carefully investigated.


\subsection{Related works}

For a complex polynomial $p$ of degree $d$, where $d$ can be either even or odd, Ref.~\cite{LowChuang2017} provides a nonconstructive proof of the existence of a finite phase factor sequence $\Psi\in\mathbb{R}^{d+1}$ corresponding to $p$. Representing a real polynomial $p$ of definite parity in QSP is considerably more challenging, as it requires finding a complementary polynomial $q$ that is not known \emph{a priori}. The problem of finding a complementary polynomial was solved constructively in Refs.~\cite{GilyenSuLowEtAl2019,Haah2019}. These constructive methods require finding all roots of the Laurent polynomial $1-p((z+z^{-1})/2)^2$. Neither the complementary polynomial $q$ nor the phase sequence is uniquely defined after such a root-finding process. 
The Prony method~\cite{Ying2022} provided a construction of a complementary polynomial using a contour integral approach without root-finding. Ref. \cite{AlexisMnatsakanyanThiele2023} showed a different way of directly constructing a complementary polynomial, using a method that is referred to as the Weiss algorithm in this paper. Given the complementary polynomial, Ref. \cite{ChaoDingGilyenEtAl2020} proposed a ``halving'' algorithm which can find phase factors by solving linear systems of equations. However, there was no upper bound of the condition number of such linear systems.

Iterative methods~\cite{DongMengWhaleyEtAl2021,WangDongLin2021,DongLinNiEtAl22,DongLinNiEtAl23} tackle the problem in a very different way, which directly finds the inverse to the map $\Psi\mapsto p$ by choosing a symmetric phase factor sequence  as in \cref{eqn:sym_qsp}. Ref.~\cite{WangDongLin2021} identified a particular branch of the solution, called the maximal solution, which leads to the first infinite QSP representation~\cite{DongLinNiEtAl22}. The phase sequence obtained in Ref. \cite{AlexisMnatsakanyanThiele2023} using nonlinear Fourier analysis coincides with the choice of the maximal solution, see \cref{sec:maximal}. This work shows that the maximal solution is well-defined for almost all polynomials which admits a QSP representation, and is the unique solution that satisfies a Plancherel identity. 

From an algorithmic point of view, finding the phase factor sequence was considered a major computational bottleneck in early applications of QSP for quantum computation even when the polynomial degree is less than $50$~\cite{ChildsMaslovNamEtAl2018}. The root-finding based method is a significant progress towards systematically computing phase factors, but the algorithm requires $\Or(d\log(d/\epsilon))$ bits of precision~\cite{Haah2019} and is not numerically stable. Substantial algorithmic improvements have been made in recent years \cite{ChaoDingGilyenEtAl2020,DongMengWhaleyEtAl2021,Ying2022,MotlaghWiebe2023,DongLinNiEtAl23}, which can accurately compute the phase sequence for polynomials of degree larger than $10^4$, using only the standard double precision arithmetics. As $d\to \infty$, the only provably numerically stable algorithm  so far is the fixed point iteration (FPI) algorithm in \cite{DongLinNiEtAl22}, which requires the $\ell^1$ norm of the Chebyshev coefficients to be bounded by a constant. This work shows that the phase sequence can be computed with a numerically stable algorithm for almost all functions which admit a QSP representation. The Weiss algorithm in \cite{AlexisMnatsakanyanThiele2023} as well as this work shares similarities with the contour integral based method for finding a complementary polynomial~\cite{BerntsonSuenderhauf2024}.

\subsection{Discussion and open questions}

Recently there have been a number of generalizations of QSP, including generalized QSP~\cite{MotlaghWiebe2023} which replaces the $Z$ rotation $e^{i\psi Z}$ by a more general parameterized {\rm SU}(2) rotation, QSP for {\rm SU}(1,1) matrices~\cite{RossiBastidasMunroEtAl2023}, and multi-variable QSP for commuting matrices~\cite{RossiChuang2022,RossiCeroniChuang2023,NemethKoeverKulcsarEtAl2023}. We note that while the proposed QSP algorithm in ${\rm SU}(1,1)$ and the ${\rm SU}(1,1)$ model of the nonlinear Fourier transform seem to have the roles of the phase factors (nonlinear Fourier coefficients) and spatial variable $\beta$ (or $z$) swapped, we anticipate that the nonlinear Fourier analysis perspective \cite{TaoThiele2012} may be fruitful in these settings as well.

\cref{thm:main} proves the existence and uniqueness of the phase sequence for any $f\in \mathbf{S}$. Furthermore, the sensitivity result of \cref{eq:Lip_bd_QSP}, which requires  
$\norm{f}_\infty \le  1-\eta$ for some $\eta>0$, agrees with the numerical observation that the Jacobian of the mapping $\Psi\mapsto f$ becomes singular as $\eta \to 0$~\cite{DongMengWhaleyEtAl2021,DongLinNiEtAl23}. 

Our algorithm in \cref{thm:main_alg} allows us to compute all phase factors independently and in parallel.  This completely circumvents the error accumulation issue associated with the ``layer stripping'' method introduced in \cite{GilyenSuLowEtAl2019}, which can be another source of numerical instability and requires at least in theory, high precision arithmetics   \cite[Eq. (38)]{Haah2019}. The computation of each phase factor requires solving a Riemann-Hilbert factorization problem via a linear system of equations, leading to a $\Or(d^3)$ cost per phase factor. Consequently, the leading term of the computational cost in \cref{thm:main_alg} is $\Or(d^4)$. On the other hand, the FPI algorithm \cite{DongLinNiEtAl22} only requires a much lower cost of $\Or(d^2\log(1/\epsilon))$ upon convergence. The lower bound for evaluating phase factors for all functions $f\in\mathbf{S}$ is unclear either at this point.


Ref.~\cite{DongMengWhaleyEtAl2021} observed a close connection between the decay of the Chebyshev coefficients of $f$ and the decay of the phase factors. This decay relationship was rigorously established under $\ell^1$ conditions in \cite[Theorem 4]{DongLinNiEtAl22}. \REV{We anticipate that this relationship may be generalized to all functions $f\in\mathbf{S}$.}

The paper is organized as follows. The preliminaries are given in \cref{sec:prelim}. A concise overview of Hardy functions is presented in \cref{sec:hardy_space}. \cref{sec:nlft} introduces the nonlinear Fourier analysis and its relation to QSP. Then, in \cref{sec:construct_a_given_b} and \cref{sec:existing_rhf}, we review some results of \cite{AlexisMnatsakanyanThiele2023} which are critical to this work. \cref{sec:main_algorithm} introduces the Riemann-Hilbert-Weiss algorithm, including  numerical results demonstrating the performance of the algorithm in \cref{sec:numerical_experiments}. The proof of \cref{thm:main} is provided in \cref{sec:main_proof}. The complexity of our algorithm is analyzed in \cref{sec:complexity_analysis}, which proves \cref{thm:main_alg}.

\subsection*{Acknowledgment}

M. A., G. M. and C. T.  acknowledge support by the Deutsche Forschungsgemeinschaft (DFG, German Research Foundation) under Germany's Excellence Strategy -- EXC-2047/1 -- 390685813 as well as CRC 1060.  L. L. and J. W. acknowledge support by the Challenge Institute for Quantum Computation (CIQC) funded by National Science Foundation (NSF) through grant number OMA-2016245, and the Simons Investigator award through Grant Number 825053. We thank Andras Gily\'en, Hongkang Ni, and Zane Rossi for helpful discussions.

\vspace{1em}
\noindent\textit{Note:} Towards the completion of this work, we learned that \REV{Hongkang Ni and Lexing Ying were developing a new numerically stable method based on our findings, which can reduce the cost of computing all phase factors to $\wt{\Or}(d^2\log(1/\epsilon))$.} 

\section{Preliminaries}\label{sec:prelim}

\subsection{Hardy functions} 
\label{sec:hardy_space}
Hardy spaces, denoted by $H^p$, are function spaces that arise in complex analysis and harmonic analysis with different definitions depending on the domain considered. In this paper, we only consider the Hardy spaces on the unit disk $\mathbb{D}$ which are defined as follows.

On the open unit disk $\mathbb{D}:=\{z\in \C:\abs{z}\le 1\}$, a function $g(z)$ is in the Hardy space $H^p(\mathbb{D})$ for $1\le p <\infty$ if $g(z)$ is holomorphic on $\mathbb{D}$ and
\begin{equation}
\sup_{0 \leq r < 1} \int_{0}^{2\pi} |g(re^{i\theta})|^p \ud \theta < \infty \, .
\end{equation}
Similarly, $g \in H^{\infty} (\mathbb{D})$ if 
\[
\sup_{0 \leq r < 1}  \sup\limits_{\theta} |g(re^{i\theta})| < \infty
\]
Functions in $H^p(\mathbb{D})$ have radial limits almost everywhere on the unit circle $\mathbb{T}:=\{z\in \C:\abs{z}=1\}$, and 
these boundary values determine the function uniquely \REV{\cite[Chapter 2, Thm. 3.1 and Cor. 3.2]{garnett}}. Thus when we say a function $g \in L^p (\T)$ belongs to $H^p (\D)$, we mean $g$ coincides with the boundary values of a unique $H^p (\D)$ function, which we also denote by $g$. 
By the mean value property for harmonic functions, for every function $g \in H^p (\D)$ we have 
\[
g(0) = \int\limits_{\T} g :=\frac{1}{2\pi}\int_{0}^{2\pi}g(e^{i\theta}) d\theta \, .
\]

For a subset $\Omega$ of the Riemann sphere $\CC\cup \{\infty\}$, the reflected set is
\begin{equation}
\Omega^*=\set{\overline{z^{-1}}: z\in \Omega}.
\end{equation}
If $g$ is a function on $\Omega$, then 
\begin{equation}
g^*(z):=\overline{g(\overline{z^{-1}})}
\end{equation}
is a function on $\Omega^*$. Hence $g$ is analytic on $\mathbb{D}^*$ if and only if $g^*$ is analytic on $\mathbb{D}$.
The anti-Hardy space $H^p(\mathbb{D}^*)$ consists of functions $g$ satisfying $g^*\in H^p(\mathbb{D})$. And again, we have the identity
\[
g(\infty) = \overline{g^*(0)} = \overline{\int\limits_{\T} g^*} = \int\limits_{\T} g  \, .
\]

Let $P_{\D}$ and $P_{\D^*}$ denote the $L^2 (\T)$ orthogonal projections onto $H^2 (\D)$ and $H^2 (\D^*)$, respectively. If $g \in L^2 (\T)$, then we can write $g (e^{\I \theta}) = \sum\limits_{n =-\infty}^{\infty} c_n e^{\I n \theta}$, in which case we have the explicit formulas
\[
P_{\D} g = \sum\limits_{n \geq 0} c_n e^{\I n \theta} \, , \qquad P_{\D ^*} g = \sum\limits_{n \leq 0} c_n e^{\I n \theta} \, .
\]

If $g$ is a periodic smooth function on $\mathbb{T}$, define the Hilbert transform
\begin{equation}
\mathrm{H}(g)(x) := \frac{1}{\pi} \operatorname{p.v.} \int_{0}^{2\pi} g(e^{i\theta}) \frac12 \cot\left(\frac{x - \theta}{2}\right) \ud \theta.
\end{equation}
Direct calculation shows that
\begin{equation}\label{eq:Hilbert_tranform_rule}
\begin{split}
&\mathrm{H}(z^n)=-\I z^n, \quad n\in \mathbb{N}_+,\\
&\mathrm{H}(z^{-n})=\I z^{-n}, \quad n\in \mathbb{N}_+.
\end{split}
\end{equation}
More generally, \cite[Chapter 3, Theorem 2.1 and Theorem 2.3]{garnett} shows that $\mathrm{H}$ extends to a bounded operator on $L^p (\T)$ for $1<p<\infty$, and is a bounded map from $L^1 (\T)$ to $L^{1, \infty} (\T)$.
Given $g\in H^p(\mathbb{D})$, let us denote $g(e^{i\theta})=\sum_{n\ge 0} c_n e^{in \theta}$. If $c_0=0$, then
\begin{equation}\label{eqn:hilbert_HpD}
\mathrm{H}(g)(z)=-\I g(z).
\end{equation}
Note that the output of $\mathrm{H}$ is a real-valued function if the input function is real-valued, and for $g\in H^p(\mathbb{D})$
\begin{equation}
\mathrm{H}(\Re g)(z)=\Im g(z), \quad \mathrm{H}(\Im g)(z)=-\Re g(z).
\end{equation}
Similarly, if $g\in H^p(\mathbb{D}^*)$ and $c_0=0$, then
\begin{equation}\label{eqn:hilbert_HpDstar}
\mathrm{H}(g)(z)=\I g(z), \quad \mathrm{H}(\Re g)(z)=-\Im g(z), \quad \mathrm{H}(\Im g)(z)=\Re g(z).
\end{equation}
These relations can also be proved using the more powerful Sokhotski--Plemelj relation.

If $g\in H^p(\mathbb{D})$ has modulus $1$ a.e.\ on $\mathbb{T}$, then $g$ is called an inner function. Two functions $a, b \in H^2 (\D)$ are said to have no common inner factor if for every inner function $g$, both $a/g$ and $b/g$ are $H^2 (\D)$ functions if and only if $g$ is constant. A function $g \in L^{\infty} (\T)$ is called outer if $\log |g| \in L^1 (\T)$ and $g=e^{G}$ where $G = \log |g| + i \mathrm{H}(\log |g|)$. Note that if $g$ is outer, then $G$ has an analytic extension to $\D$ \REV{\cite[first page of Chapter 3]{garnett}, whose} real part \REV{is at most $\sup\limits_{z \in \T} \log |g(z)|$}, and hence $g \in H^{\infty} (\D)$.

\subsection{Nonlinear Fourier analysis}\label{sec:nlft}

Given a compactly supported sequence $F = (F_n)_{n \in \Z}$ of complex numbers, for each $n \in \Z$ define a pair of Laurent polynomials $(a_n (z), b_n(z))$ via the recurrence relation
\begin{equation} \label{eq:defn_NLFT_gen}
\begin{pmatrix}
   a_n (z) & b_n (z) \\
   -b_n ^* (z) & a_n ^* (z)
\end{pmatrix} = \begin{pmatrix}
   a_{n-1} (z) & b_{n-1} (z) \\
   -b_{n-1} ^* (z) & a_{n-1} ^* (z)
\end{pmatrix} \frac{1}{\sqrt{1+|F_n|^2}} \begin{pmatrix}
   1 & F_n z^n \\
   -\overline{F_n} z^{-n} & 1
\end{pmatrix}  
\end{equation}
with the initial condition
\begin{equation}
\begin{pmatrix}\label{eq:init_condition_NLFT}
    a_{-\infty} (z) & b_{-\infty} (z) \\
    -b ^* _{-\infty} (z) & a_{-\infty} ^* (z)
\end{pmatrix} = \begin{pmatrix}
    1 & 0 \\ 0 & 1
\end{pmatrix} \, .
\end{equation}
Here, $a^*(z) := \overline{a(\overline{z^{-1}})}$ for any function $a$. Note that because the sequence $F$ has compact support, $(a_{-\infty}, b_{-\infty}) = (a_n, b_n)$ for all $n$ to the left of the support of $F$. The nonlinear Fourier series, or nonlinear Fourier transform (NLFT), of the sequence $F$ is defined as the pair of Laurent polynomials 
\[
(a(z), b(z)) := (a_{\infty} (z), b_{\infty} (z)) \, ,
\]
where again $(a_{\infty}, b_{\infty}) = (a_n, b_n)$ for all $n$ to the right of the support of $F$. 
Because the initial condition is given by the identity matrix, and then in the recurrence relation we only multiply by matrices with determinant $1$, the resulting Laurent polynomials $a, b$ then satisfy the determinant condition
\begin{equation}\label{eq:det_cond_1}
a(z) a^* (z) + b(z) b^* (z) = 1 \, .
\end{equation}

For convenience, for any functions $a, b \in L^2 (\T)$ satisfying \cref{eq:det_cond_1} for almost every $z \in \T$, we identify the pair $(a,b)$ with the matrix
\begin{equation}\label{nlftg}
  G(z):=  \begin{pmatrix}
    a (z) & b(z) \\
    - b^*  (z) & a^*(z)
\end{pmatrix} \, .
\end{equation}

The matrix product of two matrices $(a,b)$ and $(c,d)$ can be expressed concisely as 
\begin{equation}
    (a,b)(c,d) = (ac-bd^*, ad+bc^*) \, ,
\end{equation}
and similarly for the inverse,
\[
(a,b)^{-1} = (a^*, -b) \, .
\]

The nonlinear Fourier transform can be extended to  square-integrable sequences supported on the half-line $\ell^2(\NN)=:\ell^2 ([0, \infty))$ 
 \cite{tsai, AlexisMnatsakanyanThiele2023}. In the latter case, $(a(z), b(z))$ may no longer be a pair of Laurent polynomials, but for every $k \in \Z$, \cite{tsai} characterized the image  of $\ell^2 ([k, \infty))$ under the NLFT denoted by $\mathbf{H}_{\geq k}$. Define 
$\bar{\mathbf{H}_{\geq k}}$ to be the space of pairs $(a,b) \in H^2 (\D^*) \times z^k H^2 (\D)$ on $\T$ such that $a (\infty) > 0$ and 
\[
a a^* + b b^* = 1
\]
almost everywhere on $\T$, and define $\mathbf{H}_{\geq k}$ to be the set of pairs $(a,b) \in \bar{\mathbf{H}_{\geq k}}$ for which $a^*$ and $b$ share no common inner factor. Then \cite{tsai} showed that the NLFT is a homeomorphism from $\ell^2 ( [k, \infty))$ onto 
$\mathbf{H}_{\geq k}$ with an appropriate topology, and similarly the NLFT is a homeomorphism from $\ell^2 ((-\infty, k])$ onto 
\[
\mathbf{H}_{\leq k} := \{ (a,b) ~:~ (a, b^*) \in \mathbf{H}_{\geq -k}  \} \, , 
\]
and one may also define $\overline{\mathbf{H}_{\leq k}}$ similarly.
See also \cite[Section 6]{AlexisMnatsakanyanThiele2023} for a summary of these results. 

One can also extend the nonlinear Fourier transform to sequences in $\ell^2 (\Z)$ as follows: given a sequence $F$ in $\ell^2(\Z)$, split it as $F_-+F_+$, where $F_-$ is supported in $(-\infty, -1] $  and $F_+$ is supported in $[0,\infty)$. Then we define the nonlinear Fourier transform of $F$ to be the pair
\begin{equation}\label{eq:factorization_explain}
(a,b):= (a_- , b_{-}) (a_+, b_+)
\end{equation}
where $(a_- , b_{-})$ denotes the NLFT of $F_{-}$ and $(a_+, b_+)$ denotes the NLFT of $F_+$.  
The problem of finding factors $(a_{-}, b_{-})$ and $(a_+, b_+)$ as in \cref{eq:factorization_explain} is known as a Riemann-Hilbert factorization problem \cite[Lecture 3, p.31]{TaoThiele2012}.



The nonlinear Fourier transform and symmetric quantum signal processing are related by the following Lemma.
Define the Hadamard gate
\[ {\rm Had}:=\frac{1}{\sqrt{2}} \begin{pmatrix}
            1 & 1\\
            1 & -1
        \end{pmatrix}\, .\]


\begin{lemma}[{\cite[Lemma 2]{AlexisMnatsakanyanThiele2023}}]\label{lem:change_of_vars}
 Let $\Psi\in \mathbf{P}$
  and consider $U_d(x,\Psi)$ and $u_d(x,\Psi)$ as defined near \cref{eqn:sym_qsp}.
Define for $n \in \mathbb{Z}$
\begin{equation}\label{eqn:F_psi_mapping}
    F_n = \I \tan(\psi_{\abs{n}})\, ,
\end{equation} 
and let $G_d(z)$ denote the nonlinear Fourier series 
in the form in \cref{nlftg} 
of the truncated sequence 
\[\left(F_n 1_{\{-d\leq n\leq d\}}\right)_{n\in \Z}\, .\]

    Then the following relations hold for every $ d\geq 0$ and every $\theta\in [0,\frac \pi 2]$,
    \begin{equation}\label{eqn:QSP_NLFT_connection} 
         U_d( \cos(\theta), \Psi)  = {\rm Had} \begin{pmatrix}
            e^{\I d\theta} & 0 \\
            0 & e^{-\I d\theta}
        \end{pmatrix}G_d (e^{2\I\theta}) \begin{pmatrix}
            e^{\I d\theta} & 0 \\
            0 & e^{-\I d\theta}
        \end{pmatrix}  {\rm Had}
    \end{equation}
    and in particular, if the upper right entry of $G_d$ is $b$, 
    \begin{equation}\label{urelateb}
        \I \Im [u_d(\cos(\theta),\Psi)] = b(e^{2\I\theta}).
    \end{equation}
\end{lemma}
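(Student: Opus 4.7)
\medskip

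\noindent\textbf{Proof proposal.}
My plan is to prove the matrix identity \cref{eqn:QSP_NLFT_connection} by induction on $d$, and then to extract \cref{urelateb} by reading off the upper left entry and exploiting the complex symmetry of $U_d$.

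The key algebraic trick that makes the induction work is the Hadamard diagonalization of the ``signal'' matrix. Since $\mathrm{Had}\cdot Z\cdot \mathrm{Had}=X$ with $X=\begin{pmatrix}0&1\\1&0\end{pmatrix}$, and $W(\cos\theta)=e^{\I\theta X}$, we get
\begin{equation*}
\mathrm{Had}\cdot W(\cos\theta)\cdot \mathrm{Had}= D_1,\qquad \mathrm{Had}\cdot e^{\I\psi Z}\cdot \mathrm{Had}= e^{\I\psi X},\qquad D_k:=\mathrm{diag}(e^{\I k\theta},e^{-\I k\theta}).
\end{equation*}
Conjugating \cref{eqn:sym_qsp} by $\mathrm{Had}$ on both sides then turns the QSP recursion into
\begin{equation*}
\mathrm{Had}\,U_d\,\mathrm{Had}=e^{\I\psi_d X}\,D_1\bigl(\mathrm{Had}\,U_{d-1}\,\mathrm{Had}\bigr)D_1\,e^{\I\psi_d X}.
\end{equation*}
The inductive hypothesis is that $\mathrm{Had}\,U_{d-1}\,\mathrm{Had}=D_{d-1}G_{d-1}(e^{2\I\theta})D_{d-1}$, and the goal is to promote this to the analogous identity at level $d$.

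For the inductive step I would compare the two sides after conjugating by $D_d^{-1}$. On the QSP side this leaves the factors $D_d^{-1}e^{\I\psi_d X}D_d$ and $D_d e^{\I\psi_d X}D_d^{-1}$ sandwiching $G_{d-1}(e^{2\I\theta})$. A direct computation using $D_d X D_d^{-1}=\begin{pmatrix}0&e^{2\I d\theta}\\ e^{-2\I d\theta}&0\end{pmatrix}$ identifies these two sandwich matrices, after the substitution $z=e^{2\I\theta}$, exactly as the NLFT elementary factors
\begin{equation*}
L_d(z)=\begin{pmatrix}\cos\psi_d & \I\sin\psi_d\,z^{-d}\\ \I\sin\psi_d\,z^{d}&\cos\psi_d\end{pmatrix},\qquad R_d(z)=\begin{pmatrix}\cos\psi_d & \I\sin\psi_d\,z^{d}\\ \I\sin\psi_d\,z^{-d}&\cos\psi_d\end{pmatrix}.
\end{equation*}
corresponding to $F_{\pm d}=\I\tan\psi_d$ via \cref{eq:defn_NLFT_gen} (the prefactor $1/\sqrt{1+|F_d|^2}$ becomes $\cos\psi_d$). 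Since $G_d=L_d G_{d-1} R_d$ is exactly how the truncated NLFT at level $d$ relates to that at level $d{-}1$ (adding the left endpoint $F_{-d}$ and the right endpoint $F_d$ to the symmetric support), the inductive step closes. The base case $d=0$ is a one-line computation: $G_0=e^{\I\psi_0 X}=\mathrm{Had}\,e^{\I\psi_0 Z}\,\mathrm{Had}=\mathrm{Had}\,U_0\,\mathrm{Had}$, with $D_0=\mathrm{Id}$.

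For \cref{urelateb}, write $G_d=\begin{pmatrix}a&b\\-b^*&a^*\end{pmatrix}$, set $\zeta=e^{2\I\theta}$, and expand the upper left entry of $\mathrm{Had}\,D_d G_d(\zeta)D_d\,\mathrm{Had}$. After the algebra this gives
\begin{equation*}
u_d(\cos\theta,\Psi)=\tfrac12\bigl(\zeta^d a(\zeta)+\zeta^{-d}a^*(\zeta)\bigr)+\tfrac12\bigl(b(\zeta)-b^*(\zeta)\bigr)=\Re\!\bigl(\zeta^d a(\zeta)\bigr)+\I\,\Im\,b(\zeta),
\end{equation*}
using that on $\T$ we have $a^*(\zeta)=\overline{a(\zeta)}$ and $b^*(\zeta)=\overline{b(\zeta)}$. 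A parallel expansion of the upper right and lower left entries shows that their equality — which is forced by the explicit complex symmetry of $U_d$ visible from \cref{eqn:sym_qsp} — is equivalent to $b+b^*=0$, i.e.\ $b$ is purely imaginary on $\T$. Consequently $b(\zeta)=\I\,\Im\,b(\zeta)=\I\,\Im[u_d(\cos\theta,\Psi)]$, which is \cref{urelateb}.

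The main obstacle is bookkeeping rather than ideas: one must correctly align the left/right ordering conventions of the NLFT product with the symmetric QSP recursion, and track the signs arising from $F_n=\I\tan\psi_{|n|}$ (so that $-\overline{F_n}=\I\tan\psi_{|n|}$ appears with the correct sign in the NLFT elementary factor). Once these conventions are fixed, the inductive identification $D_d^{\pm1}e^{\I\psi_d X}D_d^{\mp1}=L_d,R_d$ is the only nontrivial computation, and the final identity $b=\I\,\Im[u_d]$ follows for free from the built-in symmetry of the symmetric QSP ansatz.
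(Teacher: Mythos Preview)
Your argument is correct. The paper does not give its own proof of this lemma --- it is quoted verbatim from \cite[Lemma~2]{AlexisMnatsakanyanThiele2023} --- so there is nothing here to compare against directly. That said, the induction on $d$ via Hadamard conjugation is exactly the natural approach (and is essentially the one in the cited reference): the identities $\mathrm{Had}\,W(\cos\theta)\,\mathrm{Had}=D_1$ and $\mathrm{Had}\,e^{\I\psi Z}\,\mathrm{Had}=e^{\I\psi X}$ reduce the QSP recursion to the NLFT product structure, and your identification of $D_d^{\mp1}e^{\I\psi_d X}D_d^{\pm1}$ with the elementary NLFT factors at positions $\pm d$ is the crux. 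Your extraction of \cref{urelateb} from the complex symmetry of $U_d$ (forcing $b+b^*=0$ on $\T$) is clean and correct.
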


Following \cref{eqn:QSP_NLFT_connection} and \cref{urelateb},
we will often relate  $x\in [0,1]$ with the unique 
$\theta \in [0, \frac{\pi}{2}]$ and $z$ in the upper half of $\T$
so that 
 \begin{equation}\label{eq:change_of_variables}
         \cos(\theta) = x, \quad z= e^{2\I \theta}.
    \end{equation}
For example, we write for \cref{urelateb}
   \begin{equation}\label{urelatebxz}
        \I \Im [u_d(x,\Psi)] = b(z).
    \end{equation}
The Lemma enables us to convert the original problem of finding for given $f$ some phase factors $\mathbf{P}$ with
\begin{equation}
    \Im[u_d(x,\Psi)] = f(x)
\end{equation}
into that of determining the infinite sequence $(F_n)_{n\in\Z}$ for some data $(a,b)$ with $\I f(x)=b(z)$. 
Note that if $f$ is an even polynomial of degree $2d$ in $x$, then $b$ is a Laurent polynomial of degree $d$ in $z$ satisfying the symmetry $b(z)=b(z^{-1})$. Moreover, $b$ is pure imaginary on $\T$
as $f$ is real and we have $b^*=-b$.

We have the flexibility to select for given $b$ any $a$ for which $(a,b)$ is the nonlinear Fourier transform of a sequence. A good choice of $a$ is addressed in the following subsection.

\subsection{Weiss algorithm for constructing \texorpdfstring{$a$}{a} given \texorpdfstring{$b$}{b}}\label{sec:construct_a_given_b}
Given $b$ with infinity norm bounded by $1$, there are multiple choices of $a$ for which $(a,b)$ is a NLFT. But \cite[Theorem 10]{AlexisMnatsakanyanThiele2023} provides a way of constructing a specific and convenient choice of $a(z)$ given $b(z)$. Initially, \cite[Theorem 10]{AlexisMnatsakanyanThiele2023} was confined to functions $b$ with 
\[
\|b\|_{\infty}:=\esssup\limits_{z \in \T} |b(z)|  < \frac{1}{\sqrt{2}} \, .
\]
However, the proof of the theorem can be extended to accommodate any function $b$ satisfying $\|b\|_{\infty} \leq 1$ and the Szeg\H o condition on the unit circle
\begin{equation}
\int\limits_{\T} \log (1- |b(z)|^2) > - \infty \, .
\end{equation}
Namely in our extension \cref{thm:construct_a} below, let $\mathbf{B}$ be the set of pairs 
 of measurable functions $(a,b)$ on $\mathbb{T}$ for which 
 $a^*$ is outer with $a^* (0) > 0$, 
and 
\begin{equation}\label{eq:det_torus_gen} 
aa^*+bb^*=1
\end{equation}
almost everywhere on $\T$.

\begin{theorem}[Extension of {\cite[Theorem 10]{AlexisMnatsakanyanThiele2023}}]\label{thm:construct_a}
For each complex valued measurable function \( b \) on \( \mathbb{T} \) with $\norm{b}_{\infty} \leq 1$,
if $b$ satisfies the Szeg\H o condition 
\begin{equation}\label{eq:Szego_cond_b}
\int\limits_{\T} \log (1 -|b(z)|^2) > -\infty \, ,
\end{equation}
then there is a unique measurable function \( a \) on \( \mathbb{T} \) such that \( (a, b) \in \mathbf{B} \).
\end{theorem}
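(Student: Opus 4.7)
The plan is to construct $a^*$ explicitly as the classical Szeg\H o outer function on $\D$ with prescribed boundary modulus $\sqrt{1-|b|^2}$, and to obtain uniqueness from the rigidity of outer functions. First I would observe that the Szeg\H o hypothesis \cref{eq:Szego_cond_b} forces $|b(z)| < 1$ for almost every $z \in \T$, since otherwise $\log(1-|b|^2)=-\infty$ on a set of positive measure, contradicting integrability. In particular $h := \tfrac{1}{2}\log(1-|b|^2)$ is a non-positive element of $L^1(\T)$, and this is the only regularity needed in the rest of the argument.

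Next I would define, for $z \in \D$,
\[
a^*(z) := \exp\!\left(\frac{1}{4\pi}\int_0^{2\pi}\frac{e^{i\theta}+z}{e^{i\theta}-z}\log(1-|b(e^{i\theta})|^2)\,d\theta\right),
\]
which is the standard outer function construction recalled in Section 2.1. It is holomorphic on $\D$, its boundary values satisfy $|a^*|^2 = 1-|b|^2$ a.e.\ on $\T$, and it is outer by construction. Because $h\le 0$, one has $|a^*| \le 1$ on $\D$, so $a^* \in H^\infty(\D)$. At $z=0$ the integral is real, so $a^*(0) > 0$. Setting $a(z) := \overline{a^*(\overline{z^{-1}})}$, the identity $|a^*|^2 + |b|^2 = 1$ on $\T$ is exactly the determinant condition $aa^* + bb^* = 1$ required by $\mathbf{B}$, so $(a,b) \in \mathbf{B}$.

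For uniqueness, suppose $(a_1,b),(a_2,b) \in \mathbf{B}$. Then $a_1^*$ and $a_2^*$ are both outer on $\D$ with the same boundary modulus $\sqrt{1-|b|^2}$. By the classical uniqueness of outer functions with a given modulus (two such outer functions differ by a unimodular constant), one has $a_1^* = \lambda a_2^*$ for some $\lambda$ with $|\lambda|=1$; the positivity $a_1^*(0), a_2^*(0) > 0$ then forces $\lambda = 1$, hence $a_1 = a_2$.

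The main subtlety relative to \cite[Theorem 10]{AlexisMnatsakanyanThiele2023} is the relaxation from $\|b\|_\infty < 1/\sqrt 2$ to $\|b\|_\infty \le 1$: previously $1-|b|^2$ was bounded uniformly away from $0$, whereas now it may vanish (even on sets of positive measure), so $\log(1-|b|^2)$ is only an $L^1$ function rather than a bounded one. The potential obstruction, namely that the outer-function construction or its uniqueness could degenerate when the modulus is not bounded away from zero, does not actually occur: both the exponential-of-Poisson-integral existence theorem and the outer-function uniqueness theorem work at exactly the $L^1$ level of regularity, which is precisely what the Szeg\H o condition supplies. This is why the hypothesis \cref{eq:Szego_cond_b} is the natural one, and why the existence and uniqueness proofs carry over essentially verbatim from the stronger $L^\infty$-gap setting.
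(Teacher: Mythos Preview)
Your proof is correct and is essentially the same as the paper's: both construct $a^*$ as the outer function on $\D$ with boundary modulus $\sqrt{1-|b|^2}$ and deduce uniqueness from the rigidity of outer functions together with the normalization $a^*(0)>0$. The only cosmetic difference is that you write the outer function via the Poisson--Herglotz integral, while the paper writes its boundary values as $e^{R - i\mathrm{H}(R)}$ with $R=\log\sqrt{1-|b|^2}$; these are two standard descriptions of the same object.
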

\begin{proof}
The idea is the same as that of \cite[Theorem 10]{AlexisMnatsakanyanThiele2023}. Existence follows by  explicitly constructing a function $a$ satisfying \( (a, b) \in \mathbf{B} \). Namely, define
\begin{equation}
R(z):=\log\sqrt{1-\abs{b(z)}^2}
\end{equation}
for almost every $z\in \T$. Moreover, $R(z) \in L^1(\mathbb{T})$. Then $G:=R-\I \mathrm{H}(R)$ is well-defined a.e.\ because the Hilbert transform $\mathrm{H}$ maps the $L^1 (\T)$ function $R$ to a $L^{1, \infty} (\T)$ function. Furthermore, $G$ has an analytic extension to $\mathbb{D}^*$, and so does 
\begin{equation}\label{eq:a_G}
a(z):=e^{G(z)} \, .
\end{equation}
\REV{Note that since conjugate functions vanish at the origin, then $G^*$ has vanishing imaginary part at the origin \cite[first page of Chapter 3]{garnett}. It follows that  $G^* (0)$ is the mean of $R = \log \sqrt{1 - |b(z)|^2}$ on $\T$, which is real-valued and finite by $\|b\|_{\infty} \leq 1$ and the Szeg\H o condition  \cref{eq:Szego_cond_b}. 
It follows that 
\[
a^*(0) = e^{G^* (0)}>0 \, .
\]}

Notice that $\log |a^*|$ is exactly $R(z)$ on $\T$, which belongs to $L^1(\T)$. Besides, the values of $a^*$ on $\T$ are the almost everywhere defined nontangential limits of the analytic function $e^{G^*(z)}$. So this construction of $a(z)$ not only meets the requirement $aa^* + bb^* =1$, but also satisfies that $a^*$ is outer with $a^*(0)>0$. Hence $(a,b) \in \mathbf{B}$.

As for the uniqueness, one can prove uniqueness by contradiction and assume that $\Tilde{a}$ is another function as claimed in the theorem. Notice that $a\Tilde{a}^{-1}$ and its reciprocal are analytic in the disc and have modulus $1$ a.e.\ on $\mathbb{T}$. By the maximum principle, this means $a\Tilde{a}^{-1}$ equals a unimodular constant. Given $a^*(0)$ and $\Tilde{a}^*(0)$ are both positive, the constant is $1$, that is, $a = \Tilde{a}$.
\end{proof}

This constructive proof leads to a numerical method of constructing $\frac{b}{a}$ given $b$, which is an important component of our \cref{alg:RHW_alg} below. We call this method of construction the Weiss algorithm, as it follows the same idea in the Guido and Mary Weiss algorithm \cite{MaryWeiss62}. A more detailed discussion about the Weiss algorithm is provided in \cref{sec:Weiss_alg}. 

The connection between the solution of the Weiss algorithm and the maximal solution as defined in \cite{WangDongLin2021} is discussed in \cref{sec:maximal}.

\subsection{Existing results on Riemann-Hilbert factorization}\label{sec:existing_rhf}
By \cref{lem:change_of_vars} and \cref{thm:construct_a}, the problem of showing there exists a phase factor sequence $\{\psi_k \}$ associated to a given target function $f$ is reduced to the following: given $(a,b) \in \mathbf{B}$, can we show $(a,b)$ is an NLFT? Under the additional assumption
 \begin{equation}\label{eq:f_bded}
\|f\|_{\infty} = \|b\|_{\infty} < \frac{1}{\sqrt{2}}, 
 \end{equation}
  \cite[Theorem 11]{AlexisMnatsakanyanThiele2023} provides a positive answer, whose proof we summarize here. In what follows, we will often write elements of $L^2 (\T) \times L^2 (\T)$ as column vectors, so that we may then write operators acting on such elements using matrix notation.

By the discussion surrounding \cref{eq:factorization_explain}, $(a,b) \in \mathbf{B}$ is a NLFT if and only if there exist $(a_+, b_+) \in \mathbf{H}_{\geq 0}$ and $(a_{-}, b_{-}) \in \mathbf{H}_{ \leq 1}$ for which the factorization \cref{eq:factorization_explain} holds. To prove the existence of $(a_+, b_+)$, one first argues that for any such pair, we have 
\[
\begin{pmatrix}
    A \\ B
\end{pmatrix} := a_+ (\infty) \begin{pmatrix}
    a_+ \\ b_+
\end{pmatrix}
\]
is a fixed point of the map appearing in \cite[(7.14)]{AlexisMnatsakanyanThiele2023} from $H^2 (\D^*) \times H^2 (\D)$ to itself. This map may be rewritten as  
\begin{equation}\label{eq:contraction_mapping}
\begin{pmatrix}
    A \\ B
\end{pmatrix} \mapsto \begin{pmatrix}
    1 \\ 0
\end{pmatrix} + \begin{pmatrix}
    - (P_{\D} \frac{b}{a} B^*)^* \\ 
    P_{\D} \frac{b}{a} A
\end{pmatrix}  = \begin{pmatrix}
    1 \\ 0
\end{pmatrix} + \begin{pmatrix}
    - P_{\D^*} \frac{b^*}{a^*} B \\ 
    P_{\D} \frac{b}{a} A
\end{pmatrix} = \begin{pmatrix}
    1 \\ 0
\end{pmatrix} -M \begin{pmatrix}
    A \\ B
\end{pmatrix}  \, ,
\end{equation}
where we used the identity
\[
(P_{\D} f)^* = P_{\D^* } f^* \, ,
\]
and where \begin{equation}\label{eq:naive_operator}
M:= \begin{pmatrix}
    0 & P_{\D^*} \frac{b^*}{a^*} \\  -P_{\D} \frac{b}{a} & 0 
\end{pmatrix} \, .
\end{equation}
The size assumption \cref{eq:f_bded} in \cite{AlexisMnatsakanyanThiele2023} implies 
\begin{equation}\label{eq:bounded_b/a}
\left \| \frac{b}{a} \right \|_{\infty} < 1 \, ,
\end{equation}
and hence \cref{eq:contraction_mapping} is a Banach contraction mapping. Thus there exists a unique fixed point $\begin{pmatrix}
    A \\ B
\end{pmatrix}$, from which we can deduce the existence of a unique $(a_+, b_+)$ and $(a_{-}, b_{-})$.

\section{Riemann-Hilbert-Weiss algorithm for finding phase factors}\label{sec:main_algorithm}

\subsection{Outline}\label{sec:outline}

The key observation of this work is that the condition $\norm{f}_{\infty}<\frac{1}{\sqrt{2}}$, which ensures the Banach contraction mapping condition in \cref{sec:existing_rhf}, may be relaxed. The existence of a fixed point $\begin{pmatrix}
    A \\ B
\end{pmatrix}$ in \cref{eq:contraction_mapping} is equivalent to the existence of the solution of
\begin{equation}\label{eq:basic_linear_eqn}
\left ( \Id + M  \right ) \begin{pmatrix}
    A \\ B
\end{pmatrix} = \begin{pmatrix}
    1 \\ 0
\end{pmatrix}.
\end{equation}
Then \cref{eq:bounded_b/a} implies $M$ has operator norm strictly less than $1$, and hence by the von Neumann series we write
\begin{equation}\label{eq:inversion_geometric}
 (\Id + M)^{-1}= \sum\limits_{k=0} ^{\infty} (-M)^k \, ,
\end{equation}
where the sum converges absolutely with respect to the operator norm. Thus \cref{eq:basic_linear_eqn} has the explicit solution
\[
\begin{pmatrix}
    A \\ B
\end{pmatrix} = \left ( \Id + M  \right )^{-1} \begin{pmatrix}
    1 \\ 0
\end{pmatrix} = \sum\limits_{k=0} ^{\infty} (-M)^k \begin{pmatrix}
    1 \\ 0
\end{pmatrix}.
\]

It is reasonable to question whether the condition \cref{eq:f_bded} can be relaxed to 
\[
\norm{f}_{\infty} = \|b\|_{\infty} \leq 1,
\]
since this is a sufficient and necessary condition for the existence of QSP representation for polynomial $f$ with definite parity~\cite[Corollary 5]{GilyenSuLowEtAl2019}. We provide a positive answer in \cref{thm:RH_a_bded_below} below, extending \cite[Theorem 11]{AlexisMnatsakanyanThiele2023}.

In this more general setting, \cref{eq:bounded_b/a} may not hold and so the inversion formula \cref{eq:inversion_geometric} may no longer be valid. We instead invert $\Id + M$ by showing $M$ is an antisymmetric operator on $H^2 (\D^*) \times H^2 (\D)$ and hence has pure imaginary spectrum. Of course, when $|a|$ is not bounded below uniformly on $\T$, one must make sense of the operator $M$. Our approach in \cref{subsection:unbded_op} is to use the theory of unbounded operators to replace the operator in \cref{eq:naive_operator} by its unbounded analogue in \cref{eq:def_extension_M}. 

\begin{theorem}[Extension of {\cite[Theorem 11]{AlexisMnatsakanyanThiele2023}}, Riemann-Hilbert factorization]\label{thm:RH_a_bded_below}
    Let $(a,b) \in \mathbf{B}$. Then for each $k \in \Z$, there exists a unique factorization 
\begin{equation}\label{eq: factorization}
    (a,b) = (a_{<k} , b_{ <k} )( a_{\geq k}, b_{\geq k })
\end{equation}
with $(a_{<k}, b_{<k}) \in \mathbf{H}_{ \leq k-1}$ and $(a_{\geq k}, b_{\geq k}) \in \mathbf{H}_{\geq k}$.
\end{theorem}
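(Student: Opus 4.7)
The strategy is to adapt the Banach fixed-point argument summarized in Section \ref{sec:existing_rhf} to the borderline regime $\|b/a\|_\infty \le 1$, where $M$ from \cref{eq:naive_operator} is no longer a contraction on $H^2(\D^*)\times H^2(\D)$. My first step is a reduction to the case $k=0$: replacing $(a,b)$ by $(a, z^{-k} b)$ shifts the factorization by $k$, and since this manipulation preserves $\mathbf{B}$ (after adjusting outer/analyticity indices in the standard way), the desired factorization at level $k$ corresponds to an $\mathbf{H}_{\le -1} \cdot \mathbf{H}_{\ge 0}$ factorization of the shifted pair. So I only need to produce a unique $(a_+,b_+)\in \mathbf{H}_{\ge 0}$ and $(a_-,b_-)\in \mathbf{H}_{\le -1}$ with $(a,b)=(a_-,b_-)(a_+,b_+)$.

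Following the reasoning of \cite[Theorem 11]{AlexisMnatsakanyanThiele2023} reviewed above, I will argue that such a $(a_+,b_+)$ exists if and only if the rescaled pair $\binom{A}{B} := a_+(\infty)\binom{a_+}{b_+}$ solves the linear system \cref{eq:basic_linear_eqn}, i.e.\ $(\Id + M)\binom{A}{B}=\binom{1}{0}$, with $M$ now interpreted properly. Since \cref{eq:bounded_b/a} can fail, I cannot use the Neumann series. Instead, I introduce $M$ as a densely defined unbounded operator on $H^2(\D^*)\times H^2(\D)$, with domain consisting of those pairs for which $\frac{b}{a}A \in L^2(\T)$ and $\frac{b^*}{a^*}B \in L^2(\T)$ (this is the analogue alluded to in \cref{eq:def_extension_M}). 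Writing $T: H^2(\D^*)\to H^2(\D)$ for the densely defined operator $A \mapsto P_\D(\tfrac{b}{a} A)$, a direct computation using $(P_\D f)^* = P_{\D^*} f^*$ and $b^* = \overline{b(\bar z^{-1})}$ shows that the adjoint of $T$ is $B \mapsto P_{\D^*}(\tfrac{b^*}{a^*} B)$, so that
\begin{equation*}
M = \begin{pmatrix} 0 & T^* \\ -T & 0 \end{pmatrix}, \qquad M^* = -M.
\end{equation*}

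The main technical step is then to show that $M$ is in fact skew-adjoint (not merely skew-symmetric) on the natural maximal domain. Granting this, $\I M$ is self-adjoint, so its spectrum lies in $\mathbb{R}$, whence $\sigma(M)\subset \I\mathbb{R}$ and $\Id + M$ is boundedly invertible. Equivalently, $\Id - M^2 = \Id + M^*M$ is a positive self-adjoint operator bounded below by $\Id$, and one can write $(\Id + M)^{-1} = (\Id - M)(\Id - M^2)^{-1}$ to produce the solution $\binom{A}{B}$. The hard part here is the self-adjointness itself: the determinant identity $aa^* + bb^* = 1$ combined with the fact that $a^*$ is outer (guaranteed by $(a,b)\in\mathbf{B}$) should force the deficiency indices of $\I M$ to vanish, and the Szeg\H o condition on $b$ guarantees that $\log |a|\in L^1(\T)$, so that $a$ has no inner factor and $\frac{b}{a}$ is well-defined a.e.\ even where $|b|$ approaches $1$. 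I expect this to be the crux: verifying that there is no nontrivial kernel for $\Id \pm \I M$ on the maximal domain reduces, via the outerness of $a^*$, to a Liouville-type uniqueness statement in the Hardy class.

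Once $\binom{A}{B}$ is in hand, I define $a_+ := A/A(\infty)$ and $b_+ := B/A(\infty)$. To check $(a_+, b_+)\in \mathbf{H}_{\ge 0}$, I need $a_+(\infty)>0$, $a_+ a_+^* + b_+ b_+^* = 1$ on $\T$, and that $a_+^*$ and $b_+$ share no common inner factor. Positivity of $A(\infty)$ comes from the reality of the fixed-point equation evaluated at $\infty$; the determinant identity is obtained by multiplying \cref{eq:basic_linear_eqn} by the suitable conjugate and using the relation $aa^*+bb^*=1$ to see that $AA^* + BB^*$ is an analytic function on $\D^*$ whose boundary values coincide with $|a_+(\infty)|^{-2}$ on $\T$, so it equals a constant. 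Then $(a_-,b_-) := (a,b)(a_+,b_+)^{-1} = (a,b)(a_+^*, -b_+)$ is shown to lie in $\mathbf{H}_{\le -1}$ by computing Fourier supports and using that $a_+^*$ is outer. Uniqueness in \cref{eq: factorization} reduces to the uniqueness of the solution of $(\Id+M)\binom{A}{B}=\binom{1}{0}$, which is immediate from the invertibility of $\Id+M$, together with the observation that any two factorizations would differ by an element of $\mathbf{H}_{\ge 0}\cap\mathbf{H}_{\le -1}$, which is trivial.
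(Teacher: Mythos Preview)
Your overall strategy matches the paper's: reduce to $k=0$, interpret $M$ as an unbounded antisymmetric operator, show $\I M$ is self-adjoint so $\Id+M$ is boundedly invertible, define $(A,B)=(\Id+M)^{-1}\binom{1}{0}$, and extract the factors from $(A,B)$. The paper, however, does not take your ``natural maximal domain plus deficiency indices'' route. Instead it defines $M$ via a truncation: let $a_\eta^*$ be the outer function with $\log|a_\eta|=\mathbf{1}_{\{|a|>\eta\}}\log|a|$, set the domain $\mathcal{E}$ to be those $f\in H^2(\D^*)\times H^2(\D)$ for which the bounded operators $\mathcal{P}_{\mathcal{H}}\begin{pmatrix}0&b^*/a_\eta^*\\-b/a_\eta&0\end{pmatrix}f$ converge weakly as $\eta\to 0$, and define $Mf$ as that limit. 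Self-adjointness of $\I M$ (Lemma~\ref{lem:iM_self_adjt}) is then checked directly by verifying $\mathcal{D}(M^*)\subset\mathcal{D}(M)$ using this truncation; the key dense subspace is $\mathcal{D}=aH^2(\D^*)\times a^*H^2(\D)$, where outerness of $a^*$ enters via Beurling's theorem. Your deficiency-index sketch may be viable, but you have not carried it out, and it is not what the paper does.

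Two smaller corrections. First, the normalization is $(a_+,b_+)=A(\infty)^{-1/2}(A,B)$, not $A(\infty)^{-1}(A,B)$: the paper shows $AA^*+BB^*$ is a real $H^1(\D^*)\cap H^1(\D)$ function, hence constant, and evaluates to $A(\infty)$, so dividing by $\sqrt{A(\infty)}$ gives $|a_+|^2+|b_+|^2=1$. Second, you cannot use that $a_+^*$ is outer to rule out common inner factors---that is not known. The paper instead argues that any common inner factor $g$ of $a_+^*$ and $b_+$ would divide $a^*=a_-^*a_+^*-b_+b_-^*$ in $H^2(\D)$, contradicting outerness of $a^*$; the same trick handles $(a_-,b_-)$.
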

According to the discussion surrounding \cref{eq:factorization_explain}, solving for $(a_{<k}, b_{<k})$, $(a_{\geq k}, b_{\geq k})$ in the Riemann-Hilbert factorization problem \cref{eq: factorization} is equivalent to showing that each $(a,b) \in \mathbf{B}$ is the NLFT of a unique sequence $F \in \ell^2 (\Z)$.

Using the factorization result in \cref{thm:RH_a_bded_below}, an individual nonlinear Fourier coefficient $F_k$, and hence the phase factor $\psi_k$, may be computed using the formula (see \cite[Eq. (6.13)]{AlexisMnatsakanyanThiele2023})
\begin{equation}
F_k = \frac{(b_{\geq k} z^{-k}) (0)}{a_{\geq k} ^* (0)}.
\end{equation}
We phrase this process using the key quantities appearing in our algorithm, namely a scalar multiple $(A_k, B_k)$ of $(a_{\geq k}, b_{\geq k})$.

\begin{lemma}\label{lem:f_to_phase_factor}
Let $k \in \mathbb{N}$. Given any $f \in \mathbf{S}_\eta$, we can recover the phase factor $\psi_k$ via the maps
    \begin{equation}\label{eq:tentative_Riesz_composition}
f \mapsto \frac{b}{a} \mapsto (A_k ,B_k) \mapsto F_k \mapsto \psi_k \, .
\end{equation}
Here, $(A_k, B_k)$ is the unique element of 
\begin{equation}\label{eq:Hilbert_defn_k}
    \mathcal{H}_k :=H^2 (\D^*) \times z^{k} H^2 (\D)
\end{equation} satisfying  
\begin{equation}\label{eq:linear_system_k}
(\Id +M_k) \begin{pmatrix}
    A_k \\ B_k
\end{pmatrix} = \begin{pmatrix}
    1 \\ 0
\end{pmatrix} \, , \quad M_k = \begin{pmatrix}
    0 & P_{\D^*}  \frac{b ^* }{a ^*} \\ - z^k  P_{\D} z^{-k} \frac{b}{a} & 0
\end{pmatrix} \, .
\end{equation}
Then
\[
F_k := \frac{(B_k z^{-k}) (0)}{A_k ^*(0)} \, , \quad 
\psi_k := \arctan (-\I F_k) \, .
\]
\end{lemma}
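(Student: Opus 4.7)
The plan is to verify the four arrows in \cref{eq:tentative_Riesz_composition} by invoking results already stated in the excerpt.

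For the first arrow, I would use the change of variables \cref{eq:change_of_variables} underlying \cref{lem:change_of_vars} to identify $f$ with a function $b$ on $\T$ via $\I f(\cos \theta) = b(e^{2\I \theta})$. Because $f \in \mathbf{S}_\eta$, the function $b$ satisfies both $\norm{b}_\infty \leq 1-\eta$ and the Szeg\H{o} condition \cref{eq:Szego_cond_b}; \cref{thm:construct_a} then produces a unique $a$ with $(a,b) \in \mathbf{B}$, and hence a well-defined quotient $b/a$ on $\T$.

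The bulk of the argument lies in the second arrow. I would apply \cref{thm:RH_a_bded_below} at level $k$ to uniquely factor $(a,b) = (a_{<k}, b_{<k})(a_{\geq k}, b_{\geq k})$ with $(a_{<k}, b_{<k}) \in \mathbf{H}_{\leq k-1}$ and $(a_{\geq k}, b_{\geq k}) \in \mathbf{H}_{\geq k}$, then set $(A_k, B_k) := a_{\geq k}(\infty) \cdot (a_{\geq k}, b_{\geq k})$, which belongs to $\mathcal{H}_k$ by construction. Verification that this pair solves \cref{eq:linear_system_k} rests on two algebraic identities, derived by expanding the matrix product defining the factorization and using $a_{\geq k} a_{\geq k}^* + b_{\geq k} b_{\geq k}^* = 1$:
\[
\frac{b}{a}\, a_{\geq k} = b_{\geq k} + \frac{b_{<k}}{a}, \qquad \frac{b^*}{a^*}\, b_{\geq k} = \frac{a_{<k}^*}{a^*} - a_{\geq k}.
\]
Applying $z^{k} P_{\D} z^{-k}$ to the first kills the term $z^{-k} b_{<k}/a$, since $b_{<k}$ has Fourier support contained in $\{n \leq k-1\}$ and $a^*$ being outer ensures $1/a$ contributes no positive-mode content. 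Applying $P_{\D^*}$ to the second leaves the value of $a_{<k}^*/a^*$ at $z=0$; evaluating the factorization at infinity (using $b_{\geq k}(0) = 0$ for $k \geq 1$, with the $k=0$ case treated directly) gives $a_{<k}^*(0)/a^*(0) = 1/a_{\geq k}(\infty)$. Scaling by $a_{\geq k}(\infty)$ and rearranging produces both rows of \cref{eq:linear_system_k}, while uniqueness of $(A_k, B_k)$ in $\mathcal{H}_k$ is inherited from the invertibility of $\Id + M_k$ supplied by the proof of \cref{thm:RH_a_bded_below}.

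For the remaining two arrows I would invoke the NLFT Cauchy-type formula $F_k = (b_{\geq k} z^{-k})(0)/a_{\geq k}^*(0)$ from \cite{AlexisMnatsakanyanThiele2023}; substituting $(A_k, B_k) = a_{\geq k}(\infty) \cdot (a_{\geq k}, b_{\geq k})$ and using $a_{\geq k}(\infty) = a_{\geq k}^*(0) > 0$ makes this positive real scalar cancel in the ratio, yielding $F_k = (B_k z^{-k})(0)/A_k^*(0)$. The map $F_k \mapsto \psi_k$ is then a direct inversion of \cref{eqn:F_psi_mapping}. I expect the main obstacle to be the projection identities in the second arrow when $|a|$ is allowed to vanish on $\T$: then $b_{<k}/a$ and $a_{<k}^*/a^*$ are not bounded functions, so one must interpret them as $L^2(\T)$ elements whose Fourier support is still controlled by the $H^2$-membership of the factors together with the outerness of $a^*$, which is precisely where the unbounded-operator treatment used for \cref{thm:RH_a_bded_below} becomes essential.
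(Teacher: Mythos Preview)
Your proposal is correct and follows essentially the same route as the paper: construct $b$ from $f$ via the change of variables in \cref{lem:change_of_vars}, obtain $a$ from \cref{thm:construct_a}, invoke the factorization of \cref{thm:RH_a_bded_below}, set $(A_k,B_k):=a_{\geq k}(\infty)(a_{\geq k},b_{\geq k})$, and verify the linear system using the algebraic identities coming from $(a_{<k},b_{<k})=(a,b)(a_{\geq k}^*,-b_{\geq k})$. The paper checks that $(A_k,B_k)$ solves \cref{eq:linear_system_k} via the weak characterization of \cref{lem:defn_AB} (testing against $(A',B')\in\dense$), which is what is needed when $|a|$ is not bounded below; under the hypothesis $f\in\mathbf{S}_\eta$ your direct projection argument is a legitimate and slightly more elementary substitute, and you correctly flag that the unbounded-operator machinery would be required in the general $\mathbf{S}$ case. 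One small cleanup: the reason $a(\infty)=a_{<k}(\infty)a_{\geq k}(\infty)$ is not that $b_{\geq k}(0)=0$, but that the cross term $b_{<k}^*\,b_{\geq k}\in z H^1(\D)$ (equivalently $b_{<k}\,b_{\geq k}^*\in z^{-1}H^1(\D^*)$) for every $k$, so no separate treatment of $k=0$ is needed.
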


The proof of this lemma is included in the proof of \cref{thm:main} which is given in \cref{sec:main_proof}. The maps described in \cref{eq:tentative_Riesz_composition} provide a novel method for computing phase factors independently, which we name the Riemann-Hilbert-Weiss algorithm (\cref{alg:RHW_alg}). The name convention follows the chronological order and it reflects two important steps required by the computation of phase factors, constructing $\frac{b}{a}$ and computing $(A_k, B_k)$. For the first step, we develop a numerical method (\cref{alg:weiss_alg}) for evaluating the Fourier coefficients of $\frac{b}{a}$. We call it the Weiss algorithm, since it follows the same idea in the Guido and Mary Weiss algorithm \cite{MaryWeiss62}. As for the second step, we only need to numerically solve the linear system outlined in \cref{eq:linear_system_k}, where the existence and uniqueness of the solution is guaranteed by Riemann-Hilbert factorization results. Here $\ve_0$ refers to the vector where the first entry is 1 and all other entries are zero.

\begin{algorithm}[htbp]
\caption{Riemann-Hilbert-Weiss algorithm for finding phase factors}
\label{alg:RHW_alg}
\begin{algorithmic}
\STATE{\textbf{Input:} An even real-valued polynomial $f\in {\bf S}_{\eta}$ of degree $2d$ ($ d\geq 1$), and $\epsilon > 0$.} 
\STATE{\textbf{Output:} A set $\Psi$ of symmetric phase factors .}
\STATE{Let $b(z) = \I f(x)$, where $x$ and $z$ are related by  \cref{eq:change_of_variables}.}
\STATE{Obtain coefficients $\{\hat{c}_i\}_{i=0}^d$ using \cref{alg:weiss_alg}. }
\FOR{$k=0,\cdots,d$}
\STATE{Solve the linear system $\begin{pmatrix}
    I &  -\Xi_k\\
    -\Xi_k & I 
\end{pmatrix}\begin{pmatrix}
    \va_k\\
    \vb_k
\end{pmatrix} = \begin{pmatrix}
    \ve_0\\
    \bvec{0}
\end{pmatrix}$ for $\va_k$ and $\vb_k$, where $\Xi_k$ is the Hankel matrix of size $(d+1-k)\times (d+1-k)$ with $\{\hat{c}_i\}_{i=k}^d$ as its first column and zeros below the secondary diagonal, and $\ve_0$ is the first column of the identity matrix.}
\STATE{Compute $\psi_k = \arctan\left(-\I \frac{b_{k,0}}{a_{k,0}}\right)$, where $a_{k,0}$ and $b_{k,0}$ are the first entries of $\va_k$ and $\vb_k$.}
\ENDFOR
\RETURN $\Psi$
\end{algorithmic}
\end{algorithm}

\subsection{Weiss algorithm}\label{sec:Weiss_alg}

The construction of $a$ as shown in the proof of \cref{thm:construct_a} consists of three steps:  $R(z):=\log\sqrt{1-\abs{b(z)}^2}$;  $G(z):=R(z)-\I\mathrm{H}(R(z))$; and $a(z):= e^{G(z)}$. To construct $\frac{b}{a}$, we only need to replace the last step by evaluating $\frac{b}{a}:= b(z) e^{-G(z)} $. In our  method, we directly evaluate the Fourier coefficients of $\frac{b}{a}$ with indices ranging from $0$ to $d$ using the Fast Fourier Transform (FFT). Let $b(z)$ be a Laurent polynomial of degree $d$. Due to the correspondence between the target function $f(x)$ and $b(z)$, we have $b(z)= b(z^{-1})$ and $\norm{b}_{\infty}=\norm{f}_{\infty} \leq 1-\eta$. We also denote by $N$ the discretization parameter associated with FFT.  

We first evaluate function $\log \left(\sqrt{1-\abs{b(z)}^2}\right)$ at the $N$th roots of unity $z_j := e^{\I \frac{2j\pi}{N}}$.  Then FFT is applied to obtain $\hat{R}:= \left(\hat{r}_{-\lfloor\frac{N}{2}\rfloor}, \cdots,\hat{r}_0, \cdots, \hat{r}_{N-1-\lfloor\frac{N}{2}\rfloor}\right)$, which is an approximation of the Fourier coefficients of $R(z)$. The identities in \cref{eq:Hilbert_tranform_rule} imply that
\begin{equation*}
\begin{split}
&z^n -\I \mathrm{H}(z^n)=0, \quad n\in \mathbb{N}_+,\\
&z^{-n}-\I\mathrm{H}(z^{-n})=2 z^{-n}, \quad n\in \mathbb{N}_+,\\
& c - \I \mathrm{H}(c) = c, \quad c\in \CC.
\end{split}
\end{equation*}
To construct $G(z):=R(z)-\I \mathrm{H}(R(z))$, we only need to discard those positive frequencies and double those negative frequencies, while keeping the zero-frequency component. Thus $\hat{G} (z):=\hat{r}_0 + 2\sum_{\ell = 1}^{\lfloor \frac{N}{2}\rfloor} \hat{r}_{-\ell} z^{-\ell}$ provides a numerical approximation to $G(z)$. We evaluate $b(z) e^{-G(z)}$ at $\{z_j\}$ and apply FFT again to obtain $\hat{C}$, which is an approximation of the Fourier coefficients of $\frac{b}{a}$.

The Weiss algorithm is given in \cref{alg:weiss_alg}.  The choice for the parameter $N$ to achieve precision $\epsilon$ is given by \cref{thm:sufficient_condition_N} and a more detailed discussion is provided in \cref{sec:complexity_analysis}. 

\begin{algorithm}[htbp]
\caption{Weiss algorithm for evaluating the Fourier coefficients of $\frac{b}{a}$}
\label{alg:weiss_alg}
\begin{algorithmic}
\STATE{\textbf{Input:} A pure imaginary Laurent polynomial $b(z)$ of degree $d$ with $\norm{b}_{\infty}\leq 1-\eta<1$ and $b(z) = b(z^{-1})$, and  $\epsilon > 0$.}
\STATE{\textbf{Output:}  $(\hat{c}_0, \hat{c}_{1}, \cdots, \hat{c}_{d})$, which are approximating Fourier coefficients of $\frac{b}{a}$, .} 
\STATE{}

\STATE{Choose $N = N (d, \eta,\epsilon)$ to be the smallest power of $2$ satisfying \cref{eq:N_lower_bd}.
}
\STATE{Apply FFT to the evaluation of function $\log \left(\sqrt{1-\abs{b(z)}^2}\right)$ at $\{z_j\}$, and denote the rearranged results as $\hat{R} = \left(\hat{r}_{-\lfloor\frac{N}{2}\rfloor}, \cdots, \hat{r}_0, \cdots, \hat{r}_{N-1-\lfloor\frac{N}{2}\rfloor}\right)$.}
\STATE{Apply FFT to the evaluation of function $b(z) e^{-\hat{r}_0 - 2\sum_{\ell = 1}^{\lfloor\frac{N}{2}\rfloor} \hat{r}_{-\ell} z^{-\ell}}$ at $\{z_j\}$, and the rearranged results are denoted as $\hat{C} = \left(\hat{c}_{-\lfloor\frac{N}{2}\rfloor}, \cdots, \hat{c}_0, \cdots, \hat{c}_{N-1-\lfloor\frac{N}{2}\rfloor}\right)$.} 

\STATE{}
\RETURN $(\hat{c}_0, \hat{c}_{1}, \cdots, \hat{c}_{d})$. 
\end{algorithmic}
\end{algorithm}

\subsection{Riemann-Hilbert factorization}\label{sec:RHW_alg}
We now introduce a method for computing $(A_k, B_k)$ for any $0\le k\le d$. We also introduce a truncation parameter 
\begin{equation}\label{eq:trunc_param_def}
    n:= d-k \, .
\end{equation} 


As $aa^* + bb^* = 1$ and $\norm{b}_{\infty} \leq 1-\eta$ on $\T$, we have  $|a(z)| \geq \sqrt{1 - (1-\eta)^2}\geq\sqrt{\eta}$ on $\T$.
Together with $a\in H^2(\D^*)$, we have $\frac{1}{a}\in H^2(\D^*)$. Also $b$ is a Laurent polynomial of degree $d$. Hence, $\frac{b}{a}$ has Fourier support on $(-\infty, d]$. Let $c_j$ denote the $j$th Fourier coefficient of $\frac{b}{a}$. These Fourier coefficients turn out to be all pure imaginary.

\begin{lemma}\label{lem:imaginary_Fourier_coeffs}
Let $(a,b) \in \mathbf{B}$, and assume $\|b\|_{\infty} \leq 1 - \eta$ for some $\eta > 0$. If $b$ satisfies $b(z) = b(\overline{z})$ for all $z \in \T$ and $\I b$ is real-valued on $\T$, then $\frac{b}{a}$ has pure imaginary Fourier coefficients.
\end{lemma}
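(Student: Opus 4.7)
The plan is to reduce the lemma to a conjugation-symmetry of $a$, which can then be verified from the explicit Weiss-algorithm construction of $a$ given in the proof of \cref{thm:construct_a}. Writing $b/a = \sum_{n \in \Z} c_n z^n$ on $\T$, the coefficients $c_n$ are all purely imaginary if and only if $\overline{(b/a)(z)} = -(b/a)(\bar z)$ for a.e.\ $z \in \T$. The two hypotheses on $b$ give $b(\bar z) = b(z)$ and $\overline{b(z)} = -b(z)$ on $\T$ (the latter because $\I b$ is real-valued), so the required identity reduces to showing that $\overline{a(z)} = a(\bar z)$ for a.e.\ $z \in \T$.

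To establish this symmetry of $a$, I invoke the construction $a = e^G$ on $\D^*$ with $G = R - \I \mathrm{H}(R)$ and $R(z) = \log\sqrt{1-|b(z)|^2}$ from \cref{thm:construct_a}. Note that the assumption $\|b\|_\infty \le 1-\eta$ makes $R$ bounded on $\T$, so no subtleties arise with the Hilbert transform. Since $|b(\bar z)| = |b(z)|$ on $\T$, we have $R(\bar z) = R(z)$, so viewing $R(e^{\I \theta})$ as a function of $\theta$ it is even in $\theta$ and its Fourier series consists only of cosines. Using \cref{eq:Hilbert_tranform_rule}, $\mathrm{H}$ sends $\cos(n\theta) \mapsto \sin(n\theta)$ for $n \ge 1$ and annihilates the constant mode, so $\mathrm{H}(R)$ is an odd function of $\theta$; equivalently, $\mathrm{H}(R)(\bar z) = -\mathrm{H}(R)(z)$ on $\T$. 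Since $R$ and $\mathrm{H}(R)$ are real-valued on $\T$, this yields $G(\bar z) = R(z) + \I \mathrm{H}(R)(z) = \overline{G(z)}$, and hence $a(\bar z) = e^{\overline{G(z)}} = \overline{a(z)}$ on $\T$.

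Combining the symmetries of $b$ and $a$ gives the chain
\[
\overline{(b/a)(z)} = \frac{\overline{b(z)}}{\overline{a(z)}} = \frac{-b(z)}{a(\bar z)} = \frac{-b(\bar z)}{a(\bar z)} = -(b/a)(\bar z),
\]
from which the claim about the Fourier coefficients follows by comparing coefficients of $z^{-n}$ on both sides. The only step requiring real computation is the parity analysis of $\mathrm{H}(R)$; everything else is a careful tracking of conjugation symmetries through the explicit formula $a = e^G$.
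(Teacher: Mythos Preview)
Your proof is correct and follows essentially the same approach as the paper: both arguments use the explicit Weiss construction $a=e^{G}$ with $G=R-\I\mathrm{H}(R)$ and the invariance of $R(z)=\log\sqrt{1-|b(z)|^2}$ under $z\mapsto\bar z$ to propagate the conjugation symmetry to $a$. The paper phrases this as ``$a^{-1}$ has real Fourier coefficients'' (arguing via the real Taylor expansion of $e^{-G^*}$ at $0$) and then multiplies by the purely imaginary coefficients of $b$, whereas you establish the equivalent functional identity $\overline{a(z)}=a(\bar z)$ directly from the parity of $\mathrm{H}(R)$ and combine it with the symmetries of $b$; this is a slightly more streamlined packaging of the same idea.
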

\begin{proof}
 It suffices to show $a^{-1}$ has real Fourier coefficients, while $b$ has pure imaginary Fourier coefficients, since then the product $\frac{b}{a}$ of the bounded functions $b$ and $a^{-1}$ will then have pure imaginary Fourier coefficients. Because $b(z) = \I f(x)$ where $z \in \T$ is related to $x \in [0,1]$ through \cref{eq:change_of_variables}, then $b(\overline{z}) = b(z)$ for all $z \in \T$. Combined with the fact that $b(z)$ takes on pure imaginary values, we find that $b$ has pure imaginary Fourier coefficients. And because $\log \sqrt{1 - |b(z)|^2}$ is real-valued and is also invariant under the change of variable $z \mapsto \overline{z}$, we get $\log \sqrt{1 - |b(z)|^2}$ has real Fourier coefficients. Because $\log \sqrt{1 - |b(z)|^2} \in L^{\infty} (\T) \subset L^2 (\T)$, and $(\Id + \I \mathrm{H})$ maps real-valued $L^2 (\T)$ functions into $H^2 (\D)$, we thus have 
\[
G^* (z) := \log \sqrt{1 - |b(z)|^2}+\I  \mathrm{H} \log \sqrt{1 - |b(z)|^2}
\] belongs to $H^2 (\D)$ and has real Fourier coefficients as well\REV{, since by \eqref{eq:Hilbert_tranform_rule} the operator $\I  \mathrm{H}$ is a real-valued multiplier on the frequency side}. Since the Fourier expansion of $G$ coincides with its Taylor expansion as a holomorphic function in the unit disk $\D$, then $G(z)$ and all of its derivatives, when evaluated at $z=0$, must be real-valued. Thus the $H^{\infty} (\D)$ function 
\[
\frac{1}{a^* (z)} = e^{-G ^*(z)} 
\]
and all its derivatives, when evaluated at $z=0$, must be real-valued. Thus $(a^*)^{-1}$, and hence $a^{-1}$, must have real Fourier coefficients.
\end{proof}


Now we truncate the Hilbert space $\mathcal{H}_k$, defined in \cref{eq:Hilbert_defn_k}, to the finite dimensional space,
\begin{equation}
    \mathcal{H}^n _k =\text{span}(1,z^{-1}, z^{-2},\cdots, z^{-n}) \times \text{span}(z^{k},\cdots,z^{n+k}).
\end{equation}
We define
\begin{equation*}
    \Lambda_{\ell} :=\begin{cases}
    (z^{-\ell}, 0) & 0\leq \ell \leq n,\\
    (0, z^{\ell-(n+1)+k}) & n+1\leq \ell \leq 2n+1.
    \end{cases} 
\end{equation*}
so that $\{\Lambda_\ell\}_{\ell=0}^{2n+1}$ is an ordered basis of $\mathcal{H}^n_k$. The matrix representation of the operator on the left of \cref{eq:linear_system_k} with respect to the ordered basis $\{\Lambda_{\ell}\}_{\ell=0}^{2n+1}$ in $\mathcal{H}^n_k$ is
\begin{equation}\label{eq:approx_matrix_hankel}
    \Id + M_k= \begin{pmatrix}
    I &  -\Xi_k\\
    -\Xi_k & I 
    \end{pmatrix},
\end{equation}
where $\Xi_k$ is the $(n+1)\times (n+1)=(d-k+1) \times (d-k+1)$ Hankel matrix  with $(c_k, c_{k+1}, \cdots, c_{d})^{T}$ as its first column, that is, the $(ij)$-entry of $\Xi_k$ satisfies
\begin{equation*}
    (\Xi_k)_{ij} = \begin{cases}
        c_{i+j+k} & i+j+k\leq d, \\
        0 & \text{otherwise}.
    \end{cases}
\end{equation*}
Because $c_j$ is pure imaginary for all $j$ by \cref{lem:imaginary_Fourier_coeffs}, $\Xi_k$ is a pure imaginary matrix.

Thus, computing $(A_k, B_k)$ is equivalent to solving the linear system 
\begin{equation}\label{eqn:numerical_linear_system}
    (\Id + M_k ) \begin{pmatrix}
        \va_k\\
        \vb_k\\
    \end{pmatrix} = \begin{pmatrix}
        \ve_0\\
        \bvec{0}
    \end{pmatrix},
\end{equation}
where $\va_k, \vb_k, \bvec{0}, \ve_0\in\mathbb{R}^{n+1}$ and $\ve_0$ is the first column of the identity matrix.

We will show in \cref{sec:complexity_analysis} that, analogous to the theoretical results of \cref{eq:inverse_bded},  the matrix
$$\Id + M_k:=\begin{pmatrix}
    I & -\Xi_k \\
    - \Xi_k & I 
\end{pmatrix}$$
is non-singular. Hence, $\va_k,\vb_k$ are well-defined and we can recover the phase factor $\psi_k$ through
\begin{equation}
F_k = \frac{(B_k z^{-k}) (0)}{A_k ^*(0)} = \frac{b_{k,0}}{a_{k,0}}, \quad \psi_k = \arctan(-\I F_k).
\end{equation}
Here $a_{k,0}$ and $b_{k,0}$ are the first entries of $\va_k$ and $\vb_k$.

The accuracy of the phase factors computed by the Riemann-Hilbert-Weiss algorithm actually depends on that of Fourier coefficients obtained by the Weiss algorithm, which turns out to be determined by the discretization parameter $N$. In the following theorem, we present a sufficient condition about the choice of $N$ to achieve the desired precision. 


\begin{theorem}\label{thm:sufficient_condition_N}
    Given any $0<\eta<\frac{1}{2}$, $0<\epsilon<1$ and integer $d\geq 1$, assume $f\in \mathbf{S}_{\eta}$ to be a polynomial of degree $2d$. Let
    \begin{equation}\label{eq:N_lower_bd}    N\ge \frac {8d}{\eta}\log\left(\frac {576 d^2} {\eta^4 \epsilon}\right)
    \end{equation}
be an even integer. For any $0\leq k\leq d$, let $\psi_k$ be the $k$th phase factor of $f$ and $\hat{\psi}_k$ be the approximation to the $k$th phase factor of $f$ computed by the Riemann-Hilbert-Weiss algorithm with the discretization parameter $N$. Then
\begin{equation}
    |\psi_k-\hat{\psi}_k| \le \epsilon, \quad \forall 0\leq k \leq d \, .
\end{equation}
\end{theorem}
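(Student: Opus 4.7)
The plan is to trace the error through the two stages of the Riemann-Hilbert-Weiss algorithm and then calibrate $N$. In the Weiss stage (\cref{alg:weiss_alg}), two FFTs sandwiched around a Hilbert-transform truncation produce approximate Fourier coefficients $\hat c_j$ of $b/a$. These enter the Hankel matrix $\Xi_k$ in the Riemann-Hilbert stage, after which $\psi_k$ is recovered from $\arctan(-\I b_{k,0}/a_{k,0})$. At each stage I would bound the output error in terms of the input error and algorithm parameters, and then compose.

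The analytic input is an explicit annulus of analyticity for $R(z)=\log\sqrt{1-b(z)b^*(z)}$. Because $z^d b(z)$ is a polynomial of degree $2d$ bounded by $1-\eta$ on $\T$, a Bernstein-type inequality yields $|b(z)|\le (1-\eta)|z|^d$ for $|z|\ge 1$ and symmetrically for $|z|\le 1$. Choosing $\rho = 1 + c\eta/d$ with a small absolute constant $c$ keeps $|b(z)|\le 1-\eta/2$ on the annulus $\{\rho^{-1}\le |z|\le \rho\}$, so $R$ extends analytically there with $\sup|R|=\Or(\log(1/\eta))$. Hence the Fourier coefficients decay geometrically, $|r_\ell|\lesssim \log(1/\eta)\,\rho^{-|\ell|}$, and the usual aliasing bound gives $|\hat r_\ell-r_\ell|\lesssim \log(1/\eta)\,\rho^{|\ell|-N}/(1-\rho^{-1})$. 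Summing these and combining with the truncation of Fourier modes past degree $N/2$ yields
\[
\|G-\hat G\|_{L^\infty(\T)}\lesssim \frac{d}{\eta}\log(1/\eta)\,\rho^{-N/2}.
\]
A pointwise Lipschitz estimate $|e^{-\hat G}-e^{-G}|\le \eta^{-1/2}|G-\hat G|$ (using $|1/a|\le \eta^{-1/2}$ on $\T$) together with a second aliasing analysis on $b\cdot e^{-\hat G}$ then gives $\max_{0\le j\le d}|\hat c_j-c_j|\lesssim \frac{d}{\eta^{3/2}}\log(1/\eta)\,\rho^{-N/2}$.

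To pass from coefficients to phase factors, I would use a uniform invertibility bound $\|(\Id+M_k)^{-1}\|\le 1/\eta$ obtained by promoting the Hilbert-space estimate of \cref{sec:existing_rhf} to the finite Hankel system $\mathcal H^n_k$ with $n=d-k$, via an orthogonal-projection argument that introduces no $n$-dependence. The entrywise perturbation of $\Xi_k$ is bounded by $\max_j|\hat c_j-c_j|$, so the solution of \cref{eqn:numerical_linear_system} changes by at most $\Or(\eta^{-5/2}d\,\rho^{-N/2})$ in the sup norm. Since $\arctan$ is $1$-Lipschitz and a lower bound $|a_{k,0}|\gtrsim \sqrt\eta$ keeps the quotient $b_{k,0}/a_{k,0}$ stable, the end-to-end bound takes the form $|\psi_k-\hat\psi_k|\lesssim \eta^{-\alpha}d\,\rho^{-N/2}$ for some modest exponent $\alpha$. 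Setting this $\le \epsilon$ and using $\log\rho\ge c\eta/(2d)$ yields a sufficient condition $N\ge \frac{4d}{c\eta}\log\bigl(d^2/(\eta^4\epsilon)\bigr)$, and careful tracking of absolute constants at each step matches the stated bound $N\ge (8d/\eta)\log(576d^2/(\eta^4\epsilon))$. The main obstacle is precisely this bookkeeping: $\rho$ must be tight enough that $\log\rho$ absorbs the full $\eta/d$ gap, loose enough that $|b|$ stays uniformly below $1$ on the annulus, and the discrete inverse bound must hold uniformly in $n$; once these uniform bounds are in place the Lipschitz bounds for $\exp$ and $\arctan$ and the two aliasing estimates are routine.
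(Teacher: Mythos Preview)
Your overall two-stage strategy---bound the Weiss output $|\hat c_j-c_j|$ via analyticity on an annulus of width $\sim\eta/d$, then propagate through the linear system---matches the paper, and the Weiss-stage analysis is essentially the paper's \cref{log1bbound}--\cref{thm:N_estimate}. The gap is in the Riemann--Hilbert stage.

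You invoke ``the Hilbert-space estimate of \cref{sec:existing_rhf}'' to get $\|(\Id+M_k)^{-1}\|\le 1/\eta$. But \cref{sec:existing_rhf} is the Banach contraction argument, which needs $\|b/a\|_\infty<1$, equivalently $\|f\|_\infty<1/\sqrt{2}$; for $\eta<1-1/\sqrt{2}\approx 0.293$ the contraction fails and no bound on $(\Id+M_k)^{-1}$ follows. This is exactly the obstruction the paper is built to remove. The replacement (\cref{lm:cond_Hk}) is structural: because $b$ is purely imaginary on $\T$ and $a$ has real Fourier coefficients (\cref{lem:imaginary_Fourier_coeffs}), the Hankel matrix $\Xi_k$ has purely imaginary entries, hence $\Xi_k^*=-\Xi_k$. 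The singular values of $\Id+M_k$ are then $\sqrt{1+\lambda_j^2}\ge 1$, giving $\|(\Id+M_k)^{-1}\|\le 1$ uniformly in $\eta$ and in the truncation dimension. Equally important, the same anti-Hermitian property holds for the \emph{computed} matrix $\hat\Xi_k$ (the FFT of a real even function still has real coefficients, so $\hat c_j$ stays purely imaginary), which gives $\|(\Id+\hat M_k)^{-1}\|\le 1$ as well---you need both to run the resolvent identity $(\Id+M_k)^{-1}-(\Id+\hat M_k)^{-1}=(\Id+M_k)^{-1}(\hat M_k-M_k)(\Id+\hat M_k)^{-1}$.

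Two smaller points. First, your lower bound $|a_{k,0}|\gtrsim\sqrt\eta$ is weaker than needed; the paper obtains $|a_{k,0}|\ge\eta$ from $a_{k,0}=\ve_0^\top(I-\Xi_k^2)^{-1}\ve_0\ge 1/(1+\|\Xi_k\|^2)\ge 1/(1+\|b/a\|_\infty^2)=2\eta-\eta^2$, again using the anti-Hermitian structure to make $I-\Xi_k^2$ positive definite. Second, the perturbation $\|\hat M_k-M_k\|$ is of order $d\cdot\max_j|\hat c_j-c_j|$ (Frobenius over a triangular $(d-k+1)$-block), which feeds the $d^2$ inside the logarithm once you set $\epsilon'=\epsilon\eta^2/(12d)$. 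With the sharp inverse bound $\le 1$ the bookkeeping closes exactly at \cref{eq:N_lower_bd}; with your $1/\eta$ bound (even were it available) the exponent on $\eta$ in the logarithm would not match.
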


The proof of this theorem is presented in \cref{sec:complexity_analysis}.

Next, we present some numerical experiments to demonstrate the ability of Riemann-Hilbert-Weiss algorithm.

\subsection{Numerical experiment}\label{sec:numerical_experiments}
We initiate our experiment by randomly generating a phase factor sequence $\Psi$ of length $1000$. To ensure that the generated phase factors are actually the maximal solution for certain target function, we first normalize $\Psi$ to have 1-norm bounded by a small absolute constant. Subsequently, we apply selective scaling to the elements of $\Psi$: each $\psi_k$ is scaled by $10^{-7}$ for $1\leq k \leq 333$ and $667\leq k \leq 1000$. The magnitudes of the adjusted phase factors $\Psi$ are depicted in \cref{fig:phi_per_index}. 

The target function $f$ is chosen to be associated with the constructed phase factors $\Psi$. We use two methods:  Newton's method (which has been numerically demonstrated to be robust even for small $\eta$ \cite{DongLinNiEtAl23}), and \cref{alg:RHW_alg} with a large discretization parameter $N = 10^6$ to find the phase factors given target function $f$. We also present a comparative analysis of the computation errors for each phase factor in \cref{fig:error_per_index}, demonstrating that the accuracy achieved using the Riemann-Hilbert-Weiss  method is comparable to that obtained with Newton's method.

To show that \cref{alg:RHW_alg} is also robust for small $\eta$, we consider the target function $f(x) = 0.999\cos(\tau x)$ with $\tau = 1000$, which originates from the application to Hamiltonian simulation and $\eta=0.001$. The Chebyshev series expansion, known as the Jacobi-Anger expansion~\cite{LowChuang2017}, is commonly employed to approximate this target function:
\begin{equation}
        \label{eq;Jacobi-Anger}
        0.999\cos(\tau x)=0.999\left(J_0(\tau)+2\sum_{k\text{ even}}(-1)^{k/2} J_k(\tau) T_k(x)\right),
\end{equation}
where $J_k$'s are the Bessel functions of the first kind. As a result, by truncating the Jacobi-Anger series, a polynomial approximation of the target function can be obtained. To ensure that the truncation error is upper-bounded by $\epsilon_0$, it is sufficient to choose the degree of truncation as $d = \lceil e|\tau|/2 + \log(1/\epsilon_0)\rceil$.

We apply both Riemann-Hilbert-Weiss algorithm and Newton's method to find phase factors given target function $f$, and  the discretization parameter is chosen to be $N=10^7$. We plot the magnitude of phase factors obtained by Riemann-Hilbert-Weiss algorithm in \cref{fig:phi_per_index_HS}, as well as the difference of phase factors obtained by both methods in \cref{fig:error_per_index_HS}. We also assess the accuracy of the results by evaluating $\Im [u_d(x,\Psi)]$ at the Chebyshev nodes of $T_{500}(x)$ and comparing these values to the exact value of the target function, $0.999\cos(\tau x)$. The resulting errors are depicted in \cref{fig:error_per_x_HS}. The findings clearly indicate that the accuracy of the Riemann-Hilbert-Weiss algorithm is on par with Newton's method in a nearly fully coherent regime.

\begin{figure}
\centering
\begin{subfigure}{0.45\textwidth}
  \includegraphics[width=60mm]{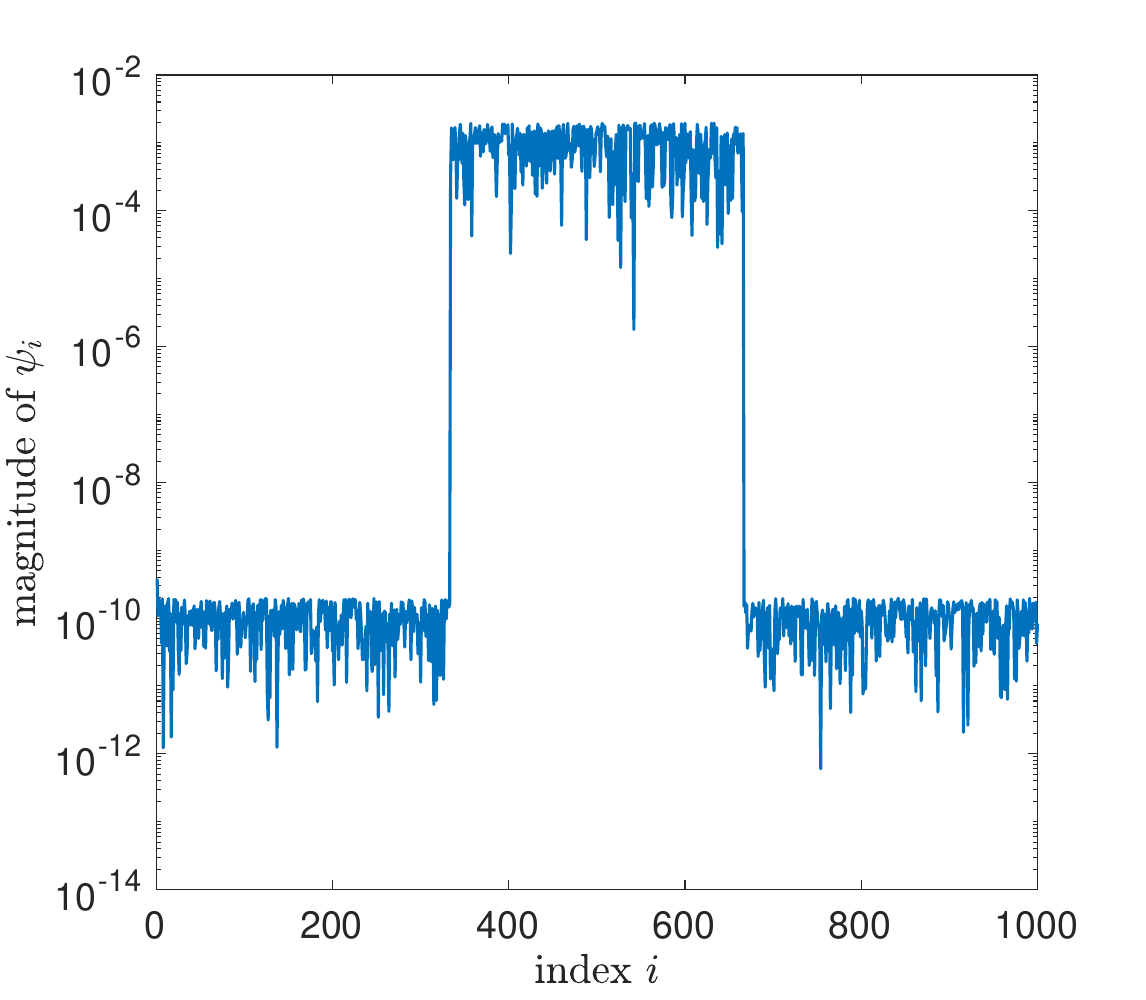}
  \caption{The magnitude of phase factors $\Psi$ for each index $i$.}
  \label{fig:phi_per_index}
\end{subfigure}
\hfill
\begin{subfigure}{0.45\textwidth}
  \includegraphics[width=60mm]{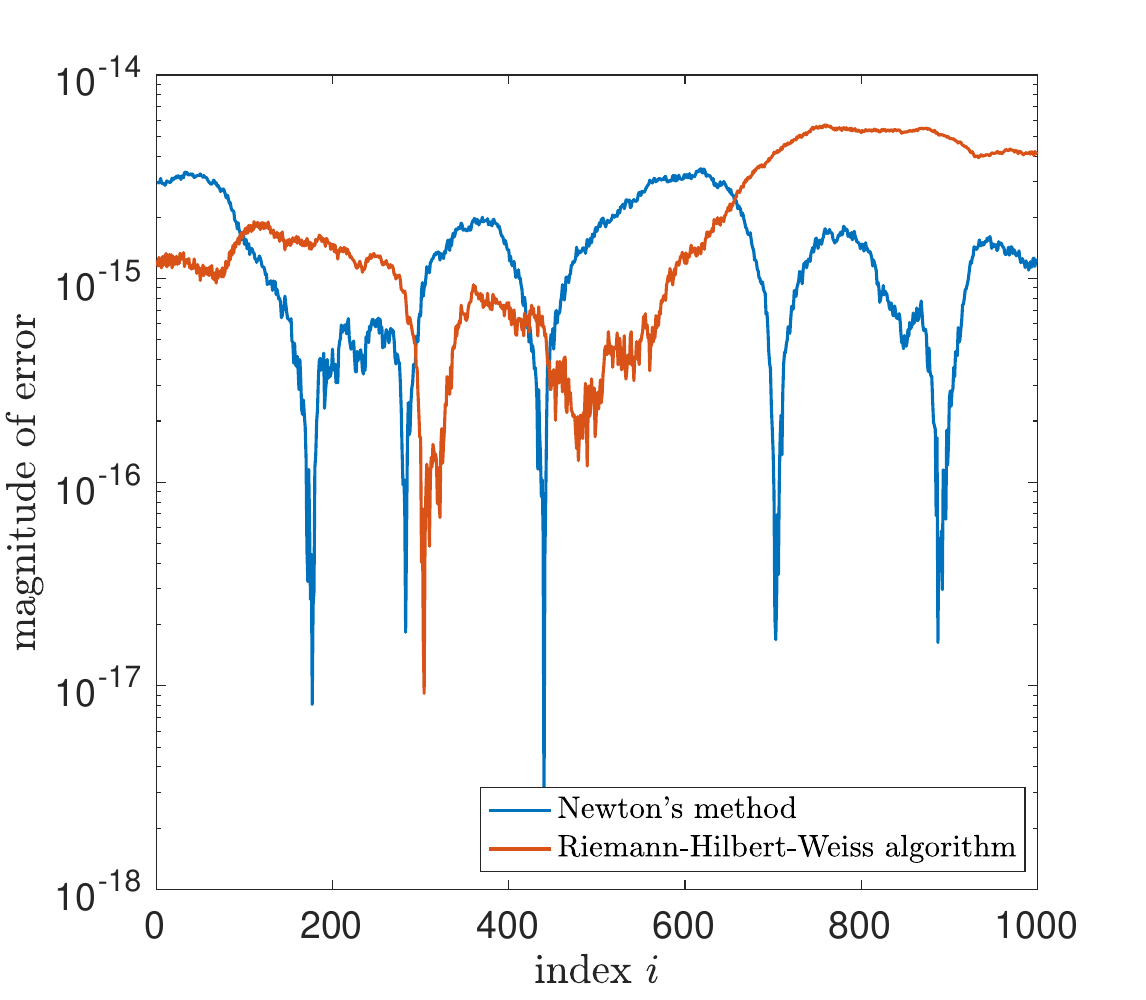}
   \caption{The error in phase factors $\Psi$ obtained by two methods.}
  \label{fig:error_per_index}
\end{subfigure}
\caption{The performance of \cref{alg:RHW_alg} and Newton's method to find phase factors for the randomly generated phase factors $\Psi$.}
\label{fig:numerics}
\end{figure}

\begin{figure}
\centering
\begin{subfigure}{0.45\textwidth}
  \includegraphics[width=60mm]{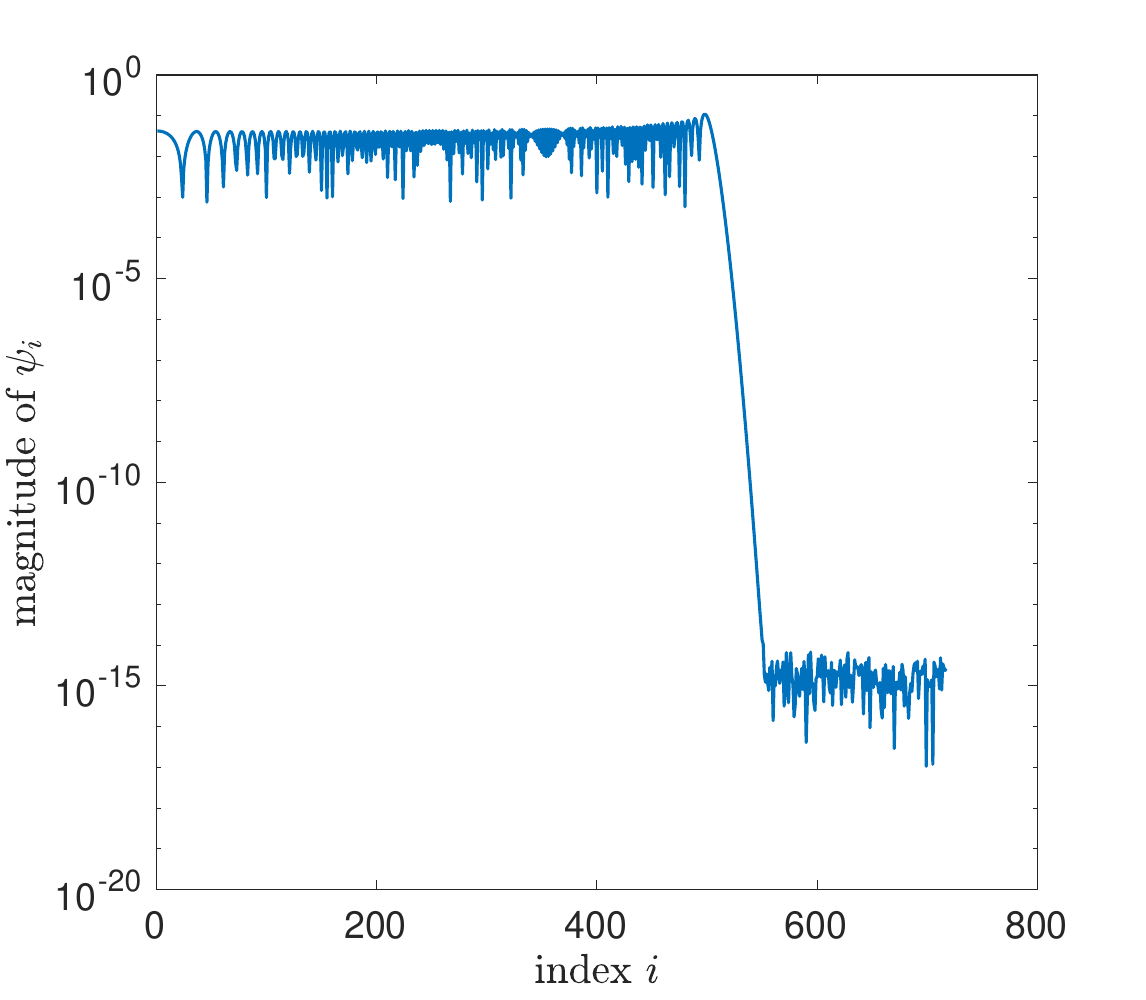}
  \caption{The magnitude of phase factors $\Psi$ for each index $i$.}
  \label{fig:phi_per_index_HS}
\end{subfigure}
\hfill
\begin{subfigure}{0.45\textwidth}
  \includegraphics[width=60mm]{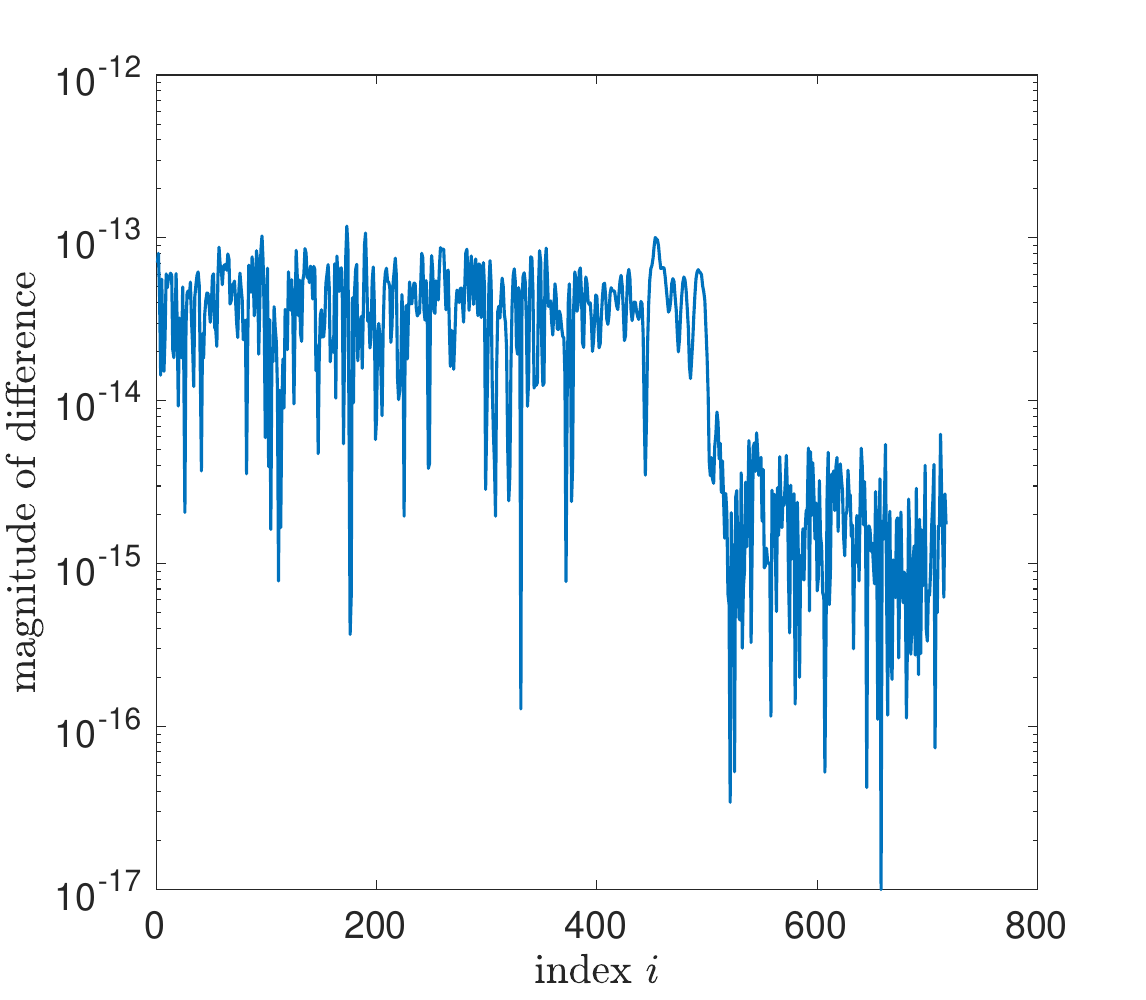}
   \caption{The difference in phase factors $\Psi$ obtained by \cref{alg:RHW_alg} and Newton's method.}
  \label{fig:error_per_index_HS}
\end{subfigure}
\caption{The performance of \cref{alg:RHW_alg} and Newton's method to find phase factors for the target function $0.999\cos(\tau x)$.}
\label{fig:numerics_HS}
\end{figure} 

\begin{figure}
\centering
  \includegraphics[width=60mm]{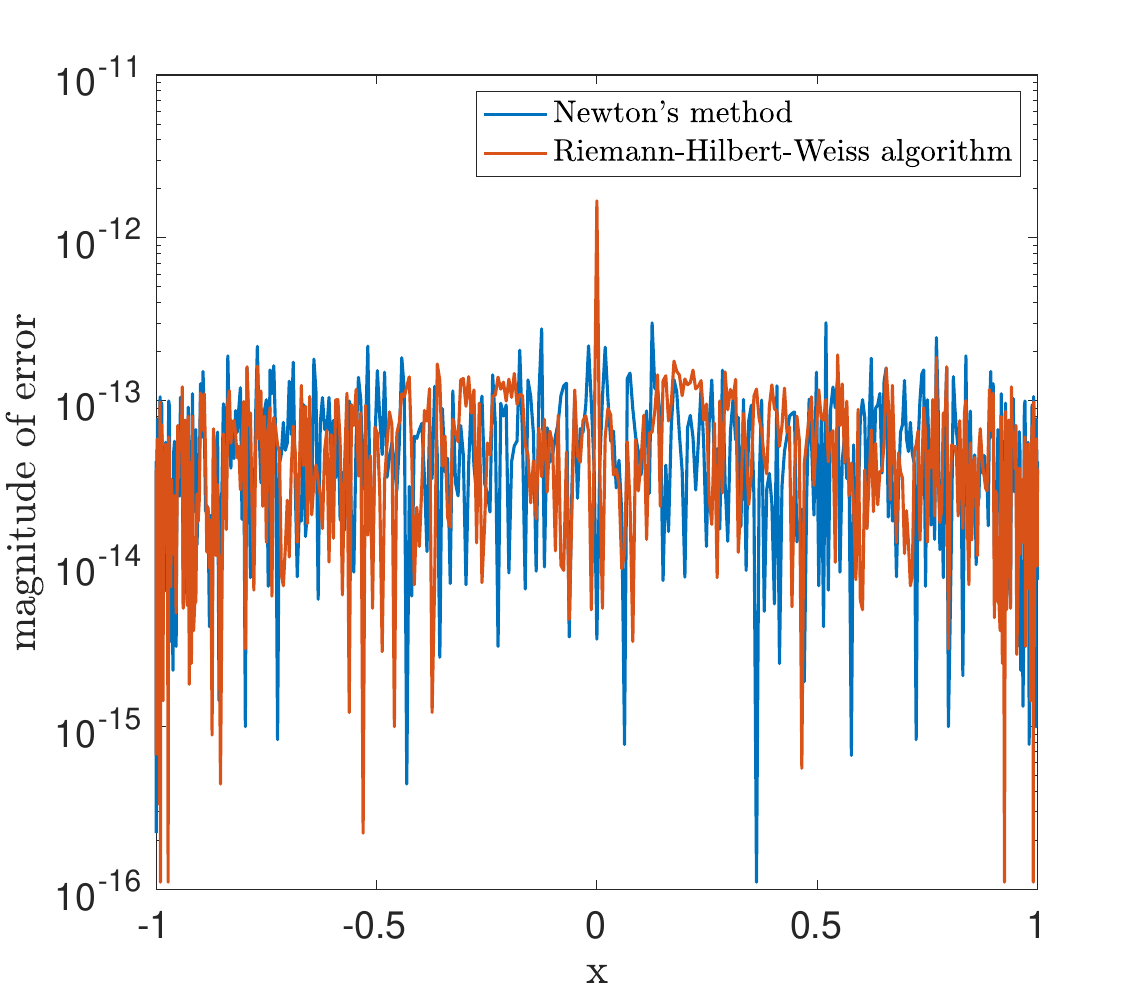}
   \caption{The difference $\abs{\Im[u_d(x,\Psi)] - f(x)}$ over interval $[0,1]$ for the phase factors obtained by \cref{alg:RHW_alg} and Newton's method for the target function $0.999\cos(\tau x)$.}
  \label{fig:error_per_x_HS}
\end{figure}

\section{Infinite Quantum signal processing using square summable sequences. }\label{sec:main_proof}

In this section, we present the proof of \cref{thm:main} which provides a representation of measurable functions by a square summable sequence. As mentioned previously in \cref{sec:existing_rhf} and \cref{sec:outline}, we prove \cref{thm:RH_a_bded_below}, extending the Riemann-Hilbert factorization result \cite[Theorem 11]{AlexisMnatsakanyanThiele2023} in nonlinear Fourier analysis to all $(a,b) \in \mathbf{B}$. Our key idea here is to argue that the operator $M$ arising in \cref{eq:naive_operator} is antisymmetric on the Hilbert space $H^2 (\D^*) \times H^2 (\D)$, meaning it possesses pure imaginary spectrum and hence $\Id + M$ is invertible. If $(a,b) \in \mathbf{B}$ satisfies $|a| > \eta'$ for some $\eta'>0$, then $\frac{b}{a}$ is bounded on $\T$ and the above reasoning is rigorous. However, if $(a,b) \in \mathbf{B}$, then $\frac{b}{a}$ may not be integrable and $M$ may not be well-defined on $H^2 (\D^*) \times H^2(\D)$. This leads to the technicalities in this section as we must appropriately interpret $M$ at the endpoint case $\eta' =0$. Indeed, the simple operator $M$ in \cref{eq:naive_operator} must be interpreted as a densely defined operator as in \cref{eq:def_extension_M}, which involves truncating $a$ from below and taking a weak limit the in the truncation parameter. And it's by using the theory of unbounded operators \cite[Chapter 13]{rudin} that we argue that $M$ is an antisymmetric unbounded operator to then get $\Id + M$ has bounded inverse. The arguments in the case $|a| > \eta'$ for some $\eta' >0$, which follows from $f \in \mathbf{S}_{\eta}$ for some $\eta > 0$, are considerably simpler as the weak limits coincide with the simple operator in \cref{eq:naive_operator}. For instance, \cref{lem:iM_self_adjt} is near immediate if we take $M$ as in \cref{eq:naive_operator}. Thus any reader uninterested in the endpoint case $\eta =0$ may read this section while ignoring all weak limits. 

This section is organized as follows. We first introduce a vector space 
\[
\ddense \subset H^2 (\D^*) \times H^2 (\D)
\]
and an appropriate unbounded analogue of $M$ on $\ddense$. We show the operator $M$ has pure imaginary spectrum, meaning 
\[
\Id  \pm M : \ddense \to H^2 (\D^*) \times H^2 (\D)
\] is a linear bijection with bounded inverse, which allows us to prove the Riemann-Hilbert factorization \cref{thm:RH_a_bded_below}.
Next, we prove the Lipschitz estimate in \cref{thm:Lip_bds} which bounds the infinity distance of sequences by the norm distance in $\frac{b}{a}$. Then we discuss that for every element $(a,b) \in \mathbf{B}$, after the change of variables specified in \cref{lem:change_of_vars}, $b$ corresponds to the target function $f$, while $a$ corresponds to the maximal solution proposed in \cite{WangDongLin2021}. Furthermore, because $a$ is the unique function for which one has equality in the Plancherel inequality \cref{eq:Plancherel_ineq}, we then get \cref{plancherel} for iQSP. Finally, combining all these results with \cref{thm:RH_a_bded_below}, we prove \cref{thm:main}. 

\subsection{Inversion of an unbounded operator using spectral theory}\label{subsection:unbded_op}

Fix $(a,b) \in \mathbf{B}$.
Define the spaces
\[
\Hil := H^2 (\D^*) \times  H^2 (\D) \,  
\]
\begin{equation}\label{eq:dense_subset}
\dense := a H^2 (\D^*) \times a^*  H^2 (\D) 
\end{equation}
with norm  
\begin{equation}\label{eq:defn_norm}
\left \| 
    f 
 \right \| ^2 := \int\limits_{\T} |f_1|^2 + |f_2|^2 
\end{equation}
induced by the standard inner product $\langle \cdot, \cdot \rangle$ on $L^2(\T)\times L^2(\T)$.
As $a^*$ is bounded and outer, we have
\[\dense \subset \Hil\subset L^2(\T)\times L^2(\T).\]
The space $\Hil$ is closed. The space $\dense$ is dense in $\Hil$ by Beurling's theorem \cite[Chapter 2, Corollary 7.3]{garnett}, which says that the set of all polynomials in $z$, when multiplied by an outer function on $\D$, is dense in $H^2 (\D)$. 

Let 
\[
\mathcal{P}_{\Hil} = \begin{pmatrix}
    P_{\D} & 0 \\ 0 & P_{\D^*}
\end{pmatrix}
\]
denote the $L^2 \times L^2$ orthogonal projection onto $\Hil$. 

To simplify the notation, in this subsection we slightly abuse the notation, drop the prime notation and identify $\eta'$ with $\eta$.

Within the Hilbert space $\Hil$, define $\ddense$ to be the subspace of elements $f \in \Hil$ for which 
\begin{equation}\label{eq:def_set_symm}
\mathcal{P}_{\Hil} \begin{pmatrix}
    0 & \frac{b^*}{a^* _{\eta}} \\ - \frac{b}{a _{\eta}} & 0
\end{pmatrix} f
\end{equation}
converges weakly in $\Hil$ as $\eta \to 0$, where $a_{\eta} ^*$ is defined to be the unique outer function on $\D$ whose boundary values on $\T$ satisfy
\begin{equation} \label{eq:outer_approx}
\log |a_{\eta}| := \mathbf{1}_{ \{ |a| > \eta \} } \log |a| \, .
\end{equation}

If $|a|$ is bounded below uniformly away from $0$, then $\ddense = \Hil$ and for all $\eta$ sufficiently small \cref{eq:def_set_symm} equals
\[
\mathcal{P}_{\Hil} \begin{pmatrix}
    0 & \frac{b^*}{a^*} \\ - \frac{b}{a} & 0
\end{pmatrix} f \, .
\]
For general outer $a$, we may not have $\ddense = \Hil$.
\begin{lemma}\label{lem:limit_Dn}
    If $(a,b) \in \mathbf{B}$, then $\dense \subset \ddense$, and for any $f \in \dense$ the weak limit of \cref{eq:def_set_symm}, which is also a strong limit, equals
\begin{equation}\label{eq:limit_Dn}
 \mathcal{P}_{\Hil} \begin{pmatrix}
    0 & \frac{b^*}{a^*} \\ -\frac{b}{a} & 0
\end{pmatrix} f  \, .
\end{equation} 
\end{lemma}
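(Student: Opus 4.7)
The plan is to establish the lemma by direct computation, reducing the claim to an $L^2(\T)$ convergence statement which is then settled by outer function theory together with a weak-type bound for the Hilbert transform. I would fix $f \in \dense$ and write $f = (a g_1, a^* g_2)$ with $g_1 \in H^2(\D^*)$ and $g_2 \in H^2(\D)$. Applying the operator in \cref{eq:def_set_symm} to $f$ and cancelling the outer factor in the denominator then yields
\[
\mathcal{P}_{\Hil} \begin{pmatrix} \tfrac{a^*}{a^*_\eta}\, b^* g_2 \\ -\tfrac{a}{a_\eta}\, b g_1 \end{pmatrix},
\]
and \cref{eq:limit_Dn} corresponds formally to replacing the multipliers $a^*/a^*_\eta$ and $a/a_\eta$ by $1$. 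Since $\mathcal{P}_{\Hil}$ is bounded, it suffices to prove strong $L^2(\T)$ convergence of the two entries inside the projection; this will simultaneously establish $f \in \ddense$ and identify the weak limit as a strong limit.

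The central step is to show that $a/a_\eta \to 1$ and $a^*/a^*_\eta \to 1$ in measure on $\T$, under the uniform bound $|a/a_\eta|, |a^*/a^*_\eta| \leq 1$. The bound follows from $\log|a_\eta| \geq \log|a|$ pointwise, which holds because $\log|a| \leq 0$ on $\T$ and $\log|a_\eta| = \mathbf{1}_{\{|a|>\eta\}}\log|a|$ is either $\log|a|$ or $0$. Since $a$ and $a_\eta$ are outer on $\D^*$, I can write $a/a_\eta = e^{h_\eta}$ with
\[
\Re(h_\eta) = \log|a| - \log|a_\eta| = \mathbf{1}_{\{|a|\leq\eta\}}\log|a|
\]
on $\T$, and $\Im(h_\eta)$ equal, up to the sign dictated by \cref{eqn:hilbert_HpDstar}, to $\mathrm{H}(\Re(h_\eta))$. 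The Szegő condition \cref{eq:Szego_cond_b}, via the identity $|a|^2 = 1 - |b|^2$, is equivalent to $\log|a| \in L^1(\T)$ and $|a| > 0$ almost everywhere; hence $\Re(h_\eta) \to 0$ in $L^1(\T)$ by the dominated convergence theorem with envelope $|\log|a||$, and the weak-type bound $\mathrm{H}: L^1(\T) \to L^{1,\infty}(\T)$ forces $\Im(h_\eta) \to 0$ in measure. Continuity of the exponential combined with $|a/a_\eta| \leq 1$ then yields $a/a_\eta \to 1$ in measure on $\T$, and symmetrically for $a^*/a^*_\eta$.

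With this in hand, the products $(a/a_\eta)\, b g_1$ and $(a^*/a^*_\eta)\, b^* g_2$ are pointwise dominated by $|bg_1|, |b^* g_2| \in L^2(\T)$ and converge in measure to $bg_1$ and $b^* g_2$ respectively. A standard subsequence argument combined with the dominated convergence theorem upgrades this to strong $L^2(\T)$ convergence of both entries: any subsequence admits a further subsequence converging almost everywhere, for which the dominated convergence theorem applies directly, forcing the full net to converge to the same limit. Applying the bounded projection $\mathcal{P}_{\Hil}$ preserves the strong convergence and produces the desired limit $\mathcal{P}_{\Hil}(b^* g_2,\, -bg_1)^{\top}$, which upon reinserting the identities $b^* g_2 = (b^*/a^*)(a^* g_2)$ and $b g_1 = (b/a)(a g_1)$ is exactly \cref{eq:limit_Dn}. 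The principal obstacle throughout is that $|a|$ may vanish on subsets of $\T$ of positive measure, so $b/a$ and $b^*/a^*$ need not lie in $L^2(\T)$ individually; this forces one to interpret \cref{eq:limit_Dn} via the honest $L^2$ products $(b/a)(ag_1) = bg_1$ and $(b^*/a^*)(a^* g_2) = b^* g_2$, and is precisely why the Szegő integrability $\log|a| \in L^1(\T)$ together with the weak-$(1,1)$ boundedness of $\mathrm{H}$ is the exact hypothesis needed to push the limit through $e^{h_\eta}$.
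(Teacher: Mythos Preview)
Your proof is correct and follows essentially the same approach as the paper's: factor $f = (ag_1, a^*g_2)$ to cancel the denominators, reduce to showing $a/a_\eta \to 1$ in measure under the uniform bound $|a/a_\eta|\le 1$, establish this via the Szeg\H o condition $\log|a|\in L^1(\T)$ together with the weak-type bound for $\mathrm{H}$, and conclude by dominated convergence. The paper wraps the final step in a contradiction argument while you argue directly via the subsequence principle, but the key ingredients and their order are identical. One minor wording slip: your closing remark that ``$|a|$ may vanish on subsets of $\T$ of positive measure'' contradicts the Szeg\H o condition you just used; the genuine obstruction is only that $1/|a|$ need not be bounded.
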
 
\begin{proof}
For any $f \in \dense$ there exists an $h \in \Hil$ for which
\[
f = \begin{pmatrix}
    a & 0 \\ 0 & a^*
\end{pmatrix}h \, . 
\]
Noting that \cref{eq:limit_Dn} equals
\[ \mathcal{P}_{\Hil} \begin{pmatrix}
    0 & b^* \\ -b & 0
\end{pmatrix} h \ , 
\]
we compute 
\[
\left \| \mathcal{P}_{\Hil} \begin{pmatrix}
    0 & \frac{b^*}{a^* _{\eta}} \\ - \frac{b}{a _{\eta}} & 0
\end{pmatrix} f  - \mathcal{P}_{\Hil} \begin{pmatrix}
    0 & b^* \\ -b & 0
\end{pmatrix} h\right \| \leq  \left \| \begin{pmatrix}
    0 & \frac{b^*}{a^* _{\eta}} \\ - \frac{b}{a _{\eta}} & 0
\end{pmatrix} f  -  \begin{pmatrix}
    0 & b^* \\ -b & 0
\end{pmatrix} h\right \|
\]
\[
= \left \| \begin{pmatrix}
    0 & b^* \frac{a^*}{a^* _{\eta}} \\ - b \frac{a}{a _{\eta}} & 0
\end{pmatrix} h  -  \begin{pmatrix}
    0 & b^* \\ -b & 0
\end{pmatrix} h\right \| = \left \| \begin{pmatrix}
    0 & b^*  \\ - b  & 0
\end{pmatrix} \begin{pmatrix}
    \frac{a}{a _{\eta}} - 1 & 0  \\ 0 & \frac{a^*}{a^* _{\eta}} - 1
\end{pmatrix} h\right \| \, .
\]
Using the fact that $|b| \leq 1$ a.e.\ on $\T$, the square of this last term is at most
\[
 \int\limits_{\T} \left ( |h_1|^2 + |h_2|^2 \right ) \left | \frac{a}{a_{\eta}} - 1 \right |^2 \, . 
\]
Since $|h_1|^2 + |h_2|^2$ is integrable, if 
we show that for every integrable nonnegative function $f$ we have
\begin{equation}\label{eq:bds_a_epsilon}
 \lim\limits_{\eta \to 0} \int\limits_{\T} f \left | \frac{a}{a_{\eta}} - 1 \right |^{\ell} = 0
\end{equation}
for all integers $\ell \geq 1$,
then we will get strong convergence of \cref{eq:def_set_symm} to \cref{eq:limit_Dn}, which will complete the proof.

Assume to the contrary that \cref{eq:bds_a_epsilon} fails, meaning there exists $\zeta > 0$ and a sequence $\eta_{m} \to 0$ for which 
\[
 \lim\limits_{\eta_m \to 0} \int\limits_{\T} f \left | \frac{a}{a_{\eta_m}} - 1 \right |^{\ell} >\zeta \, .
\]
Since the Hilbert transform $\mathrm{H}$ maps $L^1 (\T)$ into $L^{1, \infty} (\T)$, there exists an absolute constant $C$ such that for all $\nu, \eta > 0$ we have 
 \[
| \{ |\mathrm{H} \log |a| \mathbf{1}_{ \{|a| \leq \eta\} }| > \nu   \} | \leq \frac{C }{\nu} \int\limits_{\T} \log |a| \mathbf{1}_{ \{|a| \leq \eta\}  } \, . 
 \]
Thus as $m \to \infty$, we have 
\[
\mathrm{H} \log |a| \mathbf{1}_{ \{|a| \leq \eta_m\} } \to 0
\]
in measure. By passing to a subsequence, we may assume without loss of generality that this convergence holds pointwise almost everywhere on $\T$. 
From \cref{eq:outer_approx}, we also get $\lim\limits_{\eta \to 0} |\frac{a}{a_{\eta}}| = 1$ almost everywhere on $\T$

Combining with \cref{eq:outer_approx}, we obtain the pointwise almost everywhere equality \[
\lim\limits_{m \to \infty} \frac{a}{a_{\eta_m}} = \lim\limits_{m \to \infty} e^{(\Id-\I\mathrm{H}) \log |a| \mathbf{1}_{ \{|a| \leq \eta_m\} }} = 1 \, ,
\]
 as well as the inequality $| \frac{a}{a_{\eta_m}}| \leq 1$ almost everywhere. By the dominated convergence theorem, we obtain
\[
 \lim\limits_{\eta_m \to 0} \int\limits_{\T} f \left | \frac{a}{a_{\eta_m}} - 1 \right |^{\ell} =0 \, ,
\]
contradicting our assumption.

 \end{proof}
 Thus $\ddense$ is dense in $\Hil$. Define the unbounded operator $M: \ddense \to \Hil$ by
\begin{equation}\label{eq:def_extension_M}
M f := \lim\limits_{\eta \to 0} \mathcal{P}_{\Hil}  \begin{pmatrix}
    0 & \frac{b^*}{a^* _{\eta}} \\ - \frac{b}{a _{\eta}} & 0
\end{pmatrix} f \, ,
\end{equation}
where the limit above is understood to be in the weak sense in $\Hil$. Note $M$ has domain $\mathcal{D} (M) := \ddense$, and when $|a|$ is bounded uniformly below on $\T$, then this definition of $M$ coincides with the operator in \cref{eq:naive_operator}. 

We refer to \cite[Chapter 13]{rudin} for the spectral theory of unbounded operators on a Hilbert space.

\begin{lemma}\label{lem:iM_self_adjt}
Let $n \in \Z$ and let $(a,b) \in \mathbf{B}$.
The unbounded operator $i M$ is self-adjoint and hence has real spectrum. 
\end{lemma}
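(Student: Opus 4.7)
The plan has two parts: first establishing symmetry of $iM$ on $\ddense$, and then upgrading symmetry to self-adjointness.

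For symmetry, I would unpack the weak-limit definition \cref{eq:def_extension_M} directly. Write $T_\eta := \begin{pmatrix} 0 & b^*/a_\eta^* \\ -b/a_\eta & 0 \end{pmatrix}$. Since $|a_\eta| \ge \eta$ on $\T$, multiplication by $T_\eta$ is a bounded operator on $L^2(\T)\times L^2(\T)$; a short calculation shows the matrix symbol is skew-Hermitian on $\T$ (using $\overline{b/a_\eta} = b^*/a_\eta^*$ on $\T$), so $iT_\eta$ is bounded and self-adjoint on $L^2(\T)\times L^2(\T)$. For any $f, g \in \ddense$, this gives
\[
\langle iMf, g\rangle = \lim_\eta \langle \mathcal{P}_\Hil iT_\eta f, g\rangle = \lim_\eta \langle iT_\eta f, g\rangle = \lim_\eta \langle f, iT_\eta g\rangle = \langle f, iMg\rangle,
\]
where the outer equalities use the weak-limit definition of $\ddense$, the middle equality uses self-adjointness of $iT_\eta$ on the larger space, and the projection $\mathcal{P}_\Hil$ is absorbed freely since both arguments already lie in $\Hil$. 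Thus $iM$ is symmetric.

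To promote symmetry to self-adjointness I would invoke the characterization from \cite[Chapter 13]{rudin} that a densely defined closed symmetric operator $A$ is self-adjoint iff $\mathrm{Ran}(A \pm i\Id) = \Hil$, equivalently iff both deficiency indices $\dim \ker((iM)^* \mp i\Id)$ vanish. The natural strategy is to realize $iM$ as the strong resolvent limit of the bounded self-adjoint operators $iM_\eta := \mathcal{P}_\Hil iT_\eta \mathcal{P}_\Hil$ on $\Hil$: if $(iM_\eta - \lambda)^{-1}$ converges strongly in $\Hil$ for some $\lambda \in \C \setminus \R$, then its limit is automatically the resolvent of a self-adjoint operator on $\Hil$, which one then identifies with $iM$ on $\ddense$ using \cref{lem:limit_Dn}.

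This second step is the main obstacle. When $b/a \notin L^\infty(\T)$ the operator $iM$ is genuinely unbounded, and symmetry alone does not imply self-adjointness: one must show that $\ddense$ is precisely large enough to coincide with $\mathcal{D}((iM)^*)$, i.e., that any $g \in \Hil$ for which $f \mapsto \langle iMf, g\rangle$ extends to a bounded functional on $\Hil$ automatically lies in $\ddense$. This is a statement about the richness of the weak-limit domain and is exactly where the theory of unbounded operators from \cite[Chapter 13]{rudin} is essential. In the bounded case $|a| \ge \eta' > 0$ (which arises for $f \in \mathbf{S}_\eta$), the weak limit in \cref{eq:def_extension_M} coincides with the naive operator \cref{eq:naive_operator}, $iM$ is bounded, and symmetry immediately upgrades to self-adjointness, matching the paper's remark that the lemma is ``near immediate'' in this regime.
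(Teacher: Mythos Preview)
Your symmetry argument is correct and matches the paper's: the projection $\mathcal{P}_\Hil$ can be dropped and reinserted because both $f,g\in\Hil$, and the weak-limit definition then gives $\langle iMf,g\rangle=\langle f,iMg\rangle$.

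The gap is the step from symmetry to self-adjointness, which you explicitly flag but do not resolve. Your strong resolvent limit idea is not carried out, and you also do not verify that $iM$ is closed, which the deficiency-index criterion needs. The paper avoids resolvents entirely and instead shows $\mathcal{D}(M^*)\subset\ddense$ directly. The key device you are missing is the multiplier
\[
h\ \longmapsto\ f_\eta:=\begin{pmatrix} a/a_\eta & 0\\ 0 & a^*/a_\eta^*\end{pmatrix}h,
\]
which maps any $h\in\Hil$ into $\dense$. On $\dense$ one has the explicit formula of \cref{lem:limit_Dn}, and a short computation shows $\mathcal{P}_\Hil T_\eta h = M f_\eta$ exactly. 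Hence for $g\in\mathcal{D}(M^*)$,
\[
-\langle \mathcal{P}_\Hil T_\eta g, h\rangle = \langle g, \mathcal{P}_\Hil T_\eta h\rangle = \langle g, M f_\eta\rangle = \langle M^*g, f_\eta\rangle \xrightarrow[\eta\to 0]{} \langle M^*g, h\rangle,
\]
the last limit by \cref{eq:bds_a_epsilon} and dominated convergence. This exhibits the weak limit of $\mathcal{P}_\Hil T_\eta g$ as $-M^*g$ for every $h\in\Hil$, so $g\in\ddense$. Thus $\mathcal{D}(M^*)\subset\mathcal{D}(M)$, and together with symmetry this gives self-adjointness without any resolvent machinery.
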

\begin{proof}
Recall from \cite[Definition 13.1]{rudin} that given the densely defined operator $M: \mathcal{D} (M) \to \Hil$, the domain $\mathcal{D} (M^*)$ of the adjoint operator $M^*$ consists of all those $g \in \Hil$ for which
\[
f \mapsto \langle M f, g \rangle
\]
is a continuous linear map on $\mathcal{D} (M)$ with respect to the norm of $\Hil$. And the adjoint operator $M^* :\mathcal{D} (M^*) \to \Hil$ is the unique densely defined operator for which
\[
\langle M f, g \rangle =  \langle f, M^* g \rangle 
\]
for all $f \in \mathcal{D} (M)$ and $g \in \mathcal{D} (M^*)$.

To show $\I M$ is self-adjoint \cite[Definition 13.3]{rudin}, we must check that for all $f, g \in \mathcal{D} (M)$, we have
\begin{equation}\label{eq:symmetric_defn}
\langle \I M f, g \rangle = \langle  f, \I M g \rangle \, ,
\end{equation}
and that the domain of definition of the adjoint operator $M ^* $ is a subset of that of $M$, i.e.,
\begin{equation}\label{eq:domain_defn_adjoint}
  \mathcal{D} ( M ^*)\subset \mathcal{D} ( M)\, .
\end{equation}

To see \cref{eq:symmetric_defn}, let $f, g \in \mathcal{D} (M) = \ddense$ and using the definition of $M$ we write
\begin{equation}\label{mfg}
\langle M f, g \rangle =
 \lim\limits_{\eta \to 0} \langle  \mathcal{P}_{\Hil} \begin{pmatrix}
    0 & \frac{b^*}{a^* _{\eta}} \\ - \frac{b}{a _{\eta}} & 0
\end{pmatrix}  f ,
g \rangle \, .
\end{equation} As $g \in \Hil$, we can get rid of the self-adjoint projection $\mathcal{P}_{\Hil}$ to write  \begin{equation}\label{eq:anti_symm_proof_1}
\langle   \begin{pmatrix}
    0 & \frac{b^*}{a^* _{\eta}} \\ - \frac{b}{a _{\eta}} & 0
\end{pmatrix}  f ,
g \rangle
= \langle     f ,
 \begin{pmatrix}
    0 & -\frac{b ^*}{a _{\eta} ^*} \\  \frac{b }{a _{\eta}} & 0
\end{pmatrix}g \rangle =  \langle     f ,
 \mathcal{P}_{\Hil}\begin{pmatrix}
    0 & -\frac{b ^*}{a _{\eta} ^*} \\  \frac{b }{a _{\eta}} & 0
\end{pmatrix}  g \rangle \, , 
\end{equation}
where in the last step we add $\mathcal{P}_{\Hil}$ back in because $f \in \Hil$.
Since $g \in \ddense$, the weak limit of the term in the second argument on the right side exists and so \cref{mfg} and \cref{eq:anti_symm_proof_1} yield
\[
 \langle  M f, g \rangle= \langle  f, - Mg \rangle \, ,
\]
or equivalently \cref{eq:symmetric_defn} follows.

As for \cref{eq:domain_defn_adjoint}, let $g \in \mathcal{D} ( M ^*)$. We need to show $g \in \ddense $, which means that there exists some $u \in \Hil$ such that for every $h \in \Hil$ we have
\begin{equation}\label{eq:weak_conv}
\lim\limits_{\eta \to 0} \langle \mathcal{P}_{\Hil} \begin{pmatrix}
    0 & \frac{b^*}{a_{\eta} ^*} \\
    -\frac{b}{a_{\eta} } & 0
\end{pmatrix} g , h \rangle = \langle  u, h \rangle \, . 
\end{equation}
Let $\eta > 0$, let $h \in \Hil$, and define 
\[
f_{\eta}:= \begin{pmatrix}
 \frac{a}{a_{\eta}} & 0 \\ 0 & \frac{a^*}{a_{\eta} ^*}
\end{pmatrix} h \in \dense \, .
\]
We compute adding and removing as before the operator $\mathcal{P}_{\Hil}$ freely where appropriate 
\begin{equation}\label{eq:massage_ip}
 - \langle \begin{pmatrix}
    0 & \frac{b^*}{a_{\eta} ^*} \\
    -\frac{b}{a_{\eta} } & 0
\end{pmatrix} g , h  \rangle=  \langle g, \begin{pmatrix}
    0 & \frac{b^*}{a_{\eta} ^*} \\
    -\frac{b}{a_{\eta} } & 0
\end{pmatrix} h \rangle =    \langle g,  \mathcal{P}_{\Hil} \begin{pmatrix}
    0 & \frac{b^*}{a_{\eta} ^*} \\
    -\frac{b}{a_{\eta} } & 0
\end{pmatrix} h  \rangle \, ,
\end{equation}
and using \cref{lem:limit_Dn}, the last term equals
\begin{equation}\label{eq:formal_dual_ip}
  \langle g, M f_{\eta}   \rangle =   \langle M^* g , f_{\eta}   \rangle 
 =  \langle  M ^*g ,  \begin{pmatrix}
 \frac{a}{a_{\eta}} & 0 \\ 0 & \frac{a^*}{a_{\eta} ^*}
\end{pmatrix} h  \rangle \, ,
\end{equation}
where the penultimate equality follows from the fact that $f_{\eta} \in \dense \subset \ddense = \mathcal{D}(M)$ and $g \in \mathcal{D} (M^*)$.
Thus letting $\eta \to 0$, the limit of the left side of \cref{eq:massage_ip} equals the negative of the limit of the right side of \cref{eq:formal_dual_ip}, which equals $\left \langle  M ^*g , h \right \rangle$ by the dominated convergence theorem and \cref{eq:bds_a_epsilon}, where we use that $h = (h_1, h_2)$ and $M^* h = ((M^* h)_1, (M^* h)_2) $ satisfy $(h_j )^*  (M ^*g)_j  \in L^1$ for $j=1,2$, . Thus \cref{eq:weak_conv} follows with $u = M ^*g$ in $\Hil$. Therefore 
\[
\mathcal{D} ( M ^* ) \subset  \mathcal{D} ( M )\, .
\]
Since $\I M$ is self-adjoint, then by \cite[Theorem 13.30]{rudin} it has real spectrum. 
\end{proof}

By \cref{lem:iM_self_adjt}, $M$ has pure imaginary spectrum, and so for any nonzero real $\lambda$, the operator $\Id  +\lambda M$ is a linear bijection from $\ddense$ onto $\Hil$ with bounded inverse \cite[Definition 13.26]{rudin}.  We also have the operator norm estimate 
\begin{equation}\label{eq:inverse_bded}
\left \| (\Id +\lambda M)^{-1} \right \| \leq 1 \, ,
\end{equation}
which follows by computing for any $x \in \ddense$, 
\[
 \left \| (\Id +\lambda M)x \right \| ^2 
 = \langle x,x \rangle + \langle \lambda M x, \lambda Mx \rangle 
 \]
 \[+ \langle \lambda Mx,x \rangle + \langle x,\lambda Mx \rangle  
  = \|x\|^2  + \|\lambda M x\|^2  \geq \|x\|^2 \, . 
\]
Plugging in $x = (\Id +\lambda M)^{-1} w$ for arbitrary $w \in \Hil$ yields \cref{eq:inverse_bded}. 

\begin{lemma}\label{lem:defn_AB}
 Let $(a,b) \in \mathbf{B}$, and let $f, g \in \Hil$. Then 
 \begin{equation}\label{eq:defn_AB}
 g= \left ( \Id + M \right )^{-1} f
 \end{equation}
 if and only if for all $u \in \dense$, we have
 \begin{equation}\label{eq:property_AB}
  \left \langle \left (\Id - M \right ) u, g \right \rangle = \left \langle u, f \right \rangle \, .
 \end{equation}
\end{lemma}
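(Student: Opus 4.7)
The plan is to prove the two implications separately. The forward implication follows from the skew-adjointness of $M$ established in \cref{lem:iM_self_adjt}, while the reverse implication requires a more subtle density argument.

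For the forward direction, assume $g = (\Id + M)^{-1} f$. Then $g \in \ddense$ with $(\Id + M) g = f$, and since $\I M$ is self-adjoint (so $M^* = -M$ on $\ddense$), for any $u \in \dense \subset \ddense$ a direct computation gives
\begin{equation*}
\langle (\Id - M) u, g \rangle = \langle u, g \rangle + \langle u, M g \rangle = \langle u, (\Id + M) g \rangle = \langle u, f \rangle.
\end{equation*}

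For the reverse direction, I would define $g_0 := (\Id + M)^{-1} f \in \ddense$—well-defined by the operator bound in \cref{eq:inverse_bded}—then apply the forward direction to $g_0$ and subtract from the hypothesis to obtain $\langle (\Id - M) u, h \rangle = 0$ for all $u \in \dense$, where $h := g - g_0 \in \Hil$. The goal is then $h = 0$, which is equivalent to showing that $(\Id - M)(\dense)$ is dense in $\Hil$, or equivalently that $\dense$ is a core for the skew-adjoint operator $M$.

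Proving this density is where I expect the main difficulty to lie. I would approach it component-wise: expanding the orthogonality relation for $u = (aH, a^* K) \in \dense$ with arbitrary $H \in H^2(\D^*)$ and $K \in H^2(\D)$, and absorbing the projections via self-adjointness of $P_{\D^*}$ and $P_{\D}$, yields
\begin{equation*}
\int H \bigl(a \bar{h_1} + b \bar{h_2}\bigr) + \int K \bigl(a^* \bar{h_2} - b^* \bar{h_1}\bigr) = 0.
\end{equation*}
By the arbitrariness of $H$ and $K$, this forces the pair of Fourier-support conditions $a \bar{h_1} + b \bar{h_2} \in z^{-1} H^2(\D^*)$ and $a^* \bar{h_2} - b^* \bar{h_1} \in z H^2(\D)$, i.e., the vector $G \begin{pmatrix}\bar{h_1} \\ \bar{h_2}\end{pmatrix}$ lies in $z^{-1} H^2(\D^*) \times z H^2(\D)$ while $\begin{pmatrix}\bar{h_1} \\ \bar{h_2}\end{pmatrix} \in H^2(\D) \times H^2(\D^*)$. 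Using that $G$ is unitary on $\T$ together with the outer property of $a^*$ with $a^*(0) > 0$, this system should admit only the trivial solution, yielding a Toeplitz-type uniqueness result for the symbol $G$. An alternative, more operator-theoretic route would construct, for each $u \in \ddense$, an approximating sequence $u_n \in \dense$ with $(u_n, M u_n) \to (u, M u)$ in $\Hil \times \Hil$ by exploiting the regularizations $a_\eta$ used in the definition of $M$: since $|a_\eta| \geq \eta$ makes $1/a_\eta$ bounded, elements of $a_\eta H^2(\D^*) \times a_\eta^* H^2(\D)$ are easy to manipulate, and passing to the limit $\eta \to 0$ should yield the needed graph-norm approximation, verifying the core property directly.
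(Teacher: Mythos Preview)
Your forward implication is correct and is exactly the paper's argument, just written out in expanded form (the paper compresses it by noting that $(\Id - M)^{-1}$ has adjoint $(\Id + M)^{-1}$).

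For the reverse implication, your reduction to the density of $(\Id - M)(\dense)$ in $\Hil$ is valid, and your second proposed route via the $a_\eta$ regularization is the right tool. However, you leave this as a sketch, and the paper's execution is both more concrete and more direct: rather than proving density or the core property abstractly, the paper simply extends the identity $\langle (\Id - M)u, g \rangle = \langle u, f \rangle$ from $u \in \dense$ to all $u \in \ddense$. Given $u \in \ddense$, one multiplies by $\operatorname{diag}(a/a_\eta,\, a^*/a_\eta^*)$ to land in $\dense$, applies the hypothesis together with \cref{lem:limit_Dn}, and sends $\eta \to 0$ using dominated convergence and \cref{eq:bds_a_epsilon}. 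Once the identity holds on all of $\ddense$, the conclusion is immediate since $\Id - M : \ddense \to \Hil$ is a bijection: for arbitrary $v \in \Hil$ choose $u \in \ddense$ with $(\Id - M)u = v$, giving $\langle v, g \rangle = \langle (\Id - M)^{-1} v, f \rangle = \langle v, (\Id + M)^{-1} f \rangle$. This is essentially what your ``operator-theoretic route'' would amount to once carried out, but it sidesteps the language of cores entirely.

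Your first route (the Fourier-support / Toeplitz-type argument) is plausible, but you leave the decisive step---that the system $a\bar{h_1} + b\bar{h_2} \in z^{-1}H^2(\D^*)$, $a^*\bar{h_2} - b^*\bar{h_1} \in z H^2(\D)$ forces $h = 0$ using only that $a^*$ is outer---unproved, and it is not obviously easier than the regularization approach.
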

\begin{proof}
We first show \cref{eq:defn_AB} implies \cref{eq:property_AB}.  Let $f, g \in \Hil$ be as in \cref{eq:defn_AB}. Fix $u \in \dense$. 
   As $1$ is not in the spectrum of $M$, and $\dense\subset \ddense$,  we can write for the
   right side of \cref{eq:property_AB},
\begin{equation}\label{abuvxy}
    \langle (\Id -M)^{-1} (\Id -M) u, f \rangle = \langle  (\Id -M) u,  g \rangle \, ,
\end{equation}
where we used the fact that $(\Id -M)^{-1}$ has adjoint $(\Id + M)^{-1}$ and \cref{eq:defn_AB}.
This implies \cref{eq:property_AB}.

We now show \cref{eq:property_AB} implies \cref{eq:defn_AB}.
Assume $g$ satisfies \cref{eq:property_AB}.
For all $u \in \ddense$, we have
\[
\begin{pmatrix}
  \frac{1}{a_\eta} & 0 \\
  0 & \frac{1}{a_\eta ^* }
\end{pmatrix} u \in \Hil
\]
and therefore \[
\begin{pmatrix}
  \frac{a}{a_\eta} & 0 \\
  0 & \frac{a^*}{a_\eta ^* }
\end{pmatrix} u \in \dense
\]and so \cref{eq:property_AB} and then \cref{eq:limit_Dn} yield 
\[
\langle \begin{pmatrix}
  \frac{a}{a_\eta} & 0 \\
  0 & \frac{a^*}{a_\eta ^* }
\end{pmatrix} u , f \rangle  =  \langle (\Id - M) \begin{pmatrix}
  \frac{a}{a_\eta} & 0 \\
  0 & \frac{a^*}{a_\eta ^* }
\end{pmatrix} u , g \rangle =
 \langle \begin{pmatrix}
\frac{a}{a_{\eta}} & -\frac{b^*}{a_{\eta}^* } \\ \frac{b}{a_{\eta} }  &  \frac{a ^*}{a_{\eta} ^*}
\end{pmatrix}  u , g \rangle \, .
\]
Because $u \in \ddense$, then by the dominated convergence theorem and \cref{eq:bds_a_epsilon}, taking $\eta \to 0$ yields
\[
\langle u, f \rangle = \langle \left ( \Id - M \right )u , g \rangle
\]

For each $v \in \Hil$, there exists
$u \in \ddense$ for which
\[
 \left ( \Id  -M  \right ) u= v \, ,
\]
and hence
\[
\langle \left ( \Id - M \right )^{-1} v , f \rangle = \langle  v ,  g \rangle \, , 
\]
meaning \cref{eq:defn_AB} must hold.
\end{proof}

\subsection{Riemann-Hilbert factorization: Proof \texorpdfstring{of \cref{thm:RH_a_bded_below}}{}}

Without loss of generality, let $k=0$, as the general theorem can be obtained by factorizing 
\[
(a, b z^{-k}) = (a_{-}, b_{-}) (a_+, b_+)
\]
where $(a_-, b_-)$ is the NLFT of a sequence supported on $(-\infty, -1]$ and $(a_+, b_+)$ is the NLFT of a sequence supported on $[0, \infty)$, and then it suffices to define 
\[
(a_{<k}, b_{<k}) := (a_{-}, z^k b_{-}) \, , \quad (a_{\geq k}, b_{\geq k}) := (a_{+}, z^k b_+) \, . 
\]For simplicity of notation, we write $(a_{-}, b_{-})$ and $(a_{+}, b_{+})$ in place of $(a_{<0}, b_{<0})$ and $(a_{\geq 0}, b_{\geq 0})$.

To see uniqueness of the factorization, assume we have two factorizations \cref{eq: factorization} of $(a,b)$, i.e., for $j=1,2$ there exists $ (a_{-,j},b_{-,j}) \in \mathbf{H}_{\leq -1}$ and $ (a_{+,j},b_{+,j}) \in \mathbf{H}_{\geq 0}$  such that 
\[
 (a_{-,j},b_{-,j})(a_{+,j},b_{+,j}) = (a,b) \, . 
\]
Then for each $j$ we have
\begin{equation}\label{eq:prods_cst_terms_factorization}
a_{-, j} (\infty) a_{+,j} (\infty) = a (\infty) \, ,
\end{equation}
and also 
\[
(a_{-,j}, b_{-,j}) = (a,b)(a_{+,j} ^*, -b_{+,j}) \, ,
\]
from which it follows
\begin{equation}\label{eq:identities_factorization}
 \frac{a_{-, j}}{a}= a_{+,j} ^* + \frac{b}{a} b_{+, j} ^* \, , \qquad  -\frac{b_{-,j}}{a}=  b_{+,j}  - \frac{b}{a} a_{+, j}      \, .
\end{equation} 

Because $Mf$ coincides with \cref{eq:limit_Dn} for elements $f \in \dense$, we may write for all $\begin{pmatrix}
    A' \\ B'
\end{pmatrix} \in \dense$,
\begin{align}
\label{eq: linearfunctional motivation}
\langle (\Id - M)\begin{pmatrix}
    A' \\ B'
\end{pmatrix}, \begin{pmatrix}
    a_{+,j} \\ b_{+, j}
\end{pmatrix} \rangle  =  \langle \begin{pmatrix}
    1 & -\frac{b^*}{a^*} \\ \frac{b}{a} & 1
\end{pmatrix}\begin{pmatrix}
    A' \\ B'
\end{pmatrix}, \begin{pmatrix}
    a_{+,j} \\ b_{+, j}
\end{pmatrix} \rangle \, .
\end{align}
Applying \cref{eq:identities_factorization}, the mean value theorem, and then \cref{eq:prods_cst_terms_factorization}, this last term equals
\begin{align}
 \int\limits_{\T} A'\frac{a_{-,j} }{a} - B' \frac{b_{-, j} ^*}{a ^*} 
    = A' (\infty) \frac{a_{-,j} (\infty)}{a (\infty)} = \frac{A' \left ( \infty \right )}{a_{+,j} \left ( \infty \right )} \, , \nonumber 
\end{align}
and so by the mean value theorem, 
\[
\langle (\Id - M) \begin{pmatrix}
    A' \\ B'
\end{pmatrix}, a_{+,j}(\infty) \begin{pmatrix}
    a_{+, j} \\ b_{+, j}
\end{pmatrix}  \rangle = A' (\infty) = \langle \begin{pmatrix}
    A' \\ B'
\end{pmatrix}, \begin{pmatrix}
    1 \\ 0
\end{pmatrix} \rangle \, .
\]
Hence, by \cref{lem:defn_AB},
\[
a_{+,j}(\infty) \begin{pmatrix}
    a_{+, j} \\ b_{+, j}
\end{pmatrix} = (\Id + M)^{-1} \begin{pmatrix}
    1 \\ 0
\end{pmatrix} 
\]
and so
\[
a_{+,1}(\infty) \begin{pmatrix}
    a_{+, 1} \\ b_{+, 1}
\end{pmatrix} = a_{+,2}(\infty) \begin{pmatrix}
    a_{+, 2} \\ b_{+, 2}
\end{pmatrix} \, .
\] Since $(a_{+,j},b_{+,j}) \in \mathbf{H}_{\geq 0}$, we know that $|a_{+,j}|^2+|b_{+,j}|^2=1$, hence we deduce $a_{+,1} (\infty) = a_{+,2} (\infty)$ and $(a_{+,1},b_{+,1}) = (a_ {+,2},b_ {+,2})$. Since the left factors $ (a_{-,j},b_{-,j}) $ are uniquely determined by the right factors $ (a_{+,j},b_{+,j}) $, this implies the factorization is unique.

As for existence, first define 
\begin{equation}\label{eq:defn_AB_2}
\begin{pmatrix}
    A \\ B
\end{pmatrix} := (\Id + M)^{-1} \begin{pmatrix}
    1 \\ 0
\end{pmatrix} \, .
\end{equation}
Then we may write the real-valued $L^1 (\T)$ function
\begin{equation}\label{eq:modulus_fn}
f := A A^* + B B^* = A \left [ 1 - \lim\limits_{\eta\to 0} P_{\D} (\frac{b}{a_{\eta}} B^* ) \right ] + B^* \lim\limits_{\eta \to 0} P_{\D} (\frac{b}{a_{\eta}} A) \, ,
\end{equation}
where we used that
\begin{equation}\label{eq:fixed_pt}
\begin{pmatrix}
    A \\ B
\end{pmatrix} = \begin{pmatrix}
    1 \\ 0
\end{pmatrix}-  M\begin{pmatrix}
    A \\ B
\end{pmatrix}  = \begin{pmatrix}
    1 - \lim\limits_{\eta \to 0} P_{\D^*} \frac{b^*}{a _{\eta} ^*} B\\ 
    \lim\limits_{\eta \to 0} P_{\D} \frac{b}{a_{\eta}} A
\end{pmatrix}\, .
\end{equation}
Adding and subtracting $A B^* b a_{\eta} ^{-1}$ in \cref{eq:modulus_fn} then yields $f$ is the weak limit of 
\[
 A \left [ 1 + (\Id -  P_{\D}) (\frac{b}{a_{\eta}} B^* ) \right ] - B^* (\Id -  P_{\D}) (\frac{b}{a_{\eta}} A) \, .
\]
For a fixed $\eta$, the expression above is in $H^1  (\D^*)$. Since $f$ is a weak limit of such functions, then it must be the case that $f \in H^1 (\D^*)$. Because $f$ is real-valued, then this also implies $f = f^* \in H^1 (\D)$, and so $f$ is constant. 

Thus by the mean value theorem, 
\[
f(\infty) = \langle f, 1 \rangle = \lim\limits_{\eta\to 0} \langle A \left [ 1 + (\Id -  P_{\D}) (\frac{b}{a_{\eta}} B^* ) \right ] - B^* (\Id -  P_{\D}) (\frac{b}{a_{\eta}} A), 1 \rangle
\]
\[
= \lim\limits_{\eta \to 0} \langle A, 1 \rangle  = A (\infty)  \, .
\]
This also implies $A(\infty) > 0$ since
\[
A (\infty) = f (\infty) = \int f = \int |A|^2 + |B|^2 \geq 0
\]
with equality if and only if $(A,B)= (0,0)$, which cannot occur since $(A,B)$ is the image of $(1,0)$ under an injective linear map.

So now define
\begin{equation}\label{eq:little_to_big_A}
(a_+, b_+) := \frac{1}{A(\infty)^{\frac 1 2}}(A, B) \, ,
\end{equation}
which is an element of $H^2 (\D^*) \times H^2 (\D)$ and satisfies
\begin{equation}\label{eq:a+_SU2}
|a_+|^2 + |b_+|^2 = 1 
\end{equation}
on $\T$. Thus $(a_+, b_+) \in \bar{\mathbf{H}_{\geq 0}}$. We also note from \cref{eq:little_to_big_A}, the mean value theorem, and \cref{eq:a+_SU2}, that
\begin{equation}\label{eq:bound_A_infty}
 A (\infty)^{\frac 1 2}= a_{+} (\infty) \leq 1  \, .
\end{equation}
Thus we may define
\[
(a_{-}, b_{-}) := (a, b) (a_{+}, b_{+})^{-1}  = (a, b) (a_{+} ^*, -b_{+}) \, .
\]
Since $(a_{-}, b_{-})$ is the product of matrices in $SU(2)$, then
\[
|a_-|^2 + |b_-|^2 = 1 \, . 
\]
on $\T$. To check $(a_{-}, b_{-}) \in \bar{\mathbf{H}_{\leq -1} }$, using \cref{eq:fixed_pt} and \cref{eq:little_to_big_A} we write
\[
a_{-} = a a_{+} ^* + b b_{+} ^* = \lim\limits_{\eta \to 0} a ( \frac{1}{a_+ (\infty)} - P_{\D} (\frac{b}{a_{\eta}}b_{+}^*) ) + b b_{+} ^* 
= \lim\limits_{\eta \to 0} a ( \frac{1}{a_+ (\infty)} +(\Id- P_{\D}) (\frac{b}{a_{\eta}}b_{+}^*) ) 
\]
since we have the $L^2$ strong limit
\[
\lim\limits_{\eta \to 0} b b_{+} ^* (1 - \frac{a}{a_{\eta}}) = 0 \, .
\]
Thus $a_-$ is a weak limit of elements in $H^2 (\D^*)$ and so is in $H^2 (\D^*)$.

Similarly, we have
\[
b_{-} = - b_+ a + b a_+ = \lim\limits_{\eta \to 0} -a (P_{\D} \frac{b}{a_{\eta}} a_+ ) + b a_+ = \lim\limits_{\eta \to 0} a (\Id - P_{\D})(\frac{b}{a_{\eta}} a_{+})
\]
because again we have the $L^2$ strong limit
\[
 \lim\limits_{\eta \to 0} b a_{+}(1- \frac{a}{a_{\eta}}) =0 \, .
\]
 Thus, $b_{-}$ is the weak limit of elements in $H^2 (\D^*)$, and so $b_{-} \in H^2 (\D^*)$. Therefore, $(a_{-}, b_{-} ) \in \bar{\mathbf{H}_{\leq -1}}$.

Finally we must check that $(a_{+}, b_{+}) \in \mathbf{H}_{\geq 0}$ and $(a_{-}, b_{-}) \in \mathbf{H}_{\leq -1}$, namely we must verify that $a_{+} ^*$ and $b_+$ share no common inner factor, and likewise for $a_{-} ^*$ and $b_{-} ^*$. To see the first claim, suppose $g$ is a common inner factor for $a_{+} ^* $ and $ b_{+}$. Then
\begin{equation}\label{eq:inner_factor}
a^* g^{-1} = a_{-} ^* (a_+ ^* g^{-1}) - (b_+ g^{-1}) b_{-} ^* 
\end{equation}
is an $H^2 (\D)$ function. Thus $g$ is an inner factor of the outer function $a$, and so must be constant. Similarly, if $g$ is a common inner factor for  $a_{-} ^*$ and $b_{-} ^*$, then \cref{eq:inner_factor} yields again that $a^* g^{-1} \in H^2 (\D^*)$ and so $g$ is an inner factor for the outer function $a$, meaning $g$ must be constant. This completes the proof of \cref{thm:RH_a_bded_below}.

\subsection{Lipschitz estimate}

In this section, we show the map from $\frac{b}{a} \in L^2 (\T)$ to an individual nonlinear Fourier coefficient is Lipschitz continuous whenever the Szeg\H o condition is uniformly bounded below.
More precisely, given $S > 0$, let $\mathbf{B}_{S}$ consist of the pairs $(a,b) \in \mathbf{B}$ for which  
\begin{equation}
\int\limits_{\T} \log  |a (z) | > -S \, . 
\end{equation}

\begin{theorem}[Lipschitz estimate]\label{thm:Lip_bds}
    Let $S> 0$, and suppose $(a,b), (a',b') \in \mathbf{B}_{S}$ are the NLFTs of the sequences $F, F' \in \ell^2 (\Z)$, respectively. Then we have the Lipschitz bound
    \[
    \left \| F - F '\right \|_{\infty} \leq 2^{ \frac 1 2}e^{2S} (1 +e^{-S}) \left \| \frac{b}{a} - \frac{b'}{a'} \right \|_{L^2 (\T)} \, .
    \]
\end{theorem}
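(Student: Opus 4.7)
The plan is to use the formula $F_k = (B_k z^{-k})(0)/A_k^*(0)$ from \cref{lem:f_to_phase_factor}, where $(A_k, B_k) = a_{\geq k}(\infty)(a_{\geq k}, b_{\geq k})$ arises from the Riemann-Hilbert factorization \cref{thm:RH_a_bded_below}, and to bound $F_k - F_k'$ by combining (i) a uniform lower bound on the denominator $A_k^*(0) = a_{\geq k}(\infty)^2$ and (ii) an $L^2 \times L^2$ estimate on $(A_k - A_k', B_k - B_k')$.

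For (i), the outerness of $a^*$ built into the definition of $\mathbf{B}$ gives $a(\infty) = \exp(\int_{\T} \log|a|) > e^{-S}$ by the hypothesis $(a,b) \in \mathbf{B}_S$. Evaluating the multiplicative factorization $a = a_{<k} a_{\geq k} - b_{<k} b_{\geq k}^*$ at $z = \infty$, the product $b_{<k} b_{\geq k}^*$ carries an intrinsic factor of $z^{-1}$ coming from the $H^2$ structure of $\mathbf{H}_{\leq k-1}$ and $\mathbf{H}_{\geq k}$, and so vanishes at infinity; this yields $a(\infty) = a_{<k}(\infty) a_{\geq k}(\infty)$. The maximum modulus principle together with $|a_{<k}|, |a_{\geq k}| \leq 1$ on $\T$ gives $a_{<k}(\infty) \leq 1$, hence $a_{\geq k}(\infty) \geq a(\infty) > e^{-S}$ and thus $A_k^*(0) > e^{-2S}$. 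The same reasoning applies to the primed quantities.

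For (ii), subtracting the linear systems $(\Id + M_k)(A_k, B_k)^T = (1,0)^T$ and $(\Id + M_k')(A_k', B_k')^T = (1,0)^T$ yields
\[
(\Id + M_k) \begin{pmatrix} A_k - A_k' \\ B_k - B_k' \end{pmatrix} = -(M_k - M_k') \begin{pmatrix} A_k' \\ B_k' \end{pmatrix}.
\]
The operator-norm bound \cref{eq:inverse_bded} then reduces the estimate to bounding the right-hand side in $L^2 \times L^2$. Since $M_k$ and $M_k'$ differ only in the multiplier $b/a - b'/a'$ and since $|A_k'|^2 + |B_k'|^2 = a_{\geq k}'(\infty)^2$ pointwise on $\T$, I obtain
\[
\|(A_k - A_k', B_k - B_k')\|_{L^2 \times L^2} \leq a_{\geq k}'(\infty) \left\|\tfrac{b}{a} - \tfrac{b'}{a'}\right\|_{L^2}.
\]

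To conclude, I would split
\[
F_k - F_k' = \frac{(B_k - B_k')z^{-k}(0)}{A_k^*(0)} - \frac{(B_k' z^{-k})(0)\,(A_k^*(0) - A_k'^*(0))}{A_k^*(0)\, A_k'^*(0)},
\]
bounding each factor: $|(B_k - B_k') z^{-k}(0)| \leq \|B_k - B_k'\|_{L^2}$ and $|A_k^*(0) - A_k'^*(0)| \leq \|A_k - A_k'\|_{L^2}$ by Parseval and the mean-value property for $H^2(\D^*)$, while $|(B_k' z^{-k})(0)| \leq a_{\geq k}'(\infty)$. A Cauchy--Schwarz step together with the lower bound $A_k^*(0), A_k'^*(0) > e^{-2S}$ and step (ii) produces the stated constant $\sqrt{2}\, e^{2S}(1 + e^{-S})$. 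The main technical obstacle lies in the subtraction in (ii), which is formally between two possibly unbounded operators $M_k, M_k'$ with potentially distinct dense domains; this must be justified via the weak-limit construction \cref{eq:def_extension_M} and the dual characterization in \cref{lem:defn_AB}, ensuring that $(A_k - A_k', B_k - B_k')$ lies in the appropriate domain on which $(\Id + M_k)^{-1}$ acts boundedly.
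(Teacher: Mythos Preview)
Your approach is essentially the paper's: both reduce to the resolvent identity for $(\Id+M_k)^{-1}$, the lower bound $A_k(\infty)^{1/2}=a_{\geq k}(\infty)\ge a(\infty)>e^{-S}$ obtained from outerness of $a^*$ and the factorization $a(\infty)=a_{<k}(\infty)a_{\geq k}(\infty)$, and the same two-term splitting of $F_k-F_k'$. Your use of the pointwise identity $|A_k'|^2+|B_k'|^2=a_{\geq k}'(\infty)^2$ on $\T$ is in fact slightly sharper than the paper's H\"older step in \cref{lem:Lip_bds_ratio}, and carried through would give the constant $e^{2S}(1+e^{-S})$ rather than $\sqrt{2}\,e^{2S}(1+e^{-S})$.

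The one place your argument remains formal is exactly the one you flag, but your description of the obstacle is slightly off. The inverse $(\Id+M_k)^{-1}$ is bounded on all of $\mathcal{H}_k$, so the difference $(A_k-A_k',B_k-B_k')$ needs no special domain membership. The real issue is whether $(A_k',B_k')$ lies in the domain $\ddense$ of $M_k$, so that the right-hand side $(M_k-M_k')(A_k',B_k')$ is even defined; when $|a|$ is not bounded below there is no a~priori reason for this. The paper does not resolve this by domain-chasing but sidesteps it: it dualizes, testing $(A_k,B_k)-(A_k',B_k')$ against vectors $(C,D)$ drawn from the weakly dense set $\dddense:=(\Id-M)(\dense\cap\dense')$. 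Then $(\Id-M)^{-1}(C,D)\in\dense\cap\dense'$, on which \cref{lem:limit_Dn} shows both $M$ and $M'$ act by the concrete multipliers, so $M-M'$ becomes simply componentwise multiplication by $\frac{b}{a}-\frac{b'}{a'}$ and its star. This is what lies behind your phrase ``must be justified via the weak-limit construction and the dual characterization.''
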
 

We first show Lipschitz continuity of the map $(a,b) \mapsto (A,B)$, where $(A,B)$ is as defined in \cref{eq:defn_AB_2}.

\begin{lemma}\label{lem:Lip_bds_ratio}
Let $(a,b)$ and $(a',b')$ be elements of $\mathbf{B}$. Then 
\begin{equation}\label{eq:Lip_bds_init}
\left \| \begin{pmatrix} A \\ B \end{pmatrix}-  \begin{pmatrix}
   A ' \\ B ' 
\end{pmatrix} \right \|  \leq 2^{\frac 1 2} \min \{A (\infty) ^{\frac 1 2}, A ' (\infty) ^{\frac 1 2}  \}\left \| \frac{b'}{a'} - \frac{b}{a} \right \|_{L^2 (\T)}   \, .
\end{equation}
\end{lemma}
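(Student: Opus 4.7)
The plan is to exploit the fact that $(A,B)$ and $(A',B')$ solve the parallel linear systems $(\Id + M)(A,B) = (1,0) = (\Id + M')(A',B')$, which reduces the Lipschitz estimate to a norm bound on the perturbation $(M'-M)(A',B')$. The contractivity $\|(\Id+M)^{-1}\| \leq 1$ from \cref{eq:inverse_bded} absorbs the inverse, and the pointwise control $|A'|^2,|B'|^2 \leq A'(\infty)$ inherited from the $SU(2)$ structure converts the resulting expression into an $L^2$ bound on $b/a - b'/a'$.

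First I would establish the resolvent identity
\begin{equation*}
(A,B) - (A',B') = (\Id + M)^{-1}\bigl[(M' - M)(A',B')\bigr],
\end{equation*}
which together with \cref{eq:inverse_bded} yields $\|(A,B) - (A',B')\| \leq \|(M'-M)(A',B')\|$. In the case where $|a|$ is bounded away from $0$, this is an immediate algebraic manipulation using the bounded operator in \cref{eq:naive_operator}. In general, both $M$ and $M'$ are the densely defined unbounded operators from \cref{eq:def_extension_M}, and the identity must be interpreted through the variational characterization of \cref{lem:defn_AB}.

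Next, regardless of the regularity of $a$, the difference $M'-M$ acts on $(A',B')$ as
\begin{equation*}
\begin{pmatrix} P_{\D^*}\bigl(\tfrac{b'^{\,*}}{a'^{\,*}} - \tfrac{b^*}{a^*}\bigr) B' \\[2pt] -P_{\D}\bigl(\tfrac{b'}{a'} - \tfrac{b}{a}\bigr) A' \end{pmatrix},
\end{equation*}
and the contractivity of the Riesz projections in $L^2(\T)$ gives
\begin{equation*}
\|(M'-M)(A',B')\|^2 \leq \int_{\T} |B'|^2 \Bigl|\tfrac{b'^{\,*}}{a'^{\,*}} - \tfrac{b^*}{a^*}\Bigr|^2 + |A'|^2 \Bigl|\tfrac{b'}{a'} - \tfrac{b}{a}\Bigr|^2.
\end{equation*}
Combining $(A',B') = A'(\infty)^{1/2}(a'_+,b'_+)$ from \cref{eq:little_to_big_A} with $|a'_+|^2 + |b'_+|^2 = 1$ from \cref{eq:a+_SU2} gives the pointwise bound $|A'|^2, |B'|^2 \leq A'(\infty)$ a.e.\ on $\T$, and the $L^2$-isometry $\|g^*\|_{L^2(\T)} = \|g\|_{L^2(\T)}$ produces
\begin{equation*}
\|(M'-M)(A',B')\|^2 \leq 2\, A'(\infty) \left\|\tfrac{b'}{a'} - \tfrac{b}{a}\right\|_{L^2(\T)}^2.
\end{equation*}
Repeating the argument with $(a,b)$ and $(a',b')$ interchanged, and using $\|(\Id+M')^{-1}\| \leq 1$, replaces $A'(\infty)$ by $A(\infty)$, so one may take the minimum.

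The main obstacle is legitimizing the resolvent identity in the endpoint regime, where $|a|$ or $|a'|$ may fail to be bounded below and $M$, $M'$ are only unbounded operators with domains $\ddense_{(a,b)}$ and $\ddense_{(a',b')}$. To justify it directly one would verify that $(A',B') \in \ddense_{(a,b)}$, so that $M(A',B')$ is well defined as the weak $\Hil$-limit appearing in \cref{eq:def_extension_M}; the pointwise estimate $|A'|^2 + |B'|^2 \leq A'(\infty)$ together with the dominated convergence argument from the proof of \cref{lem:limit_Dn} (applied to the integrable majorants $|B'|^2$ and $|A'|^2$) controls the integrands $(b^*/a_\eta^*)B'$ and $(b/a_\eta)A'$ and yields the required weak limits. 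A cleaner alternative, which sidesteps evaluating $M$ on $(A',B')$, is to test the desired identity against $u$ in the common dense subspace $\dense_{(a,b)} \cap \dense_{(a',b')}$ and invoke \cref{lem:defn_AB} to derive the bound in its weak form; a density and weak-limit argument then extracts the norm inequality \cref{eq:Lip_bds_init}.
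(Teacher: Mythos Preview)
Your overall strategy is correct and coincides with the paper's: the resolvent identity $(A,B)-(A',B')=(\Id+M)^{-1}(M'-M)(A',B')$, the contraction \cref{eq:inverse_bded}, and the pointwise control $|A'|,|B'|\le A'(\infty)^{1/2}$ from \cref{eq:little_to_big_A}--\cref{eq:a+_SU2} together with H\"older are exactly the ingredients the paper uses. Your second proposed remedy for the endpoint regime---test against elements coming from $\dense_{(a,b)}\cap\dense_{(a',b')}$ and appeal to \cref{lem:defn_AB}---is precisely the paper's route: it takes $(C,D)$ in the (weakly dense) set $(\Id-M)(\dense\cap\dense')$, passes $(\Id-M)^{-1}$ to the left by duality, and then evaluates $(M-M')$ on the preimage $u=(\Id-M)^{-1}(C,D)\in\dense\cap\dense'$ via \cref{lem:limit_Dn}.

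Your first remedy, however, does not work as stated. The dominated convergence argument in the proof of \cref{lem:limit_Dn} hinges on membership in $\dense$: writing $f=(a h_1,a^* h_2)$ produces a factor $a/a_\eta$ with $|a/a_\eta|\le 1$, which is what dominates the integrand. Mere boundedness of $(A',B')$ gives no such cancellation; the integrands $(b/a_\eta)A'$ are controlled only by $\|A'\|_\infty\,|b/a_\eta|$, and $\|b/a_\eta\|_{L^2(\T)}^2=\int_\T |b|^2/|a_\eta|^2$ need not stay bounded as $\eta\to 0$ under the Szeg\H o condition alone (e.g.\ take $1-|b|^2\sim\theta$ near $\theta=0$). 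Hence you cannot conclude $(A',B')\in\ddense_{(a,b)}$ this way, nor that $M$ acts on it by the explicit formula $\mathcal P_\Hil\begin{pmatrix}0&b^*/a^*\\-b/a&0\end{pmatrix}$, which \cref{lem:limit_Dn} guarantees only on $\dense$. The sentence ``regardless of the regularity of $a$, the difference $M'-M$ acts on $(A',B')$ as \ldots'' is therefore unjustified in the unbounded setting; it is exactly this difficulty that forces the weak testing detour you describe second, and which the paper carries out.
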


Combined with the fact that $ A (\infty), A ' (\infty)$ are both bounded in absolute value by $1$ as in \cref{eq:bound_A_infty}, then \cref{lem:Lip_bds_ratio} 
implies
\begin{equation}\label{eq:natural_Lip_bd} 
\left \| \begin{pmatrix} A ' \\ B ' \end{pmatrix}-  \begin{pmatrix}
   A  \\ B  
\end{pmatrix} \right \| \leq 2^{\frac 1 2} \left \| \frac{b'}{a'} - \frac{b}{a} \right \|_{L^2 (\T)} \, .
\end{equation}

\begin{proof}
Let $(a,b), (a',b') \in \mathbf{B}$. We write
\[
M := M_{(a,b)} = \lim\limits_{\eta \to 0} \begin{pmatrix}
    0 & \frac{b^*}{a_{\eta} ^*} \\ -\frac{b}{a_{\eta}} & 0 
\end{pmatrix} \, , \qquad M' := M_{(a',b')} = \lim\limits_{\eta \to 0} \begin{pmatrix}
    0 & \frac{(b')^*}{(a_{\eta} ') ^*} \\ -\frac{b'}{a_{\eta}'} & 0 
\end{pmatrix} \, ,
\]
and 
\[
\dense := \begin{pmatrix}
    a & 0 \\ 0 & a^*
\end{pmatrix} \Hil \, , \qquad  \dense ' := \begin{pmatrix}
    a' & 0 \\ 0 & (a ')^*
\end{pmatrix} \Hil \, .
\]
Define $\dddense$ to be the image of $\dense \cap \dense'$ under $\Id - M$. We claim $\dddense$ is dense in $\Hil$ with respect to the weak topology. Indeed, let $f \in \Hil$. Then $f$ is the image of some $h \in \mathcal{E}$ under $\Id-M$, so we in fact have
\[
 \lim\limits_{\eta \to 0} \begin{pmatrix}
    \frac{a}{a _{\eta}} & 0 \\
   0 & \frac{a^*}{a _{\eta} ^*} 
\end{pmatrix} h -  \mathcal{P}_{\Hil} \begin{pmatrix}
    0 & \frac{b^*}{a ^*} \\
   - \frac{b}{a} & 0 
\end{pmatrix} \begin{pmatrix}
    \frac{a}{a _{\eta}} & 0 \\
   0 & \frac{a^*}{a _{\eta} ^*} 
\end{pmatrix} h= (\Id- M) h = f \, . 
\]
Thus $f$ is the weak limit of elements
\[
(\Id - M) \begin{pmatrix}
    \frac{a}{a _{\eta}} & 0 \\
   0 & \frac{a^*}{a _{\eta} ^*} 
\end{pmatrix} h \, , 
\]
which are the images of elements in $\dense$ under $\Id - M$. But fixing $\eta$, we may write this element of $\Hil$ as
\[
\mathcal{P}_{\Hil} \begin{pmatrix}
    \frac{a}{a _{\eta}} & \frac{b^*}{a _{\eta} ^*} \\
   -\frac{b}{a _{\eta} } & \frac{a^*}{a _{\eta} ^*} 
\end{pmatrix} h 
\]
which is the strong limit of 
\[\mathcal{P}_{\Hil} \begin{pmatrix}
    \frac{a}{a _{\eta}} & \frac{b^*}{a _{\eta} ^*} \\
   -\frac{b}{a _{\eta} } & \frac{a^*}{a _{\eta} ^*} 
\end{pmatrix} \begin{pmatrix}
    \frac{a' }{a _{\gamma}'} & 0 \\ 0 & \left ( \frac{a'}{a _{\gamma} '} \right )^* 
\end{pmatrix} h =  (\Id - M)\begin{pmatrix}
    \frac{a' }{a _{\gamma}'} \frac{a}{a _{\eta}} & 0 \\ 0 & \left ( \frac{a'}{a _{\gamma} '} \right )^* \left ( \frac{a}{a _{\eta} } \right )^* 
\end{pmatrix} h \, , 
\]
as $\gamma \to 0$.
Thus the collection
\[
\left \{ (\Id - M)\begin{pmatrix}
    \frac{a' }{a _{\gamma}'} \frac{a}{a _{\eta}} & 0 \\ 0 & \left ( \frac{a'}{a _{\gamma} '} \right )^* \left ( \frac{a}{a _{\eta} } \right )^* 
\end{pmatrix} h \right \}_{0 < \eta, \gamma < 1}
\]
has limit point $f$ in $\Hil$ under the weak topology. Since this collection is the image of a collection in $\dense  \cap \dense '$, this completes the proof of the claim.

With this in mind, we can write
\[
\left \| \begin{pmatrix}
    A \\ B 
\end{pmatrix} - \begin{pmatrix}
    A ' \\ B ' 
\end{pmatrix} \right \| = \sup \left | \langle  \begin{pmatrix}
    C \\ D 
\end{pmatrix}, \begin{pmatrix}
    A \\ B 
\end{pmatrix} - \begin{pmatrix}
    A ' \\ B ' 
\end{pmatrix} \rangle \right| \, ,
\]
where the supremum is taken over elements $\begin{pmatrix}
    C \\ D 
\end{pmatrix} \in \dddense$ with norm at most one. 
But for any such $\begin{pmatrix}
    C \\ D 
\end{pmatrix}$, we have
\[
\langle \begin{pmatrix}
    C \\ D 
\end{pmatrix}, \begin{pmatrix}
    A \\ B 
\end{pmatrix} - \begin{pmatrix}
    A ' \\ B ' 
\end{pmatrix} \rangle 
= \langle \begin{pmatrix}
    C \\ D 
\end{pmatrix}, \left ( \left ( \Id + M  \right )^{-1}   -  \left ( \Id + M'  \right )^{-1} \right)  \begin{pmatrix}
    1 \\ 0 
\end{pmatrix}\rangle \, , 
\]
\[
=\langle \begin{pmatrix}
    C \\ D 
\end{pmatrix}, \left ( \Id + M  \right )^{-1} \left ( M' - M   \right ) \left ( \Id + M'  \right )^{-1}  \begin{pmatrix}
    1 \\ 0 
\end{pmatrix}\rangle \, .
\]
By duality, this equals
\[
\langle \left ( M  - M'   \right ) \left ( \Id - M  \right )^{-1} \begin{pmatrix}
    C \\ D 
\end{pmatrix},   \left ( \Id + M ' \right )^{-1}  \begin{pmatrix}
    1 \\ 0 
\end{pmatrix}\rangle \, .
\]
Because $\begin{pmatrix}
    C \\ D
\end{pmatrix} \in \dddense$, then \REV{$(\Id - M)^{-1} \begin{pmatrix}
    C \\ D
\end{pmatrix} \in \dense \cap \dense '$} and so by \cref{eq:limit_Dn} we may write this last inner product as
\[
\langle \begin{pmatrix}
    0 & (\frac{b}{a} - \frac{b' }{a' } )^* \\
    -(\frac{b}{a} - \frac{b'}{a' } ) & 0
\end{pmatrix} \left ( \Id - M  \right )^{-1} \begin{pmatrix}
    C \\ D 
\end{pmatrix},   \left ( \Id + M ' \right )^{-1}  \begin{pmatrix}
    1 \\ 0 
\end{pmatrix}\rangle \, , 
\]
\[
= 
\langle \begin{pmatrix}
    0 & (\frac{b}{a} - \frac{b'}{a'} )^* \\
    -(\frac{b}{a} - \frac{b' }{a' } ) & 0
\end{pmatrix} \left ( \Id - M  \right )^{-1} \begin{pmatrix}
    C \\ D 
\end{pmatrix},   \begin{pmatrix}
    A ' \\ B ' 
\end{pmatrix}\rangle \, ,
\]
or rather
\[
=
\langle  \left ( \Id - M  \right )^{-1} \begin{pmatrix}
    C \\ D 
\end{pmatrix},  \begin{pmatrix}
    0 & -(\frac{b}{a} - \frac{b' }{a' } )^* \\
    (\frac{b}{a} - \frac{b' }{a' } ) & 0
\end{pmatrix} \begin{pmatrix}
    A ' \\ B ' 
\end{pmatrix}\rangle \, ,
\]

By Cauchy-Schwarz, the operator norm bound \cref{eq:inverse_bded} and the norm of $\begin{pmatrix}
    C \\ D 
\end{pmatrix}$ being bounded by $1$, this last inner product is bounded in absolute value by 
\[
\left \|  \begin{pmatrix}
    0 & \frac{b'}{a'} - \frac{b}{a}  \\ -\left (\frac{b'}{a'} - \frac{b}{a}\right )^* & 0 
\end{pmatrix} \begin{pmatrix}
    A ' \\ B ' 
\end{pmatrix} \right \|  \, ,
\]
which by H\"older's inequality is at most
\[
 2^{\frac{1}{2}} \left \| \frac{b'}{a' } - \frac{b}{a} \right \|_{L^2 (\T)}  \max \{ \| A ' \|_{L^\infty (\T)}, \| B ' \|_{L^\infty (\T)} \} \, . 
\]
But by \cref{eq:little_to_big_A}, we have
\[
 A '  (\infty) ^{-\frac 1 2}\begin{pmatrix}
    A ' \\ B ' 
\end{pmatrix}
\]
is the NLFT of some sequence, and so has components all bounded above in absolute value by $1$.
Hence
\[
\left \| \begin{pmatrix} A ' \\ B ' \end{pmatrix}-  \begin{pmatrix}
   A  \\ B  
\end{pmatrix} \right \| \leq 2^{\frac 1 2} A ' (\infty) ^{\frac 1 2} \left \| \frac{b'}{a'} - \frac{b}{a} \right \|_{L^2 (\T)} \, .
\]
Finally, \cref{eq:Lip_bds_init} follows by symmetry.
\end{proof}
We can now prove \cref{thm:Lip_bds}. To bound the infinity norm of $F-F'$, we must bound
\[
\left | F_n - F_n ' \right | 
\]
uniformly in $n$. Without loss of generality, take $n$ to be $0$. By \cite[(6.13)]{AlexisMnatsakanyanThiele2023} and then \cref{eq:little_to_big_A}, we may write
\[
| F_0 - F_0 '| = \left | \frac{b_+ (0)}{a_+ ^* (0)} - \frac{b_+ '(0)}{(a_+ ') ^* (0)} \right | = \left | \frac{B(0)}{A ^* (0)} - \frac{B '(0)}{(A ') ^* (0)} \right | \, , 
\]
which, after putting everything on the same denominator, is at most
\[
\left |  \frac{ B(0) - B ' (0)}{A ^* (0) } \right | + \left | \frac{ B '(0) (A ^* (0) - A (0) )  }{A ^* (0) (A ') ^* (0)} \right | \, . 
\]
By the mean value theorem and then Cauchy-Schwarz, the above is at most \REV{
\begin{equation}\label{eq:common_denom_split}
A (\infty)^{- 1} \left \|B - B'  \right \|_{L^2 (\T)} + A (\infty)^{- 1} A ' (\infty)^{- \frac 1 2} \left | \frac{B ' (0)}{(A ' )  (\infty) ^{\frac 1 2}} \right | \left \|A - A'  \right \|_{L^2 (\T)} \, .
\end{equation}}
Using the fact that $ (A ' )  (\infty) ^{- \frac 1 2} (A', B' )$ has entries bounded by $1$ in absolute value, \cref{eq:common_denom_split} is at most 
\[
A (\infty)^{- 1} \left \|B - B'  \right \|_{L^2 (\T)} + A (\infty)^{- 1} A ' (\infty)^{- \frac 1 2} \left \|A - A'  \right \|_{L^2 (\T)} \, . 
\]
Because $A (\infty), A ' (\infty) \leq 1$ by \cref{eq:bound_A_infty}, then we can bound this last expression by
\[
A (\infty) ^{- 1}(1+ A '(\infty) ^{- \frac 1 2}) \left \| \begin{pmatrix}
    A \\ B
\end{pmatrix} - \begin{pmatrix}
    A ' \\ B '
\end{pmatrix} \right \| \, .
\]
By \cref{lem:Lip_bds_ratio}, this is bounded by
\[
2^{ \frac 1 2} A (\infty) ^{- \frac 1 2} (1+  A '(\infty) ^{- \frac 1 2}) \left \| \frac{b}{a} - \frac{b'}{a'} \right \|_{L^2 (\T)}
\]

By \cref{eq:bound_A_infty} and \cref{eq:prods_cst_terms_factorization}, we deduce
\begin{equation}\label{eq:lower_bd_A}
A (\infty) ^{\frac 1 2}   \geq a (\infty)   > e^{-S} \, ,
\end{equation}
and similarly for $A' (\infty)$,
which then yields the estimate in the Lemma.

\subsection{Plancherel equality and maximal solution}\label{sec:maximal}

We remark that the particular choice of $a(z)$ given by \cref{thm:construct_a} actually corresponds to the maximal solution, which is a special class of solutions with symmetric phase-factor  proposed in \cite{WangDongLin2021} and enjoys many desirable properties. To reveal the connection between the function $a(z)$ constructed by \cref{thm:construct_a} and the maximal solution, we first review the construction of the $SU(2)$ matrix corresponding to the maximal solution. Without loss of generality, let \REV{$f(x)$} be a real even target polynomial of degree $2d$. Then we factorize
\begin{equation}
\mathfrak{F}(z):=1-\abs{f\left(\frac{z+z^{-1}}{2}\right)}^2=\alpha \prod_{i=1}^{4d} (z-r_i)(z^{-1}-r_i).
\end{equation}
Suppose that $\norm{f}_\infty<1$, then for $\mathfrak{F}(z)$ there is no root on $\T$. To construct the maximal solution, we take $\mc{D}=\set{r_i}_{i=1}^{4d}$ to be the set of all roots of $\mathfrak{F}$ in the unit disk $\mathbb{D}$. Then
\begin{equation}
\wt{a}(z)=\sqrt{\alpha}\prod_{i=1}^{4d}(z-r_i),
\end{equation}
is a polynomial and is in $H^2(\mathbb{D})$, and it satisfies $\abs{\wt{a}(z)}^2=\mathfrak{F}(z)$. Following \cite[Theorem 4]{WangDongLin2021}, we construct
\begin{equation*}
    \begin{pmatrix}
        \frac{\wt{a}(z) + \wt{a}^*(z)}{2} + f\left(\frac{z+z^{-1}}{2}\right) & \frac{\wt{a}(z) - \wt{a}^*(z)}{2}\\
        \frac{\wt{a}(z) - \wt{a}^*(z)}{2} & \frac{\wt{a}(z) + \wt{a}^*(z)}{2} - f\left(\frac{z+z^{-1}}{2}\right)
    \end{pmatrix}
\end{equation*}
which is the unitary matrix associated with the maximal solution. It is exactly the unitary matrix $U_d(x, \Psi)$ associated with $(a(z), b(z))$ by exploiting $a(z^2)=z^{-4d} \wt{a}(z)$ and verifying \cref{eqn:QSP_NLFT_connection}. \REV{Note in particular, that the polynomial $a^*$ has no roots in $\overline{\D}$ and therefore must be an outer function. As outer functions are determined by their absolute value on $\T$ by our definition, $a^*$ must be the same outer function constructed in the Weiss algorithm. Hence, our algorithm constructs the maximal solution with theoretical guarantees.}

We will need the following identity for later.
\begin{lemma} If $(a,b)$ is the NLFT of a sequence $F \in \ell^2 (\Z)$, then 
\begin{equation}\label{eq:zero_a}
   a^* (0) = \prod\limits_{j} (1 + |F_j|^2)^{- \frac 1 2} > 0 \, . 
\end{equation}
\end{lemma}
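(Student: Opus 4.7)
My plan is to prove the identity first for compactly supported $F$ by induction, then extend to $F \in \ell^2(\Z)$ by continuity of the NLFT.

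For compactly supported $F$, let $n_1 < n_2 < \cdots < n_k$ enumerate its support. I would run an induction on $j \le k$ proving simultaneously three facts about the NLFT $(a_{(j)}, b_{(j)})$ of $F$ restricted to $\{n_1, \dots, n_j\}$: (i) $a_{(j)}$ is a Laurent polynomial with powers confined to $[n_1 - n_j, 0]$, so that $a_{(j)} \in H^2(\D^*)$ and $a_{(j)}^*(0)$ is the constant term of $a_{(j)}^*$; (ii) $b_{(j)}$ is a Laurent polynomial with powers confined to $[n_1, n_j]$; and (iii) $a_{(j)}^*(0) = \prod_{i \le j}(1+|F_{n_i}|^2)^{-1/2}$. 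The base case $j=0$ is trivial with $(a_{(0)}, b_{(0)}) = (1, 0)$. For the inductive step, the right-multiplication by the factor $M_{n_{j+1}}$ from \cref{eq:defn_NLFT_gen} yields
\[
a_{(j+1)}^*(z) = \frac{a_{(j)}^*(z) - b_{(j)}^*(z)\, F_{n_{j+1}}\, z^{n_{j+1}}}{\sqrt{1+|F_{n_{j+1}}|^2}},
\]
and the key observation is that by (ii) the function $b_{(j)}^*(z)$ has powers in $[-n_j, -n_1]$, so $b_{(j)}^*(z) z^{n_{j+1}}$ has only strictly positive powers of $z$ since $n_{j+1} > n_j$. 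Hence it vanishes at $z=0$ and (iii) follows from the inductive hypothesis. Claims (i) and (ii) for $j+1$ are read off directly from the recurrence formulas for $a_{(j+1)}$ and $b_{(j+1)}$.

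To extend to general $F \in \ell^2(\Z)$, I would truncate $F$ to $[-m, m]$, obtaining $F^{(m)} \to F$ in $\ell^2(\Z)$, and pass to the limit. On the right-hand side, since $\log(1+|F_j|^2) \le |F_j|^2$ gives a summable majorant, dominated convergence yields $\prod_j (1+|F_j^{(m)}|^2)^{-1/2} \to \prod_j (1+|F_j|^2)^{-1/2}$, and this limit is strictly positive because $\sum_j \log(1+|F_j|^2) \le \|F\|_{\ell^2(\Z)}^2 < \infty$. On the left-hand side, the continuity of the NLFT on $\ell^2(\Z)$ (the homeomorphism property from \cite{tsai} quoted in \cref{sec:nlft}) gives convergence of $(a^{(m)}, b^{(m)})$ to $(a,b)$ in a topology fine enough that $a^{(m),*} \to a^*$ in $L^2(\T)$; combined with the fact that the evaluation $g \mapsto g(0)$ on $H^2(\D)$ is nothing but the $0$-th Fourier coefficient on $\T$ and hence continuous in $L^2(\T)$, we conclude $a^{(m),*}(0) \to a^*(0)$, completing the proof.

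The main obstacle is the continuity argument in the extension step, since the compactly supported induction is entirely elementary. If the topology statement from \cite{tsai} is not directly sufficient to yield convergence of the scalar $a^{(m),*}(0)$, a backup route is to prove a Lipschitz estimate on the map $F \mapsto a^*(0)$ in the spirit of \cref{thm:Lip_bds}: the truncations $F^{(m)}$ yield data $(a^{(m)}, b^{(m)})$ with uniformly controlled Szeg\H{o} quantity $\int_\T \log|a^{(m)}|$ (bounded below by the convergent $-\frac{1}{2}\sum_j \log(1+|F_j|^2)$ for $m$ large), so the hypotheses of \cref{thm:Lip_bds} can be invoked uniformly in $m$ to extract the needed scalar convergence.
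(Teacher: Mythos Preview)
Your proposal is correct and follows the same two-step strategy as the paper: establish the identity for compactly supported $F$ first, then extend by a limiting argument (about which the paper is equally terse, writing only ``A limiting argument then allows us to extend \cref{eq:zero_a} to any $F \in \ell^2(\Z)$''). The only cosmetic difference is in the finite-case computation: you induct on the support size while tracking the degree ranges of $a_{(j)}$ and $b_{(j)}$, whereas the paper expands the matrix product $\prod_j (I + A_j)$ binomially and reads the constant term of $a$ off from the $n=0$ summand.
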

\begin{proof}
    We first verify it in the case that $F$ has compact support. From \cref{eq:defn_NLFT_gen}, we have
\[
\prod\limits_{j} \left ( 1+ |F_j| \right )^{\frac 1 2}
\begin{pmatrix}
    a & b \\ - b^* & a
\end{pmatrix} =  \prod\limits_j \begin{pmatrix} 1 & F_j z^j \\ -\overline{F_j}z ^{-j} & 1 
\end{pmatrix}  =  \prod\limits_j \left ( \begin{pmatrix} 1 & 0 \\ 0 & 1 
\end{pmatrix} + \begin{pmatrix} 0 & F_j z^j \\ -\overline{F_j} z^{-j} & 0 
\end{pmatrix} \right ) \, ,
\]
where the product of matrices is understood as lower indexed matrices are to the left of higher-index matrices. Then doing a binomial expansion, this last term equals
\[
\sum\limits_{n \geq 0} \sum\limits_{k_1 < \ldots < k_n} \prod\limits_{j=1}^n \begin{pmatrix} 0 & F_{k_j} z^{k_j} \\ -\overline{F_{k_j}} z^{-{k_j}} & 0 
\end{pmatrix} \, . 
\]
In the sum above, the terms corresponding to $n$ even are diagonal, while the terms with $n$ odd are anti-diagonal. Thus
\[
 \prod\limits_{j} \left ( 1+ |F_j| \right )^{\frac 1 2} a(z)  = \sum\limits_{n \text{ even }} \sum\limits_{k_1 < \ldots < k_n} F_{k_1} z^{k_1} (-\overline{F_{k_1}} z^{-k_2}) \ldots F_{k_{n-1}} z^{k_{n-1}} (-\overline{F_{k_n}} z^{-k_n}) \, ,
\]
which is a Laurent polynomial with constant term $1$,  corresponding to $n=0$. Thus $a$ has constant term $\prod\limits_{j} (1 + |F_j|^2)^{- \frac 1 2}$, i.e., \cref{eq:zero_a} holds when $F$ has compact support. A limiting argument then allows us to extend \cref{eq:zero_a} to any $F \in \ell^2 (\Z)$.
\end{proof}

We have the nonlinear Plancherel inequality below.
\begin{lemma}
    If $(a,b)$ is the NLFT of some $F \in\ell^2 (\Z)$, then
\begin{equation}\label{eq:Plancherel_ineq}
\sum\limits_{n} (1 + |F_n|^2) \geq - \int\limits_{\T} \log (1- |b(z)|^2)  \, , 
\end{equation}
where equality holds if and only if $a^*$ is outer.
\end{lemma}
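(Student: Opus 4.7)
Observing the determinant identity $a a^* + b b^* = 1$ gives $1 - |b(z)|^2 = |a(z)|^2$ on $\T$, so the integral on the right hand side becomes $-\int_{\T} \log(1 - |b|^2) = -2 \int_{\T} \log|a|$. Using that $|a^*(z)| = |a(z)|$ for $z \in \T$ (since $z^{-1} = \bar z$ there), the desired inequality is equivalent to
\begin{equation*}
    \tfrac{1}{2} \sum_{n} \log(1 + |F_n|^2) \geq -\int_{\T} \log|a^*| \, .
\end{equation*}
The plan is to combine the product formula \cref{eq:zero_a} from the preceding lemma, which after taking logarithms reads
\begin{equation*}
    \log a^*(0) = -\tfrac{1}{2} \sum_{n} \log(1 + |F_n|^2) \, ,
\end{equation*}
with Jensen's inequality applied to the non-vanishing function $a^* \in H^2(\D)$.

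The key input is the classical Jensen inequality for $H^2(\D)$ functions: for any nonzero $g \in H^2(\D)$,
\begin{equation*}
    \log|g(0)| \leq \int_{\T} \log|g| \, ,
\end{equation*}
with equality if and only if $g$ has trivial Blaschke and singular inner factors, i.e., $g$ is outer (up to a unimodular constant). Applying this to $g = a^*$ and using $a^*(0) > 0$ from \cref{eq:zero_a}, we obtain $\log a^*(0) \leq \int_{\T} \log|a^*|$. Multiplying by $-2$ and substituting the product formula yields
\begin{equation*}
    \sum_n \log(1+|F_n|^2) = -2 \log a^*(0) \geq -2 \int_{\T} \log|a^*| = -\int_{\T} \log(1-|b|^2) \, ,
\end{equation*}
which is the claimed nonlinear Plancherel inequality.

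For the equality case, Jensen's inequality becomes equality precisely when $a^*$ is outer (the normalization $a^*(0) > 0$ fixes the unimodular constant to be $1$), which is exactly the stated condition. There is no substantive obstacle here: the entire argument reduces to Jensen's inequality once one recognizes that \cref{eq:zero_a} computes $a^*(0)$ in terms of the nonlinear Fourier data, and that $1 - |b|^2 = |a^*|^2$ on $\T$. The only small care needed is to note that $\sum_n \log(1+|F_n|^2)$ is automatically finite for $F \in \ell^2(\Z)$ (since $\log(1+x) \leq x$), so the left hand side is always finite; the right hand side may be $+\infty$, in which case equality fails and $a^*$ is necessarily non-outer.
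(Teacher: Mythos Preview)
Your proof is correct and follows essentially the same route as the paper's: both reduce the inequality to $\log a^*(0) \le \int_{\T}\log|a^*|$ via the determinant identity and \cref{eq:zero_a}, and both characterise equality through the inner factor of $a^*$. The only cosmetic difference is that the paper writes out the inner--outer factorisation $a^* = IO$ explicitly (using the mean value property of $\log|O|$ and $|I(0)|\le 1$), whereas you invoke Jensen's inequality for $H^2(\D)$ as a black box; these are the same argument. One minor remark: your closing comment about the right-hand side possibly being $+\infty$ is unnecessary, since $a^*\in H^2(\D)$ with $a^*(0)>0$ already forces $\log|a^*|\in L^1(\T)$.
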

\begin{proof}
If $(a,b)$ is the NLFT of a sequence $F \in \ell^2 (\Z)$, then by the inner-outer factorization theorem for $H^{\infty} (\D)$ functions \cite[Corollary 5.6, Chapter II]{garnett}, we may write $a^* = I O$ where $I$ and $O$ are inner and outer functions on $\T$, respectively. By \cref{eq:det_cond_1} we may write
\[
\int\limits_{\T} \log (1- |b|^2) =  2\int\limits_{\T} \log |a^*| = 2 \int\limits_{\T} \log |O| = 2 \log |O (0)|
\]
where in the last equality, we used the mean value theorem for the harmonic function $\log |O(\cdot)|$. Then this last term equals
\[
2 \log |a^* (0)| - 2 \log |I(0)|= -\sum\limits_{n} \log (1 + |F_n|^2) - 2 \log |I(0)| \, ,
\]
where the last equality follows from \cref{eq:zero_a}. 
Thus as in \cite[(3.1) and the Remark on p.16]{tsai}, we have the nonlinear Plancherel identity
\begin{equation}\label{eq:nonlinear_Planch_finite}
 \sum\limits_{n} \log (1 + |F_n|^2) = - \int\limits_{\T} \log (1- |b(z)|^2) - 2 \log |I(0)| \, .
\end{equation}

 In particular, for every NLFT $(a,b)$, because $|I(0)|\leq \|I\|_{\infty} = 1$ by the maximum principle, we have the nonlinear Plancherel inequality \cref{eq:Plancherel_ineq}, where equality holds if and only if $|I (0)| = 1$. By the maximum principle, this can only occur if $I$ equals a unimodular constant $\lambda$. Because $a^* = \lambda O$, the proof of the lemma will be completed once we show $\lambda =1$, since then $a^*$ will equal the outer function $O$. But $a^* (0) > 0$ by \cref{eq:zero_a}. And $O (0)  >0 $ because
 \[
 \log O (0) = (\log |O| + \I H \log |O|) (0) 
 \]
 is the constant Fourier coefficient of $\log |O| + \I H \log |O|$, which is the constant Fourier coefficient of the real-valued function $\log |O|$, and so is real-valued. Thus the unimodular constant $\lambda = \frac{a^* (0)}{O (0)} > 0$, meaning $\lambda =1$. This completes the proof. 
\end{proof}

Given $b$, the function $a$  constructed in \cref{thm:construct_a} satisfies $a^*$ is outer, as can be seen in \cref{eq:a_G}. Thus given $b$, $a$ is the unique function for which we have equality in \cref{eq:Plancherel_ineq}, i.e.,
\[
\sum\limits_{n} (1 + |F_n|^2) = -\int\limits_{\T} \log (1- |b(z)|^2) \, .
\]
This not only yields \cref{plancherel} for iQSP after the appropriate change of variables, but also justifies the name ``maximal solution'' from the perspective of nonlinear Fourier analysis, since $a$ is the unique function for which $(a,b)$ is the NLFT of a sequence $F \in \ell^2 (\Z)$ for which 
\[
-\sum\limits_{n} (1 + |F_n|^2) 
\]
achieves its maximum value of
\[
\int\limits_{\T} \log (1- |b(z)|^2) \, .
\]

\subsection{Proof of \texorpdfstring{\cref{thm:main}}{main Theorem}}
We first show the existence of $\Psi$.
Given $x \in [0,1]$, let $\theta \in [0, \frac{\pi }{2}]$ be the unique number for which 
\[
x = \cos \theta\, , 
\]
and then set
\[
z \equiv e^{2 \I \theta} \, .
\]
Define
\[
b (z)\equiv \I f(x) \, 
\]
for $z \in \T \cap \C^{+}$ and extend $b$ evenly across to the lower-half plane
as in \cite[Section 4]{AlexisMnatsakanyanThiele2023}. 
Then $b$ is bounded in absolute value away from $1$.
By \cref{thm:construct_a}, there exists an outer function $a^*$ on $\D$ such that the pair $(a,b) \in \mathbf{B}$. Applying \cref{thm:RH_a_bded_below}, we find a factorization
 \begin{equation}\label{eq: factorization2}
        (a,b) = (a_-,b_-) (a_+, b_+)
    \end{equation}
with  $(a_{-}, b_{-}) \in \mathbf{H}_{0} ^*$ and $(a_{+}, b_{+}) \in \mathbf{H}$. By \cite[Theorems 2 and 9]{AlexisMnatsakanyanThiele2023}, $(a_{-}, b_{-})$ and $(a_{+}, b_{+})$ are the nonlinear Fourier transforms (NLFTs) of sequences supported on the negative and nonnegative integers, respectively. Let $\{F_k\}$ be the sum of these sequences. Observe that $(a,b)$ is the NLFT of $\{F_k\}$, see \cite[Section 7]{AlexisMnatsakanyanThiele2023}. 
Define $\{\psi_k\} \in \mathbf{P}$ by 
\begin{equation}\label{eq:defn_F_psi}
F_k \equiv \I \tan \psi_{|k|} \, .
\end{equation}
Then \cite[Sections 4 and 8]{AlexisMnatsakanyanThiele2023} shows $\{\psi_k\}$ is the sequence of phase factors associated to the signal $f$, which shows the existence part of \cref{thm:main}.

Uniqueness follows similarly from the uniqueness of \cref{thm:RH_a_bded_below} and the argument in
\cite[paragraphs between (8.1) and (8.2)]{AlexisMnatsakanyanThiele2023}.

A careful reading of this section up till now, along with the existence and uniqueness proofs above yields the following result, which is a more comprehensive version of \cref{lem:f_to_phase_factor}.

\begin{lemma}
Let $k \in \mathbb{N}$. Given any $f \in \mathbf{S}$, we can recover the phase factor $\psi_k$ via the maps
    \begin{equation}
f \mapsto \frac{b}{a} \mapsto (A_k ,B_k) \mapsto F_k \mapsto \psi_k \, , 
\end{equation}
where 
\[
F_k := \frac{(B_k z^{-k}) (0)}{A_k ^*(0)} \, ,
\]
and
\[
\psi_k := \arctan\left(-\I F_k\right) \, ,
\]
and where $(A_k, B_k)$ is the unique element of $H^2 (\D^*) \times z^{k} H^2 (\D)$ satisfying
\begin{equation}
(\Id +M_k) \begin{pmatrix}
    A_k \\ B_k
\end{pmatrix} = \begin{pmatrix}
    1 \\ 0
\end{pmatrix} \, ,
\end{equation}
in which
\[
M_k = \lim\limits_{\eta \to 0} \begin{pmatrix}
    0 & P_{\D^*}  \frac{b ^* }{a_{\eta} ^*} \\ - z^k  P_{\D} z^{-k} \frac{b}{a_{\eta}} & 0
\end{pmatrix} \, .
\]

\end{lemma}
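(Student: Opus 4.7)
The plan is to assemble the lemma from the pieces already established in this section. First, given $f \in \mathbf{S}$, set $b(z) = \I f(x)$ under the change of variables \cref{eq:change_of_variables}, extended evenly as in the existence proof of \cref{thm:main}. Because $f$ satisfies the Szeg\H o condition \cref{eq:Szego}, so does $b$ on $\T$, and \cref{thm:construct_a} then produces the unique $a$ with $(a,b) \in \mathbf{B}$. This makes the map $f \mapsto b/a$ well-defined. Applying \cref{thm:RH_a_bded_below} at index $k$, we obtain the unique factorization $(a,b) = (a_{<k}, b_{<k})(a_{\geq k}, b_{\geq k})$ with the indicated membership, and set $(A_k, B_k) := a_{\geq k}(\infty) (a_{\geq k}, b_{\geq k})$, which lies in $\mathcal{H}_k = H^2(\D^*) \times z^k H^2(\D)$ because $(a_{\geq k}, b_{\geq k}) \in \mathbf{H}_{\geq k}$ and $a_{\geq k}(\infty) > 0$.

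Next I would verify that this $(A_k, B_k)$ satisfies the linear system $(\Id + M_k)(A_k, B_k) = (1,0)$. This is the analogue of \cref{eq:fixed_pt} with the projection modified to produce elements of $H^2(\D^*) \times z^k H^2(\D)$; it follows by multiplying the factorization on the right by $(a_{\geq k}, b_{\geq k})^{-1}$ to get the identities
\[
\frac{a_{<k}}{a} = a_{\geq k}^* + \frac{b}{a} b_{\geq k}^*, \qquad -\frac{b_{<k}}{a} = b_{\geq k} - \frac{b}{a} a_{\geq k},
\]
precisely as in \cref{eq:identities_factorization}, and then projecting onto $H^2(\D^*)$ and $z^k H^2(\D)$ respectively, taking the weak limit in the truncation parameter $\eta$ of $a$ exactly as in \cref{subsection:unbded_op}. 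Uniqueness of $(A_k, B_k)$ follows from the same antisymmetric operator argument as in \cref{lem:iM_self_adjt}, adapted to the projections appearing in $M_k$: the shift $z^k P_{\D} z^{-k}$ is still an orthogonal projection on $L^2(\T)$, so the operator $\I M_k$ is again self-adjoint on an appropriate dense domain, making $\Id + M_k$ invertible. Alternatively, and more cleanly, one may reduce to the $k=0$ case by the substitution $(a, b) \mapsto (a, bz^{-k})$ used in the opening of the proof of \cref{thm:RH_a_bded_below}.

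Finally, the formula for $F_k$ is obtained by unwinding definitions: by \cite[Eq.~(6.13)]{AlexisMnatsakanyanThiele2023} the nonlinear Fourier coefficient at index $k$ equals $(b_{\geq k} z^{-k})(0)/a_{\geq k}^*(0)$, and substituting $(a_{\geq k}, b_{\geq k}) = (A_k, B_k)/a_{\geq k}(\infty)$ makes the common positive scalar cancel, yielding $F_k = (B_k z^{-k})(0)/A_k^*(0)$. The relation $\psi_k = \arctan(-\I F_k)$ then comes from inverting \cref{eq:defn_F_psi}, noting $F_k$ is pure imaginary by the symmetry of $b$.

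The main obstacle is really the rigorous treatment of $M_k$ when $|a|$ is not bounded below on $\T$, but this is precisely the content of \cref{subsection:unbded_op}, so the proof amounts to checking that the arguments there go through for the shifted projection $z^k P_\D z^{-k}$ in place of $P_\D$; this is routine because conjugation by the unimodular $z^k$ preserves both orthogonality of the projection and the anti-symmetry of the off-diagonal block.
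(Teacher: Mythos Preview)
Your proposal is correct and mirrors the paper's approach exactly: the paper itself offers no standalone proof of this lemma, stating only that ``a careful readthrough of this section up till now, along with the existence and uniqueness proofs above yields the following,'' and your sketch is precisely that careful readthrough. In particular, your identification $(A_k,B_k)=a_{\geq k}(\infty)(a_{\geq k},b_{\geq k})$ and its verification via the identities \cref{eq:identities_factorization} and \cref{lem:defn_AB} is exactly the argument in the uniqueness half of the proof of \cref{thm:RH_a_bded_below}, and your suggested reduction to $k=0$ by the substitution $(a,b)\mapsto(a,bz^{-k})$ is the one the paper uses there as well.
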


As for the Lipschitz bounds, we proceed similar to \cite[Section 8]{AlexisMnatsakanyanThiele2023}, but with some minor changes. We begin by describing how to recover any phase factor $\psi_k$ from a signal $f$. To compute the Lipschitz constant of
\[
f \mapsto \mathbf{\Psi} \, ,
\]
it suffices to compute the Lipschitz constant of the map
\[
f \mapsto \psi_k 
\]
for arbitrary $k$. Assume without loss of generality that $k=0$, and write this last map as the composition of maps
\[
f \mapsto \frac{b}{a} \mapsto F_0 \mapsto \psi_0 \, . 
\]
We compute the Lipschitz constant for each of these maps, noting that if $f \in \mathbf{S}_{\eta}$ for $0<\eta < \frac{1}{2}$, then by \cref{eq:lower_bd_A} the resulting pair $(a,b)$ satisfies
\[
|a (z) | > \delta \, 
\]
for all $z \in \T$, where
\begin{equation}\label{eq:delta}
\delta := \sqrt{\frac{3}{2} } \eta^{ \frac 1 2}  \, .
\end{equation}
In particular, we have $(a,b) \in \mathbf{B}_{S}$ for $e^{-S} := \delta$.

As $\arctan (x)$ has slope between $-1$ and $1$, by \cref{eq:defn_F_psi} we have
\[
|\psi_0 - \tilde{\psi}_0 | \leq |F_0 -\tilde{F}_0| \, ,
\]
so the last of the maps has Lipschitz constant at most $1$. And by \cref{thm:Lip_bds}, the middle map sending $\frac{b}{a}$ to $F_0$ has Lipschitz constant at most
\[
2^{ \frac 1 2} \delta^{-2} (1+ \delta)  \, .
\]

As for the Lipschitz constant of the map sending $f$ to $\frac{b}{a}$,
we write 
\[
\frac{b'}{a'} - \frac{b}{a}  = \frac{b' - b}{a'} + \frac{b(a-a')}{a'a} \, ,
\]
so that we can estimate
\[
\left \| \frac{b'}{a'} - \frac{b}{a} \right \|_{L^2 (\T)} \leq \delta^{-1} \left \| b' - b \right \|_{L^2 (\T)} + \delta^{-2} \left \| a- a' \right \|_{L^2 (\T)}  
\]

By \cite[(8.5)]{AlexisMnatsakanyanThiele2023}, we have
\[
\left \| a - a' \right \|_{L^2 (\T)} \leq \left \| \sqrt{1 - |b|^2} - \sqrt{1- |b'|^2} \right \|_{ L^2 (\T)} + \frac{1}{4} \left \| \log \left | 1 - |b|^2 \right | - \log \left | 1- |b'|^2 \right | \right \|_{ L^2 (\T)} \, . 
\]
Recall $|b| = |f|$ takes values in $[0, 1-\eta]$. By the mean value theorem, the Lipschitz constants of the functions $\sqrt{1-x^2}$ and $\log (1-x^2)$ on the interval $[0, 1-\eta]$ are at most $ \delta^{-1} $ and $2\delta^{-2}$, respectively. Thus we obtain
\[
 \left \| a - a' \right \|_{L^2 (\T)} \leq (\delta^{-1} + \frac{ \delta^{-2}}{2}  ) \left \| |b| - |b'| \right \|_{ L^2 (\T)} \leq  (\delta^{-1} + \frac{\delta^{-2}}{2}  ) \left \| b - b' \right \|_{ L^2 (\T)} \, ,
\]
and therefore
\[
\left \| \frac{b'}{a'} - \frac{b}{a} \right \|_{L^2 (\T)} \leq (\delta^{-1} + \delta^{-3} +  \frac{ \delta^{-4}}{2}   )   \left \| b- b' \right \|_{L^2 (\T)} 
\]
Putting everything together, and in particular applying \cref{thm:RH_a_bded_below} with parameter $\delta$ given by \cref{eq:delta} and plugging in \cref{eq:delta}
yields that when $f \neq f'$, we have
\[
\frac{|\psi_0 - \psi_0 ' |}{ \left \| f - f' \right \|_{\mathbf{S}}} \leq 1 \cdot 2^{\frac 1 2} \delta^{-2} (1+\delta) \cdot (\delta^{-1} + \delta^{-3} +  \frac{ \delta^{-4}}{2}   )   \leq \delta^{-6} 2^{\frac 1 2} (1+\delta) \cdot (\delta^{3} + \delta+  \frac{1}{2} ) \, . 
\]
By \cref{eq:delta}, and using the fact that $0< \eta < \frac{1}{2}$ also implies $0<\delta <\frac{\sqrt{3}}{2}$, we have the above is at most
\[\eta ^{-3} \left ( \frac{2}{3} \right )^{3} 2^{\frac 1 2} (1+ \delta)(\delta^{3} + \delta+  \frac{1}{2} ) \leq  \eta ^{-3} \cdot \frac{8}{27} 2^{\frac 1 2} \cdot \left (1 + \frac{\sqrt{3}}{2} \right ) \cdot \left ( \left (\frac{\sqrt{3}}{2} \right  )^3 + (\frac{\sqrt{3}}{2}) + \frac{1}{2} \right ) \leq 1.6 \eta^{-3} \, .
\]

\section{Complexity analysis of Riemann-Hilbert-Weiss algorithm}\label{sec:complexity_analysis}
In this section we prove \cref{thm:main_alg}. The proof is given in three parts: the error, complexity and bit analyses. The formal statement of the error analysis is presented in \cref{thm:sufficient_condition_N}, whose proof is displayed in Sections \ref{fft error}-\ref{proof of thm 8}. In \cref{computational cost proof}, we establish the computational cost bound and in \cref{bit requirement proof}, we prove the bit requirement part.

Henceforth, we fix $d\in \N$, $0<\eta <\frac{1}{2}$,  and $b$ to be a Laurent polynomial of degree $d$ satisfying $b(z)=b(z^{-1})$ and $\norm{b}_{\infty} \leq 1-\eta$, as in the assumption of the theorem. Also fix $0\leq k \leq d$, $0<\epsilon<1$ and $N$ an even integer as in \cref{thm:sufficient_condition_N}.

Let $a^*$ be the unique outer function with $|a^* (z)|= \sqrt{1-|b(z)|^2}$ for $z\in \T$ and recall 
(\cite{tsai}) that because of 
\begin{equation}
   aa^*=1-bb^* 
\end{equation}
we have that $a^*$ is a polynomial of degree $2d$ whose zeros with multiplicities are those of $1-bb^*$ that are outside the closed unit disc.

Let $\{z_\ell \}_{0\le \ell <N}$ be the $N$th roots of unity
in natural counterclockwise order. For a continuous function $u$ on the unit circle,
let $\mathcal{F}(u)$ be the Fourier transform
of $u$ as function on $\T$ and $\mathcal{F}^N(u)$ the Fourier transform of $u$ on the finite group $\{z_\ell \}_{0\le \ell <N}$,
that is,
\begin{equation}
    \mathcal{F}(u)(j) := \int_{\mathbb{T}} u(z) z^{-j}, \quad \mathcal{F}^N (u)(j): = \frac{1}{N} \sum_{\ell=0}^{N-1} z_{\ell}^{-j} u(z_{\ell}).
\end{equation}
Our argument works for any sufficiently large integer $N$, and the coefficients 
$\mathcal{F}^N(u)$ can be efficiently computed by the Fast Fourier transform. 
We shall assume $0<\eta< \frac 12$ and  $N>2d\eta^{-1}$
so that
\begin{equation}
\label{Nlarge}
    (1-\eta)^{-\frac{N}{2d}}>2\, .
\end{equation}

\subsection{Errors analysis for the output \texorpdfstring{$\hat{c}_j$}{} of the Weiss algorithm}\label{fft error}     

Recall that $|b|\le 1-\eta$ on $\T$. As $b$ is a Laurent polynomial of degree $2d$, this implies that $|b|<1$ 
in an annulus about $\T$ of width comparable with $\eta/d$. 
This leads to the following Proposition.

Let $r>1$ be such that
\begin{equation}
    r^{2d}=(1-\eta)^{-1}.
\end{equation}
and note that $r^N>2$ by \cref{Nlarge}. Define
\begin{equation}
    A_r:=\{z: r^{-1}<|z|<r\}\ .
\end{equation} 
\begin{prop}\label{log1bbound}
    The function $\log\sqrt{1-|b|^2}$ has an analytic extension to  $A_r$ which is pointwise bounded by 
$\frac 12 \abs{\log{\eta}}$.\end{prop}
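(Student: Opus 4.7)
The plan is to extend $\log\sqrt{1-|b(z)|^2}$ analytically from $\mathbb{T}$ to $A_r$ by writing it as $\tfrac{1}{2}\log(1-P(z))$ where $P$ is the natural analytic extension of $|b|^2$, then to bound $|P|$ by $1-\eta$ on $A_r$ via the maximum-modulus principle, and finally to conclude via the Taylor expansion of $\log(1-w)$.

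First, I would define $P(z):=b(z)b^*(z)$, which on $\mathbb{T}$ equals $b(z)\overline{b(z)}=|b(z)|^2$. Using that $b$ is a Laurent polynomial of degree $d$ with $b(z)=b(z^{-1})$, together with $b^*=-b$ (since $b$ is pure imaginary on $\mathbb{T}$), we see that $P(z)=-b(z)^2$ is a Laurent polynomial with powers ranging over $[-2d,2d]$ on $\mathbb{C}\setminus\{0\}$, and satisfies $P(z)=P(z^{-1})$, so that $|P(z)|=|P(z^{-1})|$ for all $z\ne 0$.

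Next, I would prove $|P(z)|\le 1-\eta$ on $A_r$. The key observation is that $Q(z):=z^{2d}P(z)=-(z^d b(z))^2$ is a polynomial of degree at most $4d$, with $|Q(z)|=|P(z)|\le (1-\eta)^2$ on $\mathbb{T}$. Applying the maximum modulus principle to $Q(z)/z^{4d}$, which is analytic and bounded on $\{|z|\ge 1\}\cup\{\infty\}$, yields $|Q(z)|\le (1-\eta)^2 |z|^{4d}$, and hence $|P(z)|\le (1-\eta)^2 |z|^{2d}$ for $|z|\ge 1$. At $|z|=r$ this reads $|P(z)|\le (1-\eta)^2 r^{2d}=1-\eta$ by the defining relation $r^{2d}=(1-\eta)^{-1}$; for $r^{-1}\le |z|\le 1$, the same bound follows from the symmetry $|P(z)|=|P(z^{-1})|$.

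Consequently $1-P(z)$ takes values in the open disk $\{|w-1|<1-\eta\}$, which is simply connected and avoids $0$, so the series
\[
F(z) := -\tfrac{1}{2}\sum_{n\ge 1}\frac{P(z)^n}{n}
\]
converges absolutely and uniformly on compact subsets of $A_r$, defining an analytic function there. Since on $\mathbb{T}$ the quantity $P(z)=|b(z)|^2\in[0,(1-\eta)^2]$ is real and non-negative, $F$ restricts to the real-valued function $\tfrac{1}{2}\log(1-|b|^2)=\log\sqrt{1-|b|^2}$, giving the desired analytic extension. The pointwise bound is then immediate from the triangle inequality and $|P|\le 1-\eta$:
\[
|F(z)|\le \tfrac{1}{2}\sum_{n\ge 1}\frac{|P(z)|^n}{n}=-\tfrac{1}{2}\log(1-|P(z)|)\le -\tfrac{1}{2}\log\eta=\tfrac{1}{2}|\log\eta|.
\]
The only nontrivial step is the maximum-modulus bound on $|P|$; once one recognizes $z^{2d}P(z)$ as a polynomial of degree at most $4d$, the rest is routine.
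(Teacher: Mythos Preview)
Your proof is correct and follows essentially the same approach as the paper: extend $\log\sqrt{1-|b|^2}$ as $\tfrac12\log(1-bb^*)$, bound $|bb^*|\le 1-\eta$ on $A_r$ via the maximum modulus principle, and finish with the Taylor estimate $|\tfrac12\log(1-w)|\le -\tfrac12\log(1-|w|)\le \tfrac12|\log\eta|$. The only cosmetic differences are that the paper bounds $|b|$ and $|b^*|$ separately on the inner annulus (using $z^db\in H^2(\mathbb D)$) and then invokes Schwarz reflection to reach the outer annulus, whereas you work on the outer annulus via $Q/z^{4d}$ and transfer to the inner annulus using the symmetry $P(z)=P(z^{-1})$; you also exploit the extra fact $b^*=-b$ (from $b$ being pure imaginary) to write $P=-b^2$, which the paper does not need.
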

\begin{proof}
By the Schwartz reflection principle across $\T$, it suffices to present the analytic continuation on the inner annulus
\begin{equation}
    \tilde{A}_r=\{z: r^{-1}< |z|< 1\}
\end{equation}
and establish the claimed estimate there.
We have
\[\log\sqrt{1-|b|^2}=\frac 12 \log(1-bb^*)\, ,\]
where the right-hand-side has an analytic extension to $\tilde{A}_r$ provided $\log(1-bb^*)$ has no zero in $\tilde{A}_r$.
As the polynomials $z^db(z)$ and $z^db^*(z)$ are in $H^2(\D)$, we conclude by the maximum principle 
\begin{equation}\label{eq:bd_b_annulus}
\abs{b(z)}\le (1-\eta)r^{d}, \ \abs{b^*(z)}\le (1-\eta)r^d
\end{equation}
for each $z\in \tilde{A}_r$. Moreover, 
\[\abs{b(z) b^*(z)}\le (1-\eta)^2 r^{2d}= 1-\eta\, .\]
It follows that $1-bb^*$ has no zero in $A_r$ and 
\[\abs{\frac 12 \log(1-b(z)b^*(z))}\le \frac 12 |\log \eta|\, ,\]
where the last estimate follows by applying the triangle inequality to the Taylor expansion of the analytic function $z \mapsto \log (1-z)$.
This completes the proof of the proposition.
\end{proof}

A function with bounded analytic extension to the annulus $A_r$ is well approximated by a discrete Fourier series (see e.g. 
\cite{TrefethenWeideman14}).
Define for $j\in \Z$
$$
r_j := \mathcal{F} (\log \sqrt{1-|b|^2}) (j) \, ,\ 
\hat{r}_j = \mathcal{F}^{N} (\log \sqrt{1-|b|^2})(j) \, ,
$$
\begin{prop}\label{rjprop}
    We have, for every $j\in \Z$,
   \begin{equation}\label{rjbound}
     \abs{r_j}\le \frac 12 r^{-|j|} \abs{\log\eta }\, .
   \end{equation} 
   We have, for every $0\le j\le \frac N2$,
     \begin{equation}\label{rjdifference}
       \abs{r_j-\hat{r}_j}\le  2r^{j-N}\abs{\log \eta} \, .
   \end{equation} 
\end{prop}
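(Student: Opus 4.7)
The plan is to prove the two bounds in the stated order, since \eqref{rjdifference} will feed on \eqref{rjbound}.

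For \eqref{rjbound}, set $g := \log\sqrt{1-|b|^2}$, which by \cref{log1bbound} extends analytically to $A_r$ with $|g|\le \tfrac12|\log\eta|$ throughout. I would represent its Fourier coefficients by the Laurent contour integral
\[
r_j = \frac{1}{2\pi \I}\oint_{|z|=\rho} g(z)\, z^{-j-1}\,\ud z,
\]
valid for every $\rho \in (r^{-1},r)$, and apply the $ML$-inequality to obtain $|r_j|\le \rho^{-j}\cdot\tfrac12|\log\eta|$. Taking $\rho\to r^{-}$ when $j\ge 0$ and $\rho\to (r^{-1})^+$ when $j<0$ then yields $|r_j|\le r^{-|j|}\cdot\tfrac12|\log\eta|$, which is exactly \eqref{rjbound}.

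For \eqref{rjdifference}, I would invoke the standard aliasing identity
\[
\hat r_j \;=\; \sum_{m\in\Z} r_{j+mN},
\]
which follows by substituting the Laurent expansion of $g$ (absolutely convergent on the compact set $\T\subset A_r$) into the discrete Fourier sum and applying the root-of-unity orthogonality $\tfrac{1}{N}\sum_{\ell=0}^{N-1} z_\ell^{m-j}=\mathbf{1}_{\{m\equiv j\,(\mathrm{mod}\,N)\}}$. Then $r_j-\hat r_j = -\sum_{m\ne 0} r_{j+mN}$, and \eqref{rjbound} bounds the right-hand side by $\tfrac12|\log\eta|\sum_{m\ne 0} r^{-|j+mN|}$. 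For $0\le j\le N/2$, the restriction $j\le N/2$ guarantees $|j-\ell N|=\ell N-j$ for every $\ell\ge 1$, so the sum splits into two geometric series of common ratio $r^{-N}$, totalling $(r^j+r^{-j})\cdot\frac{r^{-N}}{1-r^{-N}}$. Using $r^{-j}\le r^j$ and the hypothesis \eqref{Nlarge}, which gives $r^N>2$ and hence $1-r^{-N}\ge 1/2$, this reduces to at most $4r^{j-N}$, from which the claimed bound $2r^{j-N}|\log\eta|$ follows.

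The analytic inputs (boundedness of $g$ on $A_r$, absolute convergence of its Laurent series on $\T$) are already in hand from \cref{log1bbound}, so the main--and quite modest--obstacle is careful index bookkeeping in the aliasing tail: expressing $|j+mN|$ correctly for $m<0$ under the constraint $0\le j\le N/2$ and verifying that \eqref{Nlarge} is quantitatively strong enough to make the geometric tail sum converge with the constant $1/(1-r^{-N})\le 2$. No function-theoretic subtlety beyond what \cref{log1bbound} already delivers is needed.
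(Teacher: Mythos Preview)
Your proposal is correct and follows essentially the same approach as the paper: contour shifting with the bound from \cref{log1bbound} for \eqref{rjbound}, and the aliasing identity plus geometric tail summation using \eqref{Nlarge} for \eqref{rjdifference}. Your bookkeeping for the tail is in fact slightly cleaner than the paper's, which contains a harmless misprint ($r^{-j+N}$ in place of $r^{-j-N}$) in the intermediate estimate.
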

\begin{proof}
    We write the Fourier coefficient as a contour integral
    with a curve $\gamma$ in the annulus $A_r$ that is homotopic to the standard contour around $\T$ as follows:
    \[r_j=
    \frac 12\int_\T z^{-j}  \log(1-b(z)b^*(z))=
    \frac 12\int_\gamma z^{-j}  \log(1-b(z)b^*(z)) \frac {dz}{2\pi i z}\, .\]
     Passing to to a contour $\gamma$ that describes a circle about the origin of radius near $r$
    on the one hand and near $r^{-1}$ on the other hand, we obtain  \cref{rjbound} with \cref{log1bbound}.

Substituting the 
function $\log\sqrt{1-|b|^2}$ by its Fourier expansion 
in the definition of $\widehat{r}_j$, we obtain
\[\hat{r}_j=
\frac 1N \sum_{\ell=0}^{N-1}
z_\ell^{-j} \sum_{k\in \Z} r_k z_\ell^k\ .
\]
Interchanging the order of summation and using that
\begin{equation}
    \sum_{\ell=0}^{N-1}z_\ell^{-j+k} 
\end{equation}
is equal to $N$ if $k$ is of the form $j+Nk'$ with $k'\in \Z$ 
and equal to $0$ otherwise, we obtain
\begin{equation}\label{eq:FFT_to_std_coeffs_1}
    \hat{r}_j=\sum_{k'\in \mathbb{Z}} r_{j+Nk'}\, .
\end{equation}
Splitting the sum into positive $k$ and negative $k$
and using geometric decay of the summands, we obtain
    \begin{equation}
  \abs{\widehat{r}_j-r_j}\le \sum_{k\in \mathbb{Z},k\neq 0} |r_{j+Nk}|
  \leq  \frac{r^{-j+N} + r^{j-N}}{2} \sum_{k=0}^\infty  r^{-Nk} |\log \eta | \, .
\end{equation}
 The geometric sum is less than $2$
by assumption \cref{Nlarge} on $N$, and we dominate $r^{-j+N}$ by $r^{j-N}$ using the fact that $0 \leq j \leq \frac{N}{2}$. This shows \cref{rjdifference}
and completes the proof of the proposition.
\end{proof}

Define 
\begin{equation}
    G(z) = 2\sum_{\ell = 1}^{\infty} r_{-\ell} z^{-\ell} + r_0 \, ,\ 
    \hat{G}(z) := 2\sum_{\ell = 1}^{\frac{N}{2} }\hat{r}_{-\ell} z^{-\ell} + \hat{r}_0 \, .
\end{equation}
\begin{prop} \label{propgdifference}
    We have for every $z\in \T$,
    \begin{equation}\label{gdifference}
        \abs{ G(z) - \hat{G}(z)}\le 
5 \frac {r^{-N/2}}{1-r^{-1}} \abs{\log(\eta)}
    \, .
    \end{equation}
\end{prop}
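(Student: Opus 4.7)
The plan is to split $G(z)-\hat G(z)$ on $\T$ into three Fourier pieces and then apply the bounds \cref{rjbound} and \cref{rjdifference} termwise. Writing the difference as
\[
G(z)-\hat G(z) = (r_0-\hat r_0) \;+\; 2\sum_{\ell=1}^{N/2}(r_{-\ell}-\hat r_{-\ell})\,z^{-\ell} \;+\; 2\sum_{\ell>N/2} r_{-\ell}\,z^{-\ell},
\]
and noting $|z^{-\ell}|=1$ on $\T$, the triangle inequality reduces the proof to bounding a constant term, a finite-range aliasing error, and a tail truncation error.

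For the finite-range middle sum I would invoke \cref{rjdifference} applied to negative indices. Proposition \ref{rjprop} only states this bound for $0\le j\le N/2$, but its proof starts from the aliasing identity \cref{eq:FFT_to_std_coeffs_1}, which is valid for every $j\in\Z$, and then uses \cref{rjbound}, which is symmetric under $j\mapsto -j$. So the identical derivation yields $|r_{-\ell}-\hat r_{-\ell}|\le 2r^{\ell-N}|\log\eta|$ for $1\le\ell\le N/2$. Summing the geometric series gives
\[
\sum_{\ell=1}^{N/2} r^{\ell-N} \;=\; \frac{r^{-N/2}-r^{-N}}{1-r^{-1}} \;\le\; \frac{r^{-N/2}}{1-r^{-1}}.
\]
For the tail I would apply \cref{rjbound} directly, giving $|r_{-\ell}|\le\tfrac12 r^{-\ell}|\log\eta|$ and
\[
\sum_{\ell>N/2} r^{-\ell} \;=\; \frac{r^{-N/2-1}}{1-r^{-1}}.
\]
For the constant term, \cref{rjdifference} at $j=0$ gives $|r_0-\hat r_0|\le 2r^{-N}|\log\eta|$.

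Adding the three contributions with the correct multiplicities, the total is at most
\[
|\log\eta|\left[\,2r^{-N} \;+\; \frac{4\,r^{-N/2}}{1-r^{-1}} \;+\; \frac{r^{-N/2-1}}{1-r^{-1}}\,\right] \;=\; \frac{|\log\eta|\,r^{-N/2}}{1-r^{-1}}\Big(4+r^{-1}+2r^{-N/2}(1-r^{-1})\Big).
\]
The final step is to check that the bracketed constant does not exceed $5$. Since $r>1$ one has $r^{-1}<1$, and the assumption \cref{Nlarge} ensures $r^{-N/2}$ is small enough that the residual $2r^{-N/2}(1-r^{-1})$ term is absorbed into the slack left by $4+r^{-1}<5$. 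This is the only delicate point; otherwise the argument is a routine combination of the triangle inequality with the two previously established Fourier coefficient estimates.
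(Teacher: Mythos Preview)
Your approach is the same as the paper's: split $G-\hat G$ into an aliasing part over $0\le\ell\le N/2$ and a truncation tail, apply \cref{rjdifference} and \cref{rjbound} termwise, and sum the resulting geometric series. The paper uses the evenness $r_{-j}=r_j$, $\hat r_{-j}=\hat r_j$ (coming from $R(z)=R(z^{-1})$) to reduce to nonnegative indices rather than re-running the proof of \cref{rjdifference} for negative $j$, but this is cosmetic.

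There is, however, a genuine arithmetic gap in your last step. You need
\[
4+r^{-1}+2r^{-N/2}(1-r^{-1})\le 5,
\]
which after dividing by $1-r^{-1}>0$ is equivalent to $r^{N/2}\ge 2$. But \cref{Nlarge} only gives $r^N>2$, i.e.\ $r^{N/2}>\sqrt2$, so for $\sqrt2<r^{N/2}<2$ your bracket exceeds $5$. The fix is not to isolate the constant term: over-count it as $|r_0-\hat r_0|\le 2|r_0-\hat r_0|$ and absorb it into the middle sum, which then runs from $\ell=0$. Since $\sum_{\ell=0}^{N/2}r^{\ell-N}\le\sum_{m\ge N/2}r^{-m}=r^{-N/2}/(1-r^{-1})$ still holds, the middle sum contributes the same factor $4$, the tail contributes $r^{-1}<1$, and you land on $4+r^{-1}<5$ with no residual. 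This is precisely how the paper organizes the estimate.
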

\begin{proof}
    For any $z\in \T$, we have
$$\abs{ G(z) - \hat{G}(z)} = \abs{2\sum_{\ell = 1}^{ \frac{N}{2} }\hat{r}_{-\ell} z^{-\ell} + \hat{r}_0 -  2\sum_{\ell = 1}^{\infty} r_{-\ell} z^{-\ell} - r_0 }
\le 2\sum_{j=0}^{\frac N 2} \abs{ r_j-\hat{r}_j} +\sum_{j=\frac N2 +1}^\infty |r_j|\, .
$$
With \cref{rjprop}, we bound the last display by
\begin{equation}
    4\sum_{j=0}^{\frac N 2} r^{j-N}\abs{\log(\eta)} 
  +\frac 12\sum_{j=\frac N2 +1}^\infty  r^{-j} \abs{\log(\eta)}
  \le \left( 4 +\frac 12\right) 
  \frac {r^{-N/2}}{1-r^{-1}} \abs{\log(\eta)}
    \, .
\end{equation}
This implies \cref{gdifference} and completes the proof of the proposition.

\end{proof}

We continue with a similar analysis of the discrete
Fourier coefficients of the function $ba^{-1}$. We use analytic
continuation to the annulus $A_r$ with the same radius $r$ as before.
\begin{prop}\label{annulusbound}

The function $ba^{-1}$ extends analytically to the annulus $A_r$
and satisfies for all $z\in A_r$ the bound
\begin{equation}
    |b(z)a^{-1}(z)|\le r^d \eta^{-1} \, .
\end{equation}
\end{prop}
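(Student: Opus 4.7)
My plan is to reduce the claim to elementary bounds using the algebraic identity $aa^* = 1 - bb^*$, which I will first extend analytically from $\T$ to the full annulus $A_r$, together with the Schwarz maximum principle applied to the polynomials $z^d b(z)$ and $a^*(z)$.

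First I will verify that $ba^{-1}$ extends analytically to $A_r$. Since $b$ is a Laurent polynomial, $b$ is analytic on $\C \setminus \{0\}$, and $A_r$ does not contain the origin. For the factor $a^{-1}$, recall that $a^*$ is a polynomial of degree $2d$ whose zeros are precisely the zeros of $1 - bb^*$ lying outside the closed unit disc. From the proof of Proposition \ref{log1bbound} we already have $|b(z) b^*(z)| \le (1-\eta) r^{2d} \cdot r^{-2d} \cdot \ldots$—more directly, the estimate \cref{eq:bd_b_annulus} yields $|b(z) b^*(z)| \le (1-\eta)^2 r^{2d} = 1-\eta$ on $A_r$. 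Hence $1 - bb^*$ is nowhere zero on $A_r$, so all zeros of $a^*$ lie in $\{|z| \ge r\}$. By the reflection relation, the zeros of $a$ lie in $\{|z| \le r^{-1}\}$, so $a$ is analytic and nonvanishing on $A_r$. Thus $ba^{-1}$ is analytic on $A_r$.

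Next I will establish the pointwise bound. The identity $aa^* = 1 - bb^*$ holds on $\T$, and as both sides are analytic in a neighborhood of $\T$ (indeed on $\{|z| > r^{-1}\}$), it extends to all of $A_r$. Combined with $|1 - bb^*| \ge 1 - |bb^*| \ge \eta$ on $A_r$ from above, we obtain
\begin{equation*}
|b(z) a(z)^{-1}| = \frac{|b(z)| \, |a^*(z)|}{|1 - b(z) b^*(z)|} \le \frac{|b(z)| \, |a^*(z)|}{\eta} \quad \text{on } A_r.
\end{equation*}
To bound $|b(z)|$ on $A_r$, note that $z^d b(z)$ is a polynomial of degree $2d$ with supremum norm at most $1-\eta$ on $\T$, so the maximum principle applied on $\D$ and on $\D^*$ (using that $z^{-d} b(z)$ is bounded at infinity since $b$ has Laurent degree $d$, and here the symmetry $b(z) = b(z^{-1})$ makes the two sides of $\T$ symmetric) yields $|b(z)| \le (1-\eta) |z|^{-d} \le (1-\eta) r^d$ for $r^{-1} \le |z| \le 1$ and likewise for $1 \le |z| \le r$. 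Similarly, $|a^*(z)| \le 1$ on $\bar\D$ by the maximum principle (since $|a^*| = \sqrt{1 - |b|^2} \le 1$ on $\T$), and for $1 \le |z| \le r$ the polynomial bound $|a^*(z)| \le |z|^{2d} \le r^{2d} = (1-\eta)^{-1}$ holds by applying the maximum principle to $a^*(z)/z^{2d}$ on $\D^*$.

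Combining these two ranges: for $r^{-1} \le |z| \le 1$ we get $|b a^{-1}| \le (1-\eta) r^d \cdot 1 / \eta \le r^d \eta^{-1}$, and for $1 \le |z| \le r$ we get $|b a^{-1}| \le (1-\eta) r^d \cdot (1-\eta)^{-1}/\eta = r^d \eta^{-1}$. Thus in all cases $|b(z) a(z)^{-1}| \le r^d \eta^{-1}$ on $A_r$, completing the proof. The only nontrivial step is recognizing that the non-vanishing of $1 - bb^*$ on $A_r$ (already implicit in the proof of Proposition \ref{log1bbound}) forces the zeros of $a^*$ to lie outside $A_r$; everything else is routine maximum-principle estimation.
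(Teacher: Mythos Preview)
Your proof is correct and takes a genuinely different route from the paper. You rewrite $ba^{-1} = b a^* / (1 - bb^*)$ and bound each of the three factors separately on the inner and outer halves of the annulus: the denominator by $|1-bb^*| \ge \eta$ (from $|bb^*| \le 1-\eta$, already in Proposition~\ref{log1bbound}), the factor $|b|$ by $(1-\eta)r^d$ via the maximum principle on $z^d b$, and the factor $|a^*|$ by $1$ on $\{|z|\le 1\}$ and by $r^{2d} = (1-\eta)^{-1}$ on $\{1\le |z|\le r\}$ via the maximum principle on $a^*/z^{2d}$. The two pieces then combine to exactly $r^d\eta^{-1}$.

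The paper instead treats the outer half by applying the maximum principle directly to $z^{-d} b a^{-1} \in H^2(\D^*)$, obtaining the slightly sharper bound $r^d(1-\eta)\eta^{-1/2}$ there. For the inner half the paper uses the identity $1/a = a^* + b^* (b/a)$ to derive a self-referential inequality for $|ba^{-1}|$ and then solves it, arriving at $r^d(1-\eta)\eta^{-1}$. Your argument avoids this bootstrapping step entirely and is more symmetric between the two halves of the annulus; the price is that your outer-annulus bound is weaker by a factor $\eta^{-1/2}$, though this is irrelevant for the stated proposition. Both approaches rely on the same underlying fact that $1-bb^*$ (hence $a$) is nonvanishing on $A_r$.
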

\begin{proof}
    As the function mapping $z$ to $z^{-d} b(z)a^{-1}(z)$
    is in $H^2(\mathbb{D}^*)$, we apply the maximum principle to estimate for all $\abs{z}\ge 1$
$$\abs{z^{-d}{b(z)}{a^{-1}(z)}} \leq 
 \|ba^{-1}\|_{L^\infty(\T)}\le (1-\eta) \eta^{-\frac 12}\, ,$$
 where we used that on $\T$ we have that $b$ is bounded above by $1-\eta$  and, using $0<\eta < 1 $,
 \begin{equation}\label{lowerabound}
      |a|^2=1-|b|^2\ge 2\eta-\eta^2\ge \eta\, .
 \end{equation}

 Hence, if $1\le \abs{z}\le r$,
 \begin{equation}\label{outerannulus}
     \abs{{b(z)}{a^{-1}(z)}} \leq 
 r^d(1-\eta)\eta^{-\frac 12}\, .
 \end{equation}
Since $a^* \in H^2(\D)$, we obtain by the maximum principle,
for each $r^{-1}\le \abs{z}\le 1$, 
\begin{equation}\label{eq:bd_a_annulus}
\abs{a^*(z)}\le 1\, .
\end{equation}
Using $aa^*+bb^*=1$, \cref{eq:bd_b_annulus} and finally \cref{eq:bd_a_annulus}, we obtain
$$\abs{b(z)a^{-1}(z)}=\abs{b(z)}\abs{a(z)^*+ b^*(z)b(z)a^{-1}(z)} \le r^d(1-\eta)(1+r^d(1-\eta)\abs{b(z)a^{-1}(z)})$$
Moving the term with $ba^{-1}$ from the right-hand-side to the left-hand side and using the definition of $r$, we obtain
$$(1-(1-\eta)) \abs{b(z)a^{-1}(z)} \le r^d(1-\eta)\, ,$$
or rather,
\begin{equation}\label{innerannulus}
    \abs{b(z)a^{-1}(z)} \le r^d(1-\eta)\eta^{-1}\, .
\end{equation}
The proposition now follows from \cref{outerannulus} and
\cref{innerannulus}  and $0<\eta< 1$. 
\end{proof}   

We use the last proposition to estimate the coefficients
\begin{equation}\label{eqn:coefficients_def}
    c_j := \mathcal{F} \left(\frac{b}{a} \right)(j),\ c_j':= \mathcal{F}^N \left(\frac{b}{a} \right)(j)\, .
\end{equation}
\begin{prop}\label{cjdifference}
We have for every $j\in \Z$,
\begin{equation}
\label{cjbound}
    \abs{c_j}\le r^{d-|j|}\eta^{-1} \, .
\end{equation}
We have for every $0\le j\le d$
\begin{equation}
    |c_j-c_j'|\le {4r^{d+j-N}}\eta^{-1}
\end{equation}
\end{prop}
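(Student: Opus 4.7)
The plan is to mirror the proofs of Propositions \ref{rjprop} and \ref{propgdifference}, now using Proposition \ref{annulusbound} in place of Proposition \ref{log1bbound}. For the first bound \cref{cjbound} I would write $c_j$ as the contour integral
\[
c_j = \frac{1}{2\pi \I}\oint_{|z|=1} z^{-j-1}\,\frac{b(z)}{a(z)}\,dz,
\]
and deform it inside the annulus $A_r$, on which $b/a$ is analytic and uniformly bounded by $r^d\eta^{-1}$. For $j\ge 1$ I would push the contour outward to $|z|=r$, using the decay of $|z^{-j-1}|$ on the larger circle; for $j\le 0$ I would push it inward to $|z|=r^{-1}$, using the corresponding decay as $|z|\to 0$. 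In either case the $ML$ estimate gives $|c_j|\le r^{d-|j|}\eta^{-1}$.

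For the second bound, I first observe that since $b/a$ is analytic on $A_r$, its Fourier series converges absolutely and uniformly on $\T$; moreover $c_k=0$ for $k>d$ because $z^{-d}b/a\in H^\infty(\D^*)$ (a product of the Laurent-polynomial factor $z^{-d}b$ and the outer reciprocal $1/a$, both in $H^\infty(\D^*)$). Substituting the Fourier expansion of $b/a$ into the defining sum for $c_j'$, interchanging the order of summation, and using the orthogonality relations of the $N$-th roots of unity, I would obtain the aliasing identity
\[
c_j' = \sum_{m\in\Z} c_{j+mN},
\]
exactly as in \cref{eq:FFT_to_std_coeffs_1}. Since $\eta<\tfrac 12$, condition \cref{Nlarge} gives $N>2d$, so for $0\le j\le d$ every index $j+mN$ with $m\ge 1$ exceeds $d$ and contributes zero. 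The error therefore collapses to
\[
c_j - c_j' = -\sum_{m\ge 1} c_{j-mN},
\]
to which I would apply the geometric decay from the first part together with $(1-r^{-N})^{-1}\le 2$ (which follows from $r^N>2$ in \cref{Nlarge}), arriving at the desired $O(r^{d+j-N}\eta^{-1})$ bound (in fact with a constant $2$ rather than $4$, leaving room to spare).

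I do not expect a serious obstacle: the analytic extension and uniform control provided by Proposition \ref{annulusbound} do all the heavy lifting, and the rest is the same contour-deformation plus Poisson-summation template used for $\hat r_j$. The one subtlety worth flagging is the one-sided nature of the Fourier support of $b/a$. In contrast with $\log\sqrt{1-|b|^2}$, whose Fourier coefficients decay geometrically in both directions and produce two-sided aliasing in Proposition \ref{rjprop}, here $c_k$ vanishes outright for $k>d$; this hard truncation is precisely what restricts the residual aliasing sum to negative shifts and explains why the bound is stated only for $0\le j\le d$.
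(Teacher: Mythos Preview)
Your proposal is correct and follows essentially the same approach as the paper: contour deformation using Proposition~\ref{annulusbound} for \cref{cjbound}, followed by the aliasing identity $c_j'=\sum_{m\in\Z}c_{j+mN}$ and the geometric bound from \cref{Nlarge}. Your additional observation that $c_k=0$ for $k>d$ (already noted in \cref{sec:RHW_alg}) eliminates the positive-$m$ aliasing terms and yields constant $2$ rather than $4$; the paper simply keeps both tails and bounds them symmetrically.
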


\begin{proof}
We have for any contour $\gamma$ in $A_r$  homotopic to the standard contour around $\T$,
\begin{equation}\label{cjcontour}
c_j=\int_\T z^{-j}b(z)a(z)^{-1}=\int_\gamma 
z^{-j}b(z)a(z)^{-1} \frac{dz}{2\pi i z}  \, .  
\end{equation}
With \cref{annulusbound} and $\gamma$ tracing
circles close to radius $r$ and radius $r^{-1}$ about the origin, 
we obtain the bound
\[\abs{c_j}\le r^{d-|j|}\eta^{-1} \, .\]
We have for each $j$ with the Fourier inversion formula,
similarly as in \cref{eq:FFT_to_std_coeffs_1}
\[c_j'=\frac 1N \sum_{\ell=0}^{N-1}
z_\ell^{-j} b(z_l) a(z_l)^{-1}=
\frac 1N \sum_{\ell=0}^{N-1}
z_\ell^{-j} \sum_{k\in \Z} c_k z_\ell^k
=\sum_{k'\in Z} c_{j+Nk'}\, .\]
Hence
\begin{equation}\label{cjcjprime}
  \abs{c_j'-c_j}\le \sum_{k\in Z,k\neq 0} |c_{j+Nk}|  
\le (r^{d-|j+N|}+r^{d-|j-N|})\sum_{k=0}^\infty  r^{-{Nk}} \eta^{-1}
\le  4r^{d+j-N}\eta^{-1}\, ,
\end{equation}
where the geometric sum is bounded by $2$ and we have used
\cref{cjbound}.
This completes the proof of the proposition.

\end{proof}
Now recall $a=e^G$ and define  $\hat{a} := e^{\hat{G}}$ to be the approximation to $a$ given by the Weiss \cref{alg:weiss_alg}.
Define
\begin{equation}\label{eqn:coefficients_def_2}
    \hat{c}_j:= \mathcal{F}^N \left(\frac{b}{\hat{a}} \right)(j)\, .
\end{equation}

\begin{prop}\label{cchatprop}
We have for every $0\le j\le d$
\begin{equation}\label{eq:c-cNhat}
        |c_j' -\hat{c}_j| \leq \eta^{-\frac 12} \|1 -e^{G-\hat{G}}\|_{L^\infty(\T)}
        \, .
    \end{equation}
\end{prop}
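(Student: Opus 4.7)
The plan is to express the difference $c_j' - \hat{c}_j$ as a single discrete Fourier coefficient and then factor the integrand in a way that isolates the error $1 - e^{G - \hat{G}}$. First I would use linearity of $\mathcal{F}^N$ together with the definitions $a = e^G$ and $\hat{a} = e^{\hat{G}}$ to write
\begin{equation*}
c_j' - \hat{c}_j = \mathcal{F}^N\!\left(\frac{b}{a} - \frac{b}{\hat{a}}\right)(j) = \mathcal{F}^N\!\left(\frac{b}{a}\bigl(1 - e^{G - \hat{G}}\bigr)\right)(j) \, .
\end{equation*}
Then applying the triangle inequality to the defining sum of $\mathcal{F}^N$, since each $|z_\ell^{-j}| = 1$, I obtain the pointwise bound
\begin{equation*}
|c_j' - \hat{c}_j| \leq \left\|\frac{b}{a}\right\|_{L^\infty(\T)} \cdot \left\|1 - e^{G - \hat{G}}\right\|_{L^\infty(\T)} \, .
\end{equation*}

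It remains to bound $\|b/a\|_{L^\infty(\T)}$ by $\eta^{-1/2}$. On $\T$ we have $|b| \leq 1 - \eta \leq 1$, and by \cref{lowerabound} we have $|a|^2 = 1 - |b|^2 \geq \eta$, so $|a| \geq \sqrt{\eta}$ pointwise. This immediately gives $|b(z)/a(z)| \leq \eta^{-1/2}$ for every $z \in \T$, and substituting into the previous display yields \cref{eq:c-cNhat}.

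There is no real obstacle here: the factorization $b/a - b/\hat a = (b/a)(1 - e^{G - \hat G})$ is the natural one because the error enters multiplicatively through the exponent, and the uniform lower bound on $|a|$ on $\T$ is exactly the quantitative content that \cref{lowerabound} was set up to provide. Note also that this step crucially uses that we compare the two discrete transforms at the \emph{same} nodes $\{z_\ell\}$, so no quadrature error enters; the only loss is the supremum norm of the multiplicative error, which will be controlled in the next step via \cref{propgdifference}.
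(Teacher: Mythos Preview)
Your proof is correct and essentially identical to the paper's: both factor $b/a - b/\hat a = (b/a)(1 - a\hat a^{-1}) = (b/a)(1 - e^{G-\hat G})$, bound the discrete Fourier coefficient by the sup norm of the integrand, and then use $|b|\le 1$ together with $|a|\ge \eta^{1/2}$ from \cref{lowerabound} to get $\|b/a\|_{L^\infty(\T)}\le \eta^{-1/2}$.
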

\begin{proof}
We estimate
$$\abs{c_j' -\hat{c}_j}= \abs{\frac{1}{N} \sum_{\ell=0}^{N-1} z_\ell^{-j} b(z_\ell)(a^{-1}(z_\ell) -\hat{a}^{-1}(z_\ell))
}
\leq \|{b }({a^{-1} }- {\hat{a}^{-1} })\|_{L^\infty(\T)}
 \leq  \eta^{-\frac 12} \|1 -a \hat{a} ^{-1}\|_{L^\infty(\T)}\, ,
$$
where we used $|b|\le 1$ and $|a|\ge \eta^{\frac 12}$ as in \cref{lowerabound}.
Expressing $a$ and $\hat{a}$ by $G$ and $\hat{G}$ proves the proposition.
\end{proof}
\begin{prop}\label{thm:N_estimate}
    Let $0<\epsilon'<\frac 12$ and assume
    \begin{equation}
        N\ge \frac {8d}{\eta}\log\left(\frac {48d} {\eta^2 \epsilon'}\right)\, .
    \end{equation}
    Then we have for every $0\le j\le d$
\begin{equation}
    \abs{c_j-\hat{c}_j}\le \epsilon'\, .
\end{equation}
    
\end{prop}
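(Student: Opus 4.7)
The plan is to apply the triangle inequality
\[
|c_j - \hat{c}_j| \leq |c_j - c_j'| + |c_j' - \hat{c}_j|
\]
and bound each term by $\epsilon'/2$ using the hypothesis on $N$. The key preliminary observation is to convert the hypothesis on $N$ into a lower bound on $r^N$. Since $r^{2d} = (1-\eta)^{-1}$, the Taylor expansion of $-\log(1-\eta)$ gives $\log r \ge \eta/(2d)$, so the hypothesis
\[
N \ge \frac{8d}{\eta} \log\!\left(\frac{48d}{\eta^{2}\epsilon'}\right)
\]
implies $N \log r \ge 4\log(48d/(\eta^{2}\epsilon'))$, i.e., $r^{N} \ge (48d/(\eta^{2}\epsilon'))^{4}$, and in particular $r^{-N/2} \le (\eta^{2}\epsilon'/(48d))^{2}$.

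For the first term, I would plug the bound $j \leq d$ into Proposition \ref{cjdifference} and use $r^{2d} = (1-\eta)^{-1} \leq 2$ (valid since $\eta<\frac 12$) to get
\[
|c_j - c_j'| \leq 4 r^{2d-N}\eta^{-1} \leq 8 \eta^{-1} r^{-N}.
\]
The lower bound on $r^N$ then dominates this by $\epsilon'/2$ with room to spare, since $(48d/(\eta^{2}\epsilon'))^{4} \gg 16/(\eta\epsilon')$.

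For the second term, I would use Proposition \ref{cchatprop} together with Proposition \ref{propgdifference}. The denominator $1-r^{-1}=(r-1)/r$ needs an explicit lower bound: since $r-1 \geq \log r \geq \eta/(2d)$ and $r \leq 2^{1/(2d)} \leq \sqrt{2}$, we have $1-r^{-1} \ge \eta/(4d)$, and hence
\[
\|G - \hat{G}\|_\infty \leq \frac{20\, d \,|\log \eta|}{\eta}\, r^{-N/2}.
\]
Combining with $r^{-N/2}\le (\eta^{2}\epsilon'/(48d))^{2}$ shows $\|G - \hat G\|_\infty$ is uniformly very small (in particular $\leq 1$). Then using the elementary estimate $|1-e^{w}| \le e\,|w|$ for $|w|\le 1$, Proposition \ref{cchatprop} yields
\[
|c_j' - \hat{c}_j| \leq e\,\eta^{-1/2}\,\|G - \hat{G}\|_\infty,
\]
and substituting the bound on $\|G-\hat G\|_\infty$ together with the lower bound on $r^N$ gives the desired $\epsilon'/2$ after absorbing the $\eta^{-3/2}|\log\eta|$ factors into the $(48d/(\eta^{2}\epsilon'))^{2}$ saved from $r^{-N/2}$.

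The only real subtlety is tracking powers of $\eta$ and $d$ carefully enough that the numerical constant $48$ in the hypothesis on $N$ actually suffices; this requires the slightly pessimistic estimate $\eta^{3}|\log \eta| \leq 1$ on $(0,\tfrac 12)$ together with $(\epsilon')^{2}\le 1$, both of which are elementary. The potentially tricky spot is the exponential $e^{G-\hat G}$ in Proposition \ref{cchatprop}, but since the exponent is controlled to be at most a small absolute constant thanks to the polynomial-in-$d$ factor being swamped by the geometric decay $r^{-N/2}$, the linearization $|1-e^{w}|\lesssim |w|$ applies and the argument closes cleanly.
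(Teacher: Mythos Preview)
Your proposal is correct and follows essentially the same approach as the paper: split via the triangle inequality, invoke Propositions \ref{cjdifference}, \ref{propgdifference}, and \ref{cchatprop}, and linearize the exponential in the second term. The only minor difference is that you derive the sharper bound $\log r \ge \eta/(2d)$ directly from the Taylor series (yielding $r^{N}\ge (48d/(\eta^{2}\epsilon'))^{4}$), whereas the paper uses the weaker $\log r \ge \eta/(4d)$ via $(1+\eta)^{1/\eta}\ge e^{1/2}$ and consequently tracks constants a bit more tightly in the final step.
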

\begin{proof}
Let $0\le j\le d$.
   We have with definition of $r$ 
   \begin{equation}\label{rto2d}
       r^{\frac {2d}\eta}=(1-\eta)^{-\frac 1 \eta}\ge (1+\eta)^{\frac 1{\eta}} \ge e^{\frac 12}\, ,
   \end{equation}
    where the last inequality is thanks to $\eta<\frac 12$.
    Hence
    $$r^{\frac N2} \ge \frac {48d} {\eta^2 \epsilon'}\, ,$$
 and with \cref{cjdifference} and $N>4d$
 $$\abs{c_j-c_j'}\le r^{d+j-N}\eta^{-1}\le r^{-\frac N2}\eta^{-1}\le \frac{\epsilon'}{48}\, .$$
    
    Moreover, using \cref{rto2d}
$$r-1\ge \log r \ge \frac \eta{4d}\, .$$
Hence for every $z\in \T$, with \cref{propgdifference},
and $r<\sqrt{2}$, 
$$|G(z)-\hat{G}(z)|\le 
5 \frac {r^{-N/2}}{1-r^{-1}} \abs{\log(\eta)}
 \le \frac {5 \sqrt{2}}{r-1} \frac {\eta^2 \epsilon'}{48d } |\log\eta| 
 \leq \frac{5 \sqrt{2}}{12} \eta \epsilon' |\log \eta| \leq  \frac {\eta^{\frac 12} \epsilon'} 2\, ,$$
 where for the last inequality we use the fact that $\eta^{\frac 1 2} |\log \eta| \leq \frac{2}{e}$.

   We estimate for $0\le j\le d$, 
   \[\abs{c_j-\hat{c}_j}\le
   \abs{c_j-{c}_j'}+
   \abs{c_j'-\hat{c}_j}\le \frac{\epsilon'}{48}+\eta^{-\frac 12} \abs{1-e^{\frac{\eta^{\frac 12}\epsilon'}2}}\le \epsilon' \, ,\]
   where in the before last inequality we used \cref{cchatprop} along with the inequality $|1-e^z| \leq |1-e^{|z|}|$, which follows from the triangle inequality applied to the Taylor expansion of the function $1-e^z$.
This completes the proof of the proposition
\end{proof}

\subsection{Error analysis for solving the linear system}\label{sec:property_Xi}


Let $\Xi_k$ and $\hat{\Xi}_k$ be the Hankel matrices with $(c_k , c_{k+1}, \cdots, c_d)^\top$ and $(\hat{c}_k , \hat{c}_{k+1}, \cdots, \hat{c}_d)^\top$ as their first column respectively and with zeros below the secondary diagonal. Also denote
$$\Id + M_k:=\begin{pmatrix}
    I & - \Xi_k \\
    -  \Xi_k & I 
\end{pmatrix}\, ,\quad \Id + \hat{M}_k:=\begin{pmatrix}
    I & - \hat{\Xi}_k \\
    -  \hat{\Xi}_k & I 
\end{pmatrix} \, .$$

We denote the $\ell_2$ norm of a vector $x = (x_0, x_1, \ldots, x_{m-1})\in\CC^m$ as
\begin{equation}
\norm{x}: = \sqrt{{\sum_{k=0}^{m-1}|x_k|^2}}.
\end{equation}
For a matrix $A\in\CC^{m\times m}$, we denote the operator norm induced by the vector $\ell_2$ norms as
\begin{equation}
\norm{A} : =\sup_{x \neq 0}  \frac{\norm{A x} }{\norm{x}}.
\end{equation}

\begin{prop}\label{lm:cond_Hk}
$\Xi_k$ has pure imaginary coefficients and
\begin{equation}\label{upper bound on Xi norm}
    \| \Xi_k \| \leq \norm{\frac{b}{a}}_\infty \, .
\end{equation}
Also the matrices $\Id + M_k$ and $\Id + \hat{M}_k$ are invertible,
\begin{equation}\label{inverting id plus M k}
    \norm{\left(\Id + M_k\right)^{-1}}\leq 1, \quad  \norm{\Id + M_k}\leq \sqrt{1+ \norm{\Xi_k}^2} \, ,
\end{equation}
and
\begin{equation}\label{perturbed inversion}
    \norm{\left(\Id + \hat{M}_k\right)^{-1}} \leq 1 \, .
\end{equation}
\end{prop}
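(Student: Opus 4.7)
The plan is to prove the four claims in sequence, leveraging the pure-imaginary nature of the Fourier coefficients of $b/a$ established in \cref{lem:imaginary_Fourier_coeffs}, and a finite-dimensional analogue of the antisymmetry argument used for $M$ in \cref{subsection:unbded_op}.

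First, I would note that the first claim is immediate: since $b(z) = \I f(x)$ with $f$ real and $b(z) = b(z^{-1})$, \cref{lem:imaginary_Fourier_coeffs} shows all $c_j$ are pure imaginary, so the entries of $\Xi_k$ (which are among the $c_j$) are pure imaginary. To prove \cref{upper bound on Xi norm}, I would realize $\Xi_k$ as the matrix, in the ordered bases $\{z^{-k-j}\}_{j=0}^{n}$ of $z^{-k} \mathrm{span}(1, z^{-1}, \ldots, z^{-n})$ and $\{z^i\}_{i=0}^{n}$ of $\mathrm{span}(1, z, \ldots, z^n)$, of the composition $P_{\geq 0} \circ M_g \circ \iota$, where $g = b/a$, $M_g$ is multiplication by $g$, $\iota$ is the inclusion, and $P_{\geq 0}$ is the orthogonal projection onto non-negative frequencies (using that $c_m = 0$ for $m > d$ so no truncation in $i$ is needed). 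Since all three factors have operator norm at most $\|g\|_\infty$, $1$, and $1$ respectively, this yields the bound.

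For the claims \cref{inverting id plus M k}, I would directly compute for $v = (u, w)^\top \in \CC^{n+1} \times \CC^{n+1}$:
\begin{equation*}
\|(\Id + M_k) v\|^2 = \|u - \Xi_k w\|^2 + \|w - \Xi_k u\|^2 = \|u\|^2 + \|w\|^2 + \|\Xi_k u\|^2 + \|\Xi_k w\|^2 - 2\Re\langle u, \Xi_k w\rangle - 2\Re\langle w, \Xi_k u\rangle.
\end{equation*}
The cross terms cancel because $\Xi_k$ is complex symmetric (being Hankel) with pure imaginary entries, so $\Xi_k^* = -\Xi_k$, whence $\langle u, \Xi_k w\rangle = -\langle \Xi_k u, w\rangle = -\overline{\langle w, \Xi_k u\rangle}$ so the two real parts sum to zero. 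This gives
\begin{equation*}
\|v\|^2 \leq \|(\Id + M_k) v\|^2 = \|v\|^2 + \|\Xi_k u\|^2 + \|\Xi_k w\|^2 \leq (1 + \|\Xi_k\|^2) \|v\|^2,
\end{equation*}
yielding simultaneously invertibility of $\Id + M_k$, the bound $\|(\Id + M_k)^{-1}\| \leq 1$, and $\|\Id + M_k\| \leq \sqrt{1 + \|\Xi_k\|^2}$.

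For \cref{perturbed inversion}, the same argument applies to $\hat{\Xi}_k$ provided its entries $\hat c_j$ are also pure imaginary. To verify this, I would argue the FFT preserves the pure-imaginary property: the symmetry $b(z) = b(\bar z)$ on $\T$ (inherited from $b(z) = b(z^{-1})$) together with real-valuedness of $\log\sqrt{1-|b|^2}$ forces $\hat r_j$ to be real (via $\hat r_j = \overline{\hat r_{-j}} = \hat r_{-j}$), whence $\hat G$ has real Fourier coefficients, $\hat a = e^{\hat G}$ has real Fourier coefficients, and finally $b/\hat a$ has pure imaginary Fourier coefficients. By Poisson summation (aliasing) $\hat c_j$ is an infinite sum of such coefficients and hence pure imaginary. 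The main conceptual step is the identification in the first paragraph; everything else is bookkeeping around the antisymmetry.
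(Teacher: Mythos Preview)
Your proposal is correct and follows essentially the same approach as the paper: both invoke \cref{lem:imaginary_Fourier_coeffs} for the first claim, both realize $\Xi_k$ as the matrix of a projection-multiplication operator to get \cref{upper bound on Xi norm}, and both use $\Xi_k^* = -\Xi_k$ (respectively $\hat{\Xi}_k^* = -\hat{\Xi}_k$) to obtain the invertibility and norm bounds, with the pure-imaginary property of the $\hat c_j$ deduced from the aliasing formula together with the reality of the $\hat r_\ell$. The only cosmetic difference is that the paper computes $(\Id+M_k)(\Id+M_k)^*$ and reads off the singular values $\sqrt{1+\lambda_j^2}$, whereas you expand $\|(\Id+M_k)v\|^2$ directly and cancel the cross terms; these are two phrasings of the same antisymmetry argument.
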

\begin{proof}
To show that $\Xi_k$ has pure imaginary coefficients is equivalent to showing that the Fourier coefficients of $\frac{b}{a}$ are pure imaginary, which follows from \cref{lem:imaginary_Fourier_coeffs}. 

Since we truncate the space $\mathcal{H}_k$ to $\mathcal{H}^{d-k}_k$, $\Xi_k$ is exactly the matrix representation of the operator 
$$
( x_{0}, ,\dots, x_{-(d-k)} ) \mapsto \mathcal{P}_{[k, d]}\mathcal{F}^{-1} \left( \frac{b(z)}{a(z)} \sum_{j=-(d-k)}^{0} x_jz^j \right) \, ,
$$
where $\mathcal{F}^{-1}$
is the inverse Fourier transform taking a function on $\T$ to the (infinite) vector of its Fourier coefficients, indexed by frequency, and $\mathcal{P}_{[k, d]}$ is the projection operator onto the subspace of vectors whose nonzero entries only appear between indices $k$ and $d$. By Parseval's identity we have for any $x=(x_0, x_1, \cdots, x_d)$ 
$$
\left\| \mathcal{P}_{[k, d]} \mathcal{F}^{-1} \left( \frac{b(z)}{a(z)} \sum_{j=-(d-k)}^{0} x_jz^j \right) \Big|_{[k,d]} \right\| \leq \left\| \mathcal{F}^{-1} \left( \frac{b(z)}{a(z)} \sum_{j=-(d-k)}^{0} x_jz^j \right) \right\| 
$$
$$
= \left\| \frac{b(z)}{a(z)} \sum_{j=-(d-k)}^{0} x_jz^j  \right\|_{2} \leq \norm{ \frac{b}{a}}_\infty \norm{\sum_{j=-(d-k)}^{0} x_jz^j}_{2} = \norm{ \frac{b}{a}}_\infty \cdot \norm{x } \, .
$$
This proves \cref{upper bound on Xi norm}.

Since $\Xi_k^* = -\Xi_k$, the eigenvalues of $\Xi_k$ are all pure imaginary. We denote them $\I \lambda_0, \I \lambda_1, \cdots, \I \lambda_{d-k}$. Here, $\lambda_j \in \RR$ for $0\leq j\leq d-k$. Then, as
$$\begin{pmatrix}
    I & -  \Xi_k \\
    - \Xi_k & I 
\end{pmatrix}\begin{pmatrix}
    I & -\Xi_k \\
    - \Xi_k & I 
\end{pmatrix}^{*} = \begin{pmatrix}
    I - (\Xi_k)^2 & 0 \\
    0 & I -(\Xi_k)^2
\end{pmatrix} \, ,$$
we see that the singular values of $\Id + M_k$ are
$$\sqrt{1+\lambda_0^2}, \sqrt{1+\lambda_1^2}, \cdots, \sqrt{1+\lambda_{d-k}^2} \, .$$

Therefore, $\Id + M_k$ is non-singular, 
$$\norm{(\Id + M_k)^{-1}}\leq 1 \, ,$$
and
$$\norm{\Id + M_k}\leq \sqrt{1+ \norm{\Xi_k}^2} \, . $$

To show that $\Id + \hat{M}_k$ is invertible and $\norm{\left(\Id + \hat{M}_k\right)^{-1}}\leq 1$, we only need to show $\hat{\Xi}_k^*= -\hat{\Xi}_k$, so that we may apply the same argument as that for $\Id + M_k$. Recall that $\hat{\Xi}_k$ is the Hankel matrix with $(\hat{c}_k , \hat{c}_{k+1}, \cdots, \hat{c}_d)^\top$ as it first column. To show that $\hat{\Xi}_k$ is anti-Hermitian, we examine the Weiss algorithm, which outputs $(\hat{c}_k , \hat{c}_{k+1}, \cdots, \hat{c}_d)^\top$. We have the following relation 
\begin{equation*}
    \hat{c}_j = \frac{1}{N} \sum_{\ell=0}^{N-1} z_{\ell}^{-j} \hat{u} (z_{\ell}), \quad \forall k\leq j \leq d,
\end{equation*}
with 
\begin{equation*}
    \hat{u} (z) := b(z)e^{-\hat{G}(z)}= b(z) e^{-\hat{r}_0 - 2\sum_{\ell = 1}^{\frac{N}{2}} \hat{r}_{-\ell} z^{-\ell}}.
\end{equation*}
Note that $\hat{r}_\ell = \mathcal{F}^{N} \left(\log(1-\abs{b(z)}^2)\right)(\ell) $. So $\hat{r}_\ell\in \RR$ for any $\ell$. Together with that the Fourier coefficients of $b$ are all pure imaginary, we know that the Fourier coefficients of $\hat{u}$ are all pure imaginary. Due to the relation \cite[Equations (2.2) and (2.8)]{TrefethenWeideman14}, we know that for any $ k\leq j \leq d$,
\begin{equation}
    \hat{c}_j= \mathcal{F}^N(\hat{u})(j) = \sum_{\ell = -\infty}^{\infty} \mathcal{F} (\hat{u})(j+\ell N), 
\end{equation} 
 we get that $\hat{c}_j$ are all pure imaginary. Thus $\hat{\Xi}_k$ is anti-Hermitian, which completes the proof.

\end{proof}

Remark that the inverse of matrix $\Id + M_k$ can be explicitly written as 
\begin{equation*}
    \begin{pmatrix}
        \left(I - (\Xi_k)^2\right)^{-1} & \left(I - (\Xi_k)^2\right)^{-1}\Xi_k\\
        \left(I - (\Xi_k)^2\right)^{-1}\Xi_k & \left(I - (\Xi_k)^2\right)^{-1}
    \end{pmatrix}
\end{equation*}
Then the solution to \cref{eqn:numerical_linear_system} is given by
\begin{equation}
    \begin{pmatrix}
    \va_k\\
    \vb_k
\end{pmatrix} = \begin{pmatrix}
    \left(I - (\Xi_k)^2\right)^{-1}\ve_0\\
    \left(I - (\Xi_k)^2\right)^{-1} \Xi_k\ve_0
\end{pmatrix}.
\end{equation}
So instead of solving a large linear system, we can just solve a smaller linear system, 
\begin{equation}\label{eqn:alter_linear_system}
    \left(I -(\Xi_k)^2\right) (\va_k, \vb_k) = (\ve_0, \Xi_k\ve_0),
\end{equation}
where  $(a_k, b_k)$ is a matrix of size $n\times 2$.

\subsection{Proof of \texorpdfstring{\cref{thm:sufficient_condition_N}}{Theorem 8}}\label{proof of thm 8}
We finish the proof of \cref{thm:sufficient_condition_N} applying \cref{thm:N_estimate}  and \cref{lm:cond_Hk}. Let $0<\epsilon \leq 1$. Given $\epsilon$, define $\epsilon':= \frac{\epsilon\eta^2}{12 d}$, and choose $N$ as in \cref{thm:sufficient_condition_N}, or equivalently as in \cref{thm:N_estimate}. 

Denote
$$
a_{k,0} :=\begin{pmatrix}
        \ve_0^\top,
        \bvec{0}^\top
    \end{pmatrix}(\Id+M_k)^{-1} \begin{pmatrix}
        \ve_0\\
        \bvec{0}
    \end{pmatrix},
$$
$$
\hat{a}_{k,0} := \begin{pmatrix}
        \ve_0^\top,
        \bvec{0}^\top
    \end{pmatrix}(\Id+\hat{M}_k)^{-1} \begin{pmatrix}
        \ve_0\\
        \bvec{0}
    \end{pmatrix} \, .
$$
Then, we have
$$
|a_{k,0} - \hat{a}_{k,0}| = \left| \begin{pmatrix}
        \ve_0^\top,
        \bvec{0}^\top
    \end{pmatrix} \left[ (\Id+M_k)^{-1} - (\Id+\hat{M}_k)^{-1} \right] \begin{pmatrix}
        \ve_0\\
        \bvec{0}
    \end{pmatrix} \right|
$$
$$
=\left| \begin{pmatrix}
        \ve_0^\top,
        \bvec{0}^\top
    \end{pmatrix} (\Id+M_k)^{-1} (\hat{M}_k-M_k)(\Id+\hat{M}_k)^{-1}
\begin{pmatrix}
        \ve_0\\
        \bvec{0}
    \end{pmatrix}\right|
$$
$$
\leq \left\| (\Id+M_k)^{-1} (\hat{M}_k-M_k)(\Id+\hat{M}_k)^{-1}
\right\| \, .
$$
By \cref{lm:cond_Hk}, we know that
$$
\left\| \left(\Id+M_k\right)^{-1} \right\| \leq 1 \text{ and } \norm{\left(\Id + \hat{M}_k\right)^{-1}}\leq 1 \, .
$$
On the other hand, by \cref{thm:N_estimate}, we have 
\begin{equation*}
    \abs{c_j -\hat{c}_j} \leq \epsilon ', \quad \forall 0\leq j \leq d.
\end{equation*}
Thus 
\begin{equation}\label{eqn:difference_in_Xi}
\begin{split}
    \|\hat{M}_k-M_k\| &= 
 \left\|\begin{pmatrix}
    0 & -(\hat{\Xi}_k - \Xi_k) \\
   -(\hat{\Xi}_k - \Xi_k) & 0 
\end{pmatrix} \right\|\leq \left( 2 \sum_{\substack{0\leq l,j\\
l+j\leq d-k}} |c_{j+l+k}-\hat{c}_{j+l+k}|^2 \right)^{1/2}\\
& \leq \left(4(d-k+1)(d-k+2)(\epsilon ')^2 \right)^{1/2} \leq 2 (d+2) \epsilon' \leq 6 d \epsilon' \, .
\end{split}
\end{equation}
The first equality is due to that the lower right triangular part of $\Xi_k$ and $\hat{\Xi}_k$ are all zero.

Plugging this estimates above, we get
\begin{equation}\label{eq:estimate on difference of a k 0}
    |a_{k,0} - \hat{a}_{k,0}| \leq 6 d \epsilon' \, .
\end{equation}

Similarly, let
$$
b_{k,0} :=\begin{pmatrix}
        \bvec{0}^\top,
        \ve_0^\top
    \end{pmatrix}(\Id+M_k)^{-1} \begin{pmatrix}
        \ve_0\\
        \bvec{0}
    \end{pmatrix} , \text{ and }
\hat{b}_{k,0} := \begin{pmatrix}
        \bvec{0}^\top,
        \ve_0^\top
    \end{pmatrix} (\Id+\hat{M}_k)^{-1} \begin{pmatrix}
        \ve_0\\
        \bvec{0}
    \end{pmatrix} \, .
$$
Analogous to \cref{eq:estimate on difference of a k 0}, we can show
\begin{equation}\label{eq:estimate on difference of b k 0}
    |b_{k,0} - \hat{b}_{k,0}| < 6 d \epsilon' \, .
\end{equation}
Furthermore, by \cref{lm:cond_Hk}, we have
\begin{equation}\label{upper bound on b single coefficient}
     \abs{ b_{k,0}}\leq \norm{(\Id+M_k)^{-1}}\leq 1 \, .
\end{equation}
Letting $\lambda_{\text{min}} (T)$ denote the smallest eigenvalue of a matrix $T$, then also by \cref{eqn:alter_linear_system} and again by \cref{lm:cond_Hk}, we have 
 then also by \cref{eqn:alter_linear_system} and again by \cref{lm:cond_Hk}, we have 
\begin{equation}\label{eq: lower bound on a k 0}
    \begin{split}
        \abs{ a_{k,0}} &= \abs{\ve_0^\top \left(I -(\Xi_k)^2\right)^{-1}\ve_0} = \ve_0^\top \left(I -(\Xi_k)^2\right)^{-1}\ve_0\\
        & \geq \lambda_{\min} \left(\left(I -(\Xi_k)^2\right)^{-1}\right) = \frac{1}{1+\norm{\Xi_k}^2}
\geq \frac{1}{1+\norm{\frac{b}{a}}_{\infty}^2}\\
& \geq \frac{1}{1+ \frac{(1-\eta)^2}{1-(1-\eta)^2}}= 2\eta -\eta^2\geq \eta,
    \end{split}
\end{equation}
where the second equality is because $I -(\Xi_k)^2$ is a real symmetric positive definite matrix.
 
Applying the estimates \cref{eq: lower bound on a k 0}, \cref{eq:estimate on difference of a k 0}, \cref{eq:estimate on difference of b k 0} and \cref{upper bound on b single coefficient}, we finish the proof. We have
\begin{equation}\label{eqn:error_in_computing_psi}
    \begin{split}
        |\psi_k -\hat{\psi}_k| &= \abs{ \arctan\left(-\I\frac{ b_{k,0}}{a_{k,0}}\right) - \arctan\left(-\I\frac{\hat{b}_{k,0}}{\hat{a}_{k,0}} \right)}\\
        &\leq \abs{ \frac{ b_{k,0}}{a_{k,0}} - \frac{\hat{b}_{k,0}}{\hat{a}_{k,0}} }= \abs{\frac{b_{k,0}-\hat{b}_{k,0}}{\hat{a}_{k,0} } - \frac{b_{k,0} (a_{k,0}-\hat{a}_{k,0})}{a_{k,0}\hat{a}_{k,0}}}\\
        &\leq \abs{\frac{b_{k,0}-\hat{b}_{k,0}}{\hat{a}_{k,0} }} +\abs{ \frac{b_{k,0} (a_{k,0}-\hat{a}_{k,0})}{a_{k,0}\hat{a}_{k,0}}}\\
        &\leq \frac{6 d \epsilon' }{\eta- 6 d \epsilon'} + \frac{6 d \epsilon' }{\eta\left(\eta - 6 d \epsilon'\right)}=\frac{6 d \epsilon' }{\eta\left(\eta -\frac{\epsilon \eta^2}{2}\right)}(1+\eta)\\
        & \leq  \frac{6 d \epsilon' }{\eta\left(\eta - \frac{\eta}{4}\right)}\left(1 +\frac{1}{2}\right)= 6 d\epsilon' \frac{2}{\eta^2} = \epsilon
    \end{split}
\end{equation}
by the choice of $\epsilon'= \frac{\epsilon\eta^2}{12 d}$.

\subsection{Computational cost}\label{computational cost proof}

\REV{The FFT for a sequence of length $N$ requires $\mathcal{O}(N \log(N))$ operations. From \cref{thm:sufficient_condition_N}, we set $N=\mathcal{O}(\frac{d}{\eta} \log(\frac{d}{\eta\epsilon}))$, and the number of operations  of FFT is $\mathcal{O}\left(\frac{d}{\eta} \log(\frac{d}{\eta\epsilon})\left(\log(\frac{d}{\eta})+\operatorname{loglog}(\frac{d}{\eta\epsilon})\right)\right)=\mathcal{O}\left(\frac{d}{\eta} \log^2(\frac{d}{\eta\epsilon})\right)$.} The cost for solving a $d\times d$ linear system is $\mathcal{O}(d^3)$. The overall  computational cost of using \cref{alg:RHW_alg} to determine a single phase factor is, then \REV{$\mathcal{O}(d^3 + \frac{d}{\eta} \log^2(\frac{d}{\eta\epsilon}))$}. To compute all the phase factors associated to a polynomial $f$ of degree $2d$, the FFT requires still the same number of operations, but we now solve $(d+1)$ many linear systems to compute phase factors, which then takes $\mathcal{O}(d^4)$ many operations. Thus we are left with a total operational cost of \REV{$\mathcal{O}(d^4 + \frac{d}{\eta} \log^2(\frac{d}{\eta\epsilon}))$} to compute all the phase factors of $f$. This completes the analysis of the computational cost as outlined in \cref{thm:main_alg}.


\subsection{Bit requirement}\label{bit requirement proof}
In this subsection, we discuss the bit requirement of \cref{alg:RHW_alg}. Since our algorithm mainly relies on FFT and on solving well conditioned linear systems, the analysis of the bit-requirement is standard, and this is included mainly for completeness.

We assume radix-2, precision-$p$ arithmetic, with rounding unit $u = 2^{-p}$. According to \cite{Ramos}, we have the following estimate.

\begin{OldTheorem}[\cite{Ramos}]\label{thm:Ramos}
    With the assumptions above, let $m\in \N$ and let $\bvec{Z}$ be a vector of exact FFT coefficients in $C^{2^m}$. Also let $\hat{\bvec{Z}}$ be the vector of numerically computed FFT coefficients with rounding unit $u=2^{-p}$. Then,
    \begin{equation}
        \frac{\norm{\hat{\bvec{Z}}-\bvec{Z}}}{\norm{\bvec{Z}}}\leq ((4+\sqrt{2})m-4)u + \mathcal{O}(u^2).
    \end{equation}
\end{OldTheorem}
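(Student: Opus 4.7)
The plan is to track floating-point errors through the $m$ sequential stages of a standard radix-2 Cooley--Tukey FFT of length $N = 2^m$, exploiting the fact that each stage is, up to normalization, a unitary operation, so that errors introduced in one stage propagate to later stages without amplification in the $\ell^2$ norm.

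First, I would decompose the FFT into $m$ stages, each consisting of $N/2$ independent butterfly operations, where a single butterfly takes two complex inputs $a, b$ and a twiddle factor $\omega \in \mathbb{T}$ and returns $(a + \omega b, \, a - \omega b)$. Under the standard floating-point model in which each real arithmetic operation has relative error bounded by $u = 2^{-p}$, a complex multiplication (four real multiplications plus two real additions) introduces a relative error at most $c_{\mathrm{mul}} \, u + \mathcal{O}(u^2)$ for an explicit constant $c_{\mathrm{mul}}$, while a complex addition introduces relative error at most $\sqrt{2}\, u$. Combining these, and using that stored twiddle factors themselves have relative accuracy $u$, one bounds the worst-case $\ell^2$ error contribution of a single butterfly by $(c_{\mathrm{mul}} + \sqrt{2})\, u + \mathcal{O}(u^2)$ times the norm of its inputs.

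Next I would use unitarity: if $\hat{\bvec{X}}_j$ denotes the numerically computed output of stage $j$ and $\bvec{X}_j$ the exact output applied to $\hat{\bvec{X}}_{j-1}$, then the telescoping inequality gives
\[
\|\hat{\bvec{Z}} - \bvec{Z}\| \leq \sum_{j=1}^{m} \|\bvec{E}_j\|,
\]
where $\bvec{E}_j$ is the new error injected at stage $j$. Since each stage (as an exact map) preserves $\ell^2$ norms up to a fixed butterfly normalization, $\|\bvec{E}_j\| \leq \bigl((4+\sqrt{2})\, u + \mathcal{O}(u^2)\bigr)\, \|\hat{\bvec{X}}_{j-1}\|$, and a one-line induction gives $\|\hat{\bvec{X}}_{j-1}\| \leq (1 + \mathcal{O}(u))^{j-1} \|\bvec{Z}\|$. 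Summing the resulting (almost) geometric series yields the claimed relative bound $\bigl((4+\sqrt{2})m - 4\bigr)\, u + \mathcal{O}(u^2)$, where the additive constant $-4$ absorbs the savings from the first two stages in which the twiddle factors lie in $\{\pm 1, \pm \I\}$ and no genuine complex multiplication is performed.

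The main obstacle is the precise constant $(4+\sqrt{2})m - 4$: getting it sharp requires a careful accounting of exactly how many real multiplications and additions execute per butterfly, how storage errors in the twiddle table combine multiplicatively with arithmetic errors, and how the boundary stages reduce the operation count. In particular, obtaining the prefactor $4 + \sqrt{2}$ (rather than, say, $6$) requires using the Wilkinson-style backward error model for complex multiplication and grouping operations so that the three sources of error --- twiddle rounding, complex multiplication, and complex addition --- do not pessimistically compound. I would follow the bookkeeping in \cite{Ramos} for these details.
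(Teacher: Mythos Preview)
The paper does not prove this statement at all: \cref{thm:Ramos} is quoted verbatim from \cite{Ramos} as an external input to the bit-requirement analysis in \cref{bit requirement proof}, with no argument given. So there is no ``paper's own proof'' to compare your proposal against. Your sketch is a reasonable outline of the classical stage-by-stage error analysis for the radix-2 FFT, and your acknowledgment that the sharp constant $(4+\sqrt{2})m-4$ requires the detailed bookkeeping of \cite{Ramos} is appropriate; but as far as this paper is concerned, the result is simply cited, not reproved.
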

In the Weiss algorithm, we need to apply two FFTs to obtain the Fourier coefficients of $R$ and $b/a$. Due to the Parseval theorem for Discrete Fourier Transform, the 2-norm of the vector of exact FFT coefficients for $R$ is bounded by
\begin{equation*}
    \|R\|_\infty=-\log\left(\sqrt{1-\norm{b}_{\infty}^2}\right)\leq \frac{1}{2}\log\left(\frac{1}{1-(1-\eta)^2}\right)\leq \frac{1}{2} \log\left(\frac{1}{\eta}\right),
\end{equation*}
while that of $b/a$ is bounded by 
\begin{equation*}
   \norm{b/a}_{\infty}\leq \frac{1-\eta}{\sqrt{1-(1-\eta)^2}} \leq \frac{1}{\sqrt{\eta}}.
\end{equation*}
\cref{thm:Ramos} indicates the total absolute error of the Weiss algorithm is 
\begin{equation*}
    \Or \left(\log(N)\left(\norm{\frac{b}{a}}_{\infty}-\log\left(\sqrt{1-\norm{b}_{\infty}^2}\right) \right)u \right) = \Or\left(\frac{\log(N)}{\sqrt{\eta}}u \right).
\end{equation*}

Now we examine how the rounding error accumulates when solving the linear system  \cref{eqn:alter_linear_system}
\begin{equation}
    \left(I -(\Xi_k)^2\right) (\va_k, \vb_k) = (\ve_0, \Xi_k\ve_0)\, ,
\end{equation}
for $\bvec{a}_k$ and $\bvec{b}_k$. Recall, that $I -(\Xi_k)^2$ is a real, symmetric, positive-definite matrix, and by \cref{upper bound on Xi norm}, we have the matrix estimate 
$$\norm{I -(\Xi_k)^2} \leq 1+ \norm{\frac{b}{a}}_{\infty}^2 \, .$$ 
We use it with the help of Cholesky factorization and a normwise error analysis of Wilkinson \cite{WilkinsonICM}.
\begin{OldTheorem}[\cite{Higham2002}, \cite{WilkinsonICM}]\label{thm:Wilkinson}
    Let $A \in \RR^{m\times m}$ be a symmetric, positive-definite matrix. Suppose that the Cholesky factorization produces a solution $\hat{x}$ to $Ax=b$. If $\max\left((3m+1)u , \frac{m(m+1)u}{1-(m+1)u}\right)<\frac{1}{2}$, then
    \begin{equation*}
        (A+E)\hat{x} = b, \quad \norm{E}\leq 4m(3m+1)u\norm{A}.
    \end{equation*}
\end{OldTheorem}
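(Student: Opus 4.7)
The plan is to follow the classical Wilkinson--Higham backward error analysis for symmetric positive-definite linear systems solved via Cholesky factorization, organized into three stages: analyzing the Cholesky factorization itself, analyzing the two triangular solves, and combining the results into a single backward error statement for $Ax=b$.

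\textbf{Stage 1: Factorization.} Using the standard floating-point model $\mathrm{fl}(a \circ b) = (a \circ b)(1 + \delta)$ with $|\delta| \leq u$, I would prove by induction on the column index of $L$ that the computed lower triangular factor $\hat{L}$ satisfies the componentwise identity $\hat{L} \hat{L}^T = A + E_0$ with $|E_0| \leq \gamma_{m+1} |\hat{L}| |\hat{L}^T|$, where $\gamma_k := ku / (1 - ku)$ is Wilkinson's standard error constant. At each diagonal step one tracks the errors incurred when forming the inner products $\sum_k \hat L_{ik} \hat L_{jk}$ and when computing the square root defining $\hat L_{jj}$, using that a sum of $m$ floating point products incurs accumulated relative error at most $\gamma_m$. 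The hypothesis $(m+1)u < 1/2$ keeps the denominator $1-(m+1)u$ bounded below by $1/2$, so that $\gamma_{m+1} \leq 2(m+1)u$.

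\textbf{Stages 2 and 3: Triangular solves and normwise bound.} A parallel but simpler analysis of forward and back substitution shows that the computed solution $\hat{x}$ of $\hat{L}\hat{L}^T x = b$ satisfies $(\hat{L} + \Delta_1)(\hat{L}^T + \Delta_2)\hat{x} = b$ with $|\Delta_j| \leq \gamma_m |\hat{L}|$ componentwise. Combining this with Stage~1 and discarding $\mathcal{O}(u^2)$ terms yields a total backward error $(A + E)\hat{x} = b$ with $|E| \leq c_m u \, |\hat{L}||\hat{L}^T|$ componentwise, for an explicit $c_m$ bounded above by $3m+1$ once the three backward perturbations are added and the constants $\gamma_m, \gamma_{m+1}$ are estimated. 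To pass to a normwise bound I would use the elementary inequality $\norm{|\hat{L}||\hat{L}^T|} \leq m \norm{\hat{L}\hat{L}^T}$ (valid for an $m \times m$ nonnegative matrix bounded entrywise by the diagonal of $\hat L \hat L^T$ times Schwarz), together with $\norm{\hat{L}\hat{L}^T} \leq (1 + \mathcal{O}(u))\norm{A}$ from Stage~1; the hypothesis on $u$ then absorbs the $\mathcal{O}(u)$ correction into the constant $4$ in the final bound $\norm{E} \leq 4m(3m+1)u\norm{A}$.

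The main obstacle is not any conceptual difficulty but the careful bookkeeping of constants: one must track how the elementary bounds $\gamma_m$ and $\gamma_{m+1}$ and the absolute-value triangular factors combine across the three stages, and how second-order terms in $u$ are absorbed using the precise form of the hypothesis $\max\bigl((3m+1)u,\, m(m+1)u/(1-(m+1)u)\bigr) < 1/2$, in order to recover exactly the stated constant $4m(3m+1)$. Since this is a long-standing classical result, as indicated by the citations, in practice I would either reproduce the detailed constant-tracking from Higham's monograph or simply cite it directly.
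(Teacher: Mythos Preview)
The paper does not prove this statement at all: it is stated as a cited result from Higham's monograph and Wilkinson's ICM paper, and is invoked as a black box in the bit-requirement analysis of \cref{bit requirement proof}. Your final sentence anticipates exactly this—you would ``simply cite it directly''—and that is precisely what the paper does.

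Your sketch is a faithful outline of the classical Wilkinson--Higham argument, so there is no error to flag. The only remark is that since the paper treats this as an external result, a proof attempt is not really called for here; the appropriate response to this statement is a citation, not a derivation.
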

So bounding $\norm{E}$ for the linear system \cref{eqn:alter_linear_system}, the backward error of solving this single linear system \cref{eqn:alter_linear_system} is  
\begin{equation*}
     \Or\left(d^2(1+\norm{b/a}^2_{\infty})u\right) = \Or\left(\frac{d^2u}{\eta}\right) \, .
\end{equation*}

\REV{The following identity holds if both $A$ and $A+E$ are nonsingular:
\begin{equation*}
    (A+E)^{-1} - A^{-1} = (A+E)^{-1} E A^{-1}.
\end{equation*}
We assume that $u$ is sufficiently small and make use of the bound $\norm{(1 - (\Xi_k)^2)^{-1}} < 1$ for all $k$.}
 Then the forward error of solving this single linear system \cref{eqn:alter_linear_system} is still
\begin{equation*}
     \Or\left(\frac{d^2u}{\eta}\right) \, .
\end{equation*}

In the last step of the algorithm, we compute 
$$\psi_k = \arctan\left(-\I \frac{b_{k,0}}{a_{k,0}}\right) \, ,$$
where $a_{k,0}$ and $b_{k,0}$ are the first entries of $\va_k$ and $\vb_k$. The previous discussion implies that the error in computing $a_{k,0}$ and $b_{k,0}$ is $\Or\left(\frac{d^2u}{\eta}\right)$. 
Similar to the analysis in \cref{eqn:error_in_computing_psi}, the error of computing $\psi_k$ is
$$\Or \left(\frac{d^2u}{\eta^3}\right) \, .$$ 
Combining all the errors above, we know that the number of the bits required is  
$$\mathcal{O}\left( \operatorname{loglog}(N)+\log(d)+\log \left(\frac{1}{\eta}\right)+ \log \left(\frac{1}{\epsilon} \right) \right) \, .$$
Plugging in the estimate for $N$ from \cref{thm:sufficient_condition_N}, we find that the bit required by \cref{alg:RHW_alg} is \REV{$\mathcal{O}(\log(\frac{d}{\epsilon \eta}))$}, which completes the proof of the bit requirement part of \cref{thm:main_alg}.

\bibliographystyle{abbrvurl}
\bibliography{references}

\end{document}